\newtheorem{definition}{Definition}[section]
\newtheorem{lemma}[definition]{Lemma}
\newtheorem{proposition}[definition]{Proposition}
\newtheorem{theorem}[definition]{Theorem}
\newtheorem{remark}[definition]{Remark}
\newtheorem{corollary}[definition]{Corollary}
\numberwithin{equation}{section}
\def\eps{\varepsilon}
\def\tr{\mathrm{tr}}
\title{On the correlation energy of interacting fermionic systems in the mean-field regime}
\author{Christian Hainzl, Marcello Porta, Felix Rexze \\
\\
Mathematics Department, Eberhard Karls Universit\"at T\"ubingen, \\ Auf der Morgenstelle 10, 72076 T\"ubingen}
\begin{document}

\maketitle

\begin{abstract}
We consider a system of $N\gg 1$ interacting fermionic particles in three dimensions, confined in a periodic box of volume $1$, in the mean-field scaling. We assume that the interaction potential is bounded and small enough. We prove upper and lower bounds for the correlation energy, which are optimal in their $N$-dependence. Moreover, we compute the correlation energy at leading order in the interaction potential, recovering the prediction of second order perturbation theory. The proof is based on the combination of methods recently introduced for the study of fermionic many-body quantum dynamics together with a rigorous version of second-order perturbation theory, developed in the context of non-relativistic QED.
\end{abstract}

\section{Introduction}

Physical systems of relevance in nature are usually formed by a very large number of particles, which makes them very difficult to study starting from the fundamental equations of many-body quantum mechanics. For this reason, physicists introduced effective theories, that are expected to capture the relevant features of complex quantum systems, in suitable scaling limits. Here we shall focus on the ground-state properties of interacting fermionic systems in the mean-field regime. In particular, the present paper is devoted to the computation of the {\it correlation energy}, which will be defined below. 

Before doing this, as a motivating example, let us consider a model for a metal where the nuclei are, for simplicity, represented by a uniform charge background and the electrons interact via the Coulomb potential. Such model is called {\it jellium}. Mathematically, we consider a system of $N$ fermions confined in $\Lambda \subset \mathbb{R}^{3}$, with the following Hamiltonian:
\begin{equation}
H_{N,\Lambda}^{\text{J}} = \sum_{j=1}^{N} (-\Delta_{j} - V_{\text{back}}(x_{j})) + \sum_{1\leq i<j\leq N} \frac{1}{|x_{i} - x_{j}|}\;,
\end{equation}
with $V_{\text{back}}(x) = \frac{N}{|\Lambda|} \int_{\Lambda} dy\, |x - y|^{-1}$ playing the role of background potential. The Hamiltonian $H_{N,\Lambda}^{\text{J}}$ has to be understood as acting on the antisymmetric subspace of the square integrable functions,
\begin{equation}
L^{2}_{\text{a}}(\Lambda^{N}) = \{ \psi_{N} \in L^{2}(\Lambda^{N}) \mid \psi(x_{1}, \ldots, x_{N}) = \text{sgn}(\pi) \psi(x_{\pi(1)}, \ldots, x_{\pi(N)}) \}\;,
\end{equation}
with $\pi$ a permutation of $\{1, \ldots, N\}$. Let us define the ground state energy density of the system as:
\begin{equation}\label{eq:erhoJ}
e^{\text{J}}(\rho) = \lim_{\substack{|\Lambda|\to \infty \\ N/|\Lambda| \to \rho}} \frac{1}{|\Lambda|}\Big( \inf_{\psi_{N} \in L^{2}_{\text{a}}(\Lambda^{N})} \frac{\langle \psi_{N}, H_{N,\Lambda}^{\text{J}} \psi_{N}\rangle}{\langle \psi_{N}, \psi_{N} \rangle} + \frac{\rho^{2}}{2} \int_{\Lambda \times \Lambda} dxdy\, \frac{1}{|x-y|}\Big)\;.
\end{equation}
The last term in Eq. (\ref{eq:erhoJ}) takes into account the Coulomb energy of the static background.  We are interested in the asymptotic behavior of $e^{\text{J}}(\rho)$ in the high density regime, $\rho \gg 1$. A simple upper bound for $e^{\text{J}}(\rho)$ is obtained by taking the noninteracting ground state, the filled Fermi sea, as a trial state. One gets, see {\it e.g.} \cite{GS}:
\begin{equation}
e^{\text{J}}(\rho) \leq \frac{3}{5}c_{\text{TF}} \rho^{5/3} - c_{\text{D}} \rho^{4/3}\;,
\end{equation}
for $c_{\text{TF}} = (6\pi ^{2})^{2/3}$ and $c_{\text{D}} = (2\pi)^{-3} c_{\text{TF}}^{2}$. We introduce the correlation energy $\mathcal{C}(\rho)$ as the difference of the full many-body ground state energy minus the energy the noninteracting Fermi sea:
\begin{equation}
e^{\text{J}}(\rho) = \frac{3}{5}c_{\text{TF}} \rho^{5/3} - c_{\text{D}} \rho^{4/3} + \mathcal{C}(\rho)\;.
\end{equation}
In the fifties, in a series of important papers, see {\it e.g.} \cite{BP}, Bohm and Pines predicted that, for $\rho \gg 1$  (see also \cite{Macke} for earlier results):
\begin{equation}\label{eq:BP}
\mathcal{C}(\rho) = -c_{\text{corr}} \rho \log \rho + O(\rho)\;, 
\end{equation}
for some explicit constant $c_{\text{corr}} > 0$. The method of Bohm and Pines is based on the inclusion of correlations on the free Fermi sea via suitable canonical transformations, that allow to take into account energy excitations around the Fermi level in a nonperturbative way. Similar methods have then been used to compute the $O(\rho)$ correction to $\mathcal{C}(\rho)$ by Gell-Mann and Brueckner \cite{GB}, and Sawada \cite{Sa}. See \cite{FW, GV} for pedagogical reviews.

Proving the rigorous validity of the prediction (\ref{eq:BP}) is a long-standing open problem in mathematical many-body theory. So far, the only available rigorous result on $\mathcal{C}(\rho)$ is the bound obtained by Graf and Solovej in \cite{GS}. They proved that for all $0\leq \delta < 1/15$ there exists a constant $K_{\delta} > 0$ such that:
\begin{equation}\label{eq:GS0}
0 \geq \mathcal{C}(\rho) \geq -K_{\delta}\rho^{4/3 - \delta}\;.
\end{equation}
The upper bound trivially follows from the variational principle. The proof of the lower bound is based on correlation inequalities for the many-body interaction, initially introduced in the work of Bach \cite{Ba, Ba2} for atoms and molecules, combined with semiclassical ideas. Unfortunately, the bounds (\ref{eq:GS0}) are far from the expected result (\ref{eq:BP}). Improving on (\ref{eq:GS0}) is a difficult problem, which requires a better understanding of the correlations among the fermionic particles.

In this paper, we shall consider the problem of computing the correlation energy in a simplified regime, the mean-field scaling. We consider a system of $N$ interacting fermionic particles in three dimensions, confined in a box $\Lambda = [0;1]^{3}$ with periodic boundary conditions. The Hamiltonian of the system is, for $\eps = O(N^{-\frac{1}{3}})$:
\begin{equation}\label{eq:HNdef}
H_{N} = \sum_{j=1}^{N} -\eps^{2}\Delta_{j} + \frac{1}{N}\sum_{1\leq i<j\leq N} v(x_{i} - x_{j})\;,
\end{equation}
acting on $L^{2}_{\text{a}}(\Lambda^{N})$. The interaction potential $v$ is supposed to be bounded and to vary on a scale $O(1)$; therefore, thanks to the choice of the coupling constant, the typical interaction energy is $O(N)$. Notice the presence of the semiclassical parameter $\eps^{2}$ in front of the Laplacian. It is well-known that in this high density regime the unscaled kinetic energy of the system grows as $N^{5/3}$, by the antisymmetry of the wave function. A lower bound on the kinetic energy compatible with this behavior is provided by the Lieb-Thirring kinetic energy inequality, see Chapter 4 of \cite{LS} for a review, stating that, for a suitable constant $C_{\text{LT}}>0$ and for any antisymmetric wave function $\psi_{N} \in L_{\text{a}}^{2}(\mathbb{R}^{3N})$:
\begin{equation}
\langle \psi_{N}, \sum_{j=1}^{N} -\Delta_{j} \psi_{N} \rangle \geq C_{\text{LT}} \int dx\, \rho_{\psi}(x)^{\frac{5}{3}}\;,
\end{equation}
with $\rho_{\psi}(x) = N \int dx_{2}\ldots d x_{N}\, |\psi_{N}(x, x_{2}, \ldots, x_{N})|^{2}$ the density associated to the wave function $\psi_{N}$. Thus, in the present scaling the kinetic energy and the interaction energy corresponding to (\ref{eq:HNdef}) are of the same order. This choice of parameters defines the fermionic mean-field scaling.

We shall focus on the ground state energy of the system:
\begin{equation}
E_{N} = \inf_{\psi_{N}\in L^{2}_{\text{a}}(\Lambda^{N})} \frac{\langle \psi_{N}, H_{N} \psi_{N}\rangle}{\langle \psi_{N}, \psi_{N} \rangle}\;.
\end{equation}
We are interested in comparing this quantity with the energy of the noninteracting ground state, given by:
\begin{equation}\label{eq:FFS}
\psi_{\text{Slater}}(x_{1}, \ldots, x_{N}) = \frac{1}{\sqrt{N!}}\sum_{\pi} \text{sgn}(\pi) f_{k_{1}}(x_{\pi(1)})\cdots f_{k_{N}}(x_{\pi(N)})\;,
\end{equation}
where $f_{k_{i}}(x) = e^{i k_{i} \cdot x }$ is the plane wave with momentum $k_{i} \in 2\pi \mathbb{Z}^{3}$, compatible with periodic boundary conditions on $\Lambda$, and the sum is over all permutations of $\{1,\ldots, N\}$. The momenta $k_{1},\ldots, k_{N}$ and the number of particles $N$ are chosen so to fill the Fermi ball $\mathcal{B}_{\mu} = \{ k\in 2\pi \mathbb{Z}^{3} \mid \eps^{2} |k|^{2} \leq \mu \}$, up to a fixed chemical potential $\mu = O(1)$.

Let $\omega_{N}$ be the reduced one-particle density matrix of such state:
\begin{eqnarray}
\omega_{N} &:=& N\tr_{2,\ldots, N} |\psi_{\text{Slater}} \rangle \langle \psi_{\text{Slater}}| \nonumber\\
&=& \sum_{i=1}^{N} |f_{k_{i}} \rangle \langle f_{k_{i}}|\;.
\end{eqnarray}
Slater determinants are an example of {\it quasi-free state}, a class of quantum states which are completely determined by their one-particle density matrix. In particular, the energy of $\psi_{\text{Slater}}$ is:
\begin{equation}
\langle \psi_{\text{Slater}}, H_{N} \psi_{\text{Slater}} \rangle = \mathcal{E}_{N}^{\text{HF}}(\omega_{N})
\end{equation}
where $\mathcal{E}_{N}^{\text{HF}}(\cdot)$ is the {\it Hartree-Fock energy functional}:
\begin{eqnarray}\label{eq:HFdef}
 \mathcal{E}^{\text{HF}}_{N}(P_{N}) &=& \tr (- \eps^{2}\Delta P_{N})\\&& + \frac{1}{2N}\int dxdy\, v(x-y) (P_{N}(x;x) P_{N}(y;y) - |P_{N}(x;y)|^{2})\;,\nonumber
\end{eqnarray}
with $0\leq P_{N} \leq 1$, $\tr\, P_{N} = N$. The interaction energy is taken into account by the last two terms in Eq. (\ref{eq:HFdef}), called respectively the direct and the exchange term. The analogue of the Bohm-Pines prediction in the mean-field scaling is:
\begin{eqnarray}\label{eq:BPmf}
E_{N} &=& \mathcal{E}^{\text{HF}}_{N}(\omega_{N}) + \mathcal{C}_{N}\;,\nonumber\\
\mathcal{C}_{N} &=& \eps c_{\text{corr}}^{\text{mf}}(v) + o(\eps)\;,
\end{eqnarray}
for some negative constant $c_{\text{corr}}^{\text{mf}}(v)$, determined by the interaction potential $v$. Notice that $\mathcal{C}_{N}$ is subleading with respect to the smallest term in $\mathcal{E}^{\text{HF}}_{N}(\omega_{N})$, the exchange term, which is $O(1)$ in the mean-field regime, for bounded potentials. 

Let us briefly comment on the notion of correlation energy we used so far. Another definition used in the literature involves the comparison of the many-body ground state energy with the smallest energy reached by an $N$-particle quasi-free state, namely the Hartree-Fock ground state energy:
\begin{equation}
E_{N}^{\text{HF}} = \inf_{\substack{ 0\leq P_{N} \leq 1 \\ \tr\, P_{N} = N}} \mathcal{E}_{N}^{\text{HF}}(P_{N})
\end{equation}
(for positive interaction potentials, the minimum is reached by an orthogonal projection, a fact known as Lieb's variational principle, \cite{L}). In general, the interacting Hartree-Fock ground state is not given by $\omega_{N}$. Nevertheless, one expects the true Hartree-Fock ground state energy to be energetically close to $\mathcal{E}_{N}^{\text{HF}}(\omega_{N})$; this has been recently proven in \cite{GHL} for jellium, with a method that can be extended to the simpler mean-field regime. The proof of \cite{GHL} shows that $E_{N}^{\text{HF}}$ and $\mathcal{E}_{N}^{\text{HF}}(\omega_{N})$ are exponentially close in the number of particles $N$, as $N\gg 1$. This rules out possible ambiguities in the definition of the correlation energy, up to exponentially small error terms.

In this paper we give the first rigorous result on Eq. (\ref{eq:BPmf}). Informally, we prove that, for $v$ regular and $\|v\|_{\infty}$ small enough:
\begin{equation}\label{eq:corrconj2}
E_{N} = \mathcal{E}^{\text{HF}}_{N}(\omega_{N}) + \mathcal{C}^{(2)}_{N} + O(\eps \|v\|^{3}_{\infty})\;,
\end{equation}
where $\mathcal{C}^{(2)}_{N} = O(\eps)$ is an {\it explicit} quantity, see Eqs. (\ref{eq:E2}), (\ref{eq:cormain}), depending quadratically on the interaction potential $v$. The term $\mathcal{C}^{(2)}_{N}$ is precisely what one gets in a formal perturbative expansion of the correlation energy, assuming the interaction potential $v$ to be small. In the case of jellium, this was discussed in the works \cite{BP, Macke}, by using the fine structure constant as a small parameter. In view of the previous comment on the exponential closeness of $\mathcal{E}_{N}^{\text{HF}}(\omega_{N})$ and $E_{N}^{\text{HF}}$, our result provides an expression for the first correction to the ground state energy that is not captured by Hartree-Fock theory.

The proof of Eq. (\ref{eq:corrconj2}) is based on the combination of methods recently introduced for the study of the dynamics of mean-field fermions, see \cite{BPS, BPS2, BJPSS, PRSS}, together with a rigorous formulation of second order perturbation theory, originally developed in the context of non-relativistic QED in \cite{youngchris} and refined in \cite{CH, youngchris2}, see also \cite{HVV, HSei}. 

Let us briefly outline the strategy of the proof. Eq. (\ref{eq:corrconj2}) is proven by matching upper and lower bounds for the ground state energy. The upper bound follows from a suitable choice of the trial state, inspired by second order perturbation theory around the free Fermi gas. Concerning the lower bound, the proof goes as follows. The starting point is Proposition \ref{eq:low1}, that allows to rewrite the many-body energy of any fermionic state as $\mathcal{E}_{N}^{\text{HF}}(\omega_{N})$ plus terms that describe particle excitations around the Fermi level $\mu$. The proof of this proposition is based on the use of fermionic Bogoliubov transformations, adapting ideas developed in the context of many-body quantum dynamics, \cite{BPS}. As later proven in Proposition \ref{prp:boundsQ}, the energetic contribution of the fluctuations around the Fermi level is expressed by a sum of terms that are either positive, or that can be controlled via the total kinetic energy of the excitations. The use of the kinetic energy rather than the number operator to quantify the energetic closeness of the Hartree-Fock ground state to the many-body ground state energy is crucial: due to the absence of a uniform spectral gap, states with comparable energies might differ by the presence of a large number of low energy excitations around the Fermi level.

An immediate application of Proposition \ref{prp:boundsQ}, Corollary \ref{prp:lower}, yields a lower bound for the correlation energy which displays an optimal dependence on the number of particles. In particular, this bound provides a derivation of Hartree-Fock theory for the ground state of fermions interacting via bounded potentials, with an optimal rate of convergence. Despite the optimal $N$-dependence, the bound of Corollary \ref{prp:lower} is off by a multiplicative constant. The situation is then improved in Proposition \ref{cor:Nimpro}, where a better lower bound for the correlation energy is obtained, which matches the prediction of second order perturbation theory. The proof of this proposition is crucially based on Lemma \ref{prp:lower2}, inspired by \cite{youngchris}, that allows to establish the validity of second order perturbation theory as a lower bound for the many-body ground state energy.

To conclude, let us mention that an improved upper bound on the correlation energy, reproducing the analogue of the prediction of \cite{BP, GB} for the mean-field regime at all orders in $v$, has been recently established in \cite{BNPSS}.

The paper is organized as follows. In Section \ref{sec:2nd} we define the model in second quantization. In Section \ref{sec:HF} we state our main result, and in Section \ref{sec:proof} we shall give the proof. Finally, in Appendix \ref{app:A} we collect some explicit computations, and in Appendix \ref{app:interaction} we discuss the details of the proof of Proposition \ref{prp:low1}.

\section{The model}\label{sec:2nd}

We consider a system of $N$ interacting spinless fermions in a cubic box $\Lambda = [0;1]^{3}$ with periodic boundary conditions. The single-particle Hilbert space is $L^{2}(\Lambda)$; an orthonormal basis for such space, compatible with periodic boundary conditions, is given by the plane waves $f_{k}$:
\begin{equation}
f_{k}(x) = e^{ik\cdot x}\;,\qquad k\in 2\pi \mathbb{Z}^{3}\;.
\end{equation} 
We shall describe the system in second quantization, directly in momentum space. The fermionic Fock space is:
\begin{eqnarray}
\mathcal{F} &=& \mathbb{C}\oplus \bigoplus_{n\geq 1} L^{2}_{\text{a}}((2\pi \mathbb{Z}^{3})^{n})\nonumber\\
&=:& \bigoplus_{n\geq 0} \mathcal{F}^{(n)}\;,
\end{eqnarray}
with $\mathcal{F}^{(n)}$ the $n$-particle sector of the Fock space, $\mathcal{F}^{(n)} = L^{2}_{\text{a}}((2\pi \mathbb{Z}^{3})^{n})$ for $n\geq 1$ and $\mathcal{F}^{(0)} = \mathbb{C}$. Thus, any element $\psi$ of the fermionic Fock space $\mathcal{F}$ has the form $\psi = (\psi^{(0)}, \psi^{(1)}, \ldots, \psi^{(n)}, \ldots)$, with $\psi^{(n)}(k_{1}, \ldots, k_{n})\in \mathcal{F}^{(n)}$.  Let us introduce the momentum space fermionic creation and annihilation operators as:
\begin{eqnarray}
(a^{*}_{k} \psi)^{(n)}(k_{1}, \ldots, k_{n}) &=& \frac{1}{\sqrt{n}}\sum_{j=1}^{n} (-1)^{j} \delta_{k, k_{i}} \psi^{(n-1)}(k_{1}, \ldots, k_{i-1}, k_{i+1}, \ldots, k_{n})\;,\nonumber\\
(a_{k} \psi)^{(n)}(k_{1}, \ldots, k_{n}) &=& (\sqrt{n+1}) \psi^{(n+1)}(k, k_{1}, \ldots, k_{n})\;,
\end{eqnarray}
with $\delta_{k, k'}$ the Kronecker delta, and with the understanding that $a_{k} \Omega = 0$, for $\Omega = (1, 0, \ldots, 0, \ldots)$ the vacuum vector. The operators $a^{*}_{k}$ and $a_{k}$ respectively create and annihilate a particle with momentum $k$. They satisfy the canonical anticommutation relations:
\begin{equation}\label{eq:CAR}
\{ a_{k}, a_{k'} \} = \{ a^{*}_{k}, a^{*}_{k'} \} = 0\;,\qquad \{ a^{*}_{k}, a_{k'} \} = \delta_{k,k'}\;.
\end{equation}
The second of Eq. (\ref{eq:CAR}) immediately implies that $\| a_{k} \|\leq 1$. The full Fock space can be generated by repeated applications of creation operators on the vacuum vector $\Omega$. 

Operators acting on the Fock space can be represented in terms of $a_{k}$ and $a^{*}_{k}$. For instance, the {\it number operator} $\mathcal{N}$, acting as $(\mathcal{N} \psi)^{(n)} = n \psi^{(n)}$, is represented as 
\begin{equation}
\mathcal{N} = \sum_{k \in 2\pi \mathbb{Z}^{3}} a^{*}_{k} a_{k}\;.
\end{equation}

%
%

We are interested in describing a system of nonrelativistic fermions interacting via a bounded two-body potential $v$, compatible with the periodicity of the box $\Lambda$, with Fourier transform $v: 2\pi \mathbb{Z}^{3}\to \mathbb{R}$, $\hat{v}(p) = \hat{v}(-p)$, of positive type: $\hat{v}(p)\geq 0$ for all $p\in 2\pi \mathbb{Z}^{3}$. The second-quantized Hamiltonian of the model is, in momentum space:
\begin{equation}\label{eq:Hfock}
\mathcal{H}_{N} = \sum_{k} \eps^{2} |k|^{2} a^{*}_{k} a_{k} + \frac{1}{2N} \sum_{k, k', p} \hat{v}(p) a^{*}_{k+p} a^{*}_{k'-p} a_{k'} a_{k}\;.
\end{equation}
In the following, all sums will be understood as restricted to $2\pi \mathbb{Z}^{3}$. 

The $N$-particle ground state energy of the system is:
\begin{equation}\label{eq:GS}
E_{N} = \inf_{\psi \in \mathcal{F}^{(N)}} \frac{\langle \psi, \mathcal{H}_{N} \psi\rangle}{\langle \psi, \psi \rangle}\;.
\end{equation}
We shall compare the ground state energy (\ref{eq:GS}) with the energy of the $N$-particle noninteracting ground state (Fermi sea). In order to fix $N$, we shall choose a chemical potential $\mu$. Let $\mathcal{B}_{\mu}$ be the Fermi ball:
\begin{equation}
\mathcal{B}_{\mu} = \{ k\in (2\pi)\mathbb{Z}^{3} \mid \eps^{2} |k|^{2} \leq \mu \}\;,
\end{equation}
where $\mu = O(1)$. We shall set $N \equiv N(\varepsilon, \mu) = |\mathcal{B}_{\mu}|$. It is well known that, for $\varepsilon$ small:
\begin{equation}\label{eq:Neps}
N(\varepsilon, \mu) = \frac{4\pi}{3} \Big(\frac{\mu}{4\pi^2}\Big)^{3/2} \eps^{-3} + o(\varepsilon^{-3})\;.
\end{equation}
For definiteness, in the following we shall choose $\mu$ such that:
\begin{equation}
\frac{4\pi}{3} \Big(\frac{\mu}{4\pi^2}\Big)^{3/2} = 1\;.
\end{equation}
Thus, the only independent parameter is $\varepsilon$. Let $k_{1}, \ldots, k_{N}$ be the momenta contained in $\mathcal{B}_{\mu}$. The noninteracting ground state of the system is obtained by considering the Slater determinant associated to the plane waves $f_{k_{1}}, \ldots, f_{k_{N}}$. The reduced one-particle density matrix of such state is:
\begin{equation}\label{eq:omegaN}
\omega_{N} = \sum_{i = 1}^{N} |f_{k_{i}} \rangle \langle f_{k_{i}}|\;.
\end{equation}
By translation invariance, $\omega_{N}(x;y) \equiv \omega_{N}(x-y)$. In Fourier space:
\begin{eqnarray}\label{eq:omegaHF}
\hat \omega_{N}(k) &=& \int dx\, e^{ik\cdot x}\omega_{N}(x)\nonumber\\
&=& \chi(k\in \mathcal{B}_{\mu})\;.
\end{eqnarray}
The Hartree-Fock energy of (\ref{eq:omegaHF}) is:
\begin{equation}\label{eq:planeHF}
\mathcal{E}^{\text{HF}}_{N}(\omega_{N}) = \sum_{k \in \mathcal{B}_{\mu}} \eps^{2} |k|^{2} + \frac{N \hat{v}(0) }{2} - \frac{1}{2N} \sum_{k, k' \in \mathcal{B}_{\mu}} \hat{v}(k-k') \;.
\end{equation}
Eq. (\ref{eq:planeHF}) provides a very simple, explicit upper bound for the many-body ground state energy. The question we are interested in here is to quantify the error introduced by approximating the ground-state energy by (\ref{eq:planeHF}). To do this, we shall use {\it Bogoliubov transformations}, see {\it e.g.} \cite{So, BLS} for reviews. Given $\omega_{N}$ as in Eq. (\ref{eq:omegaN}), we denote by $R_{\omega_{N}}: \mathcal{F}\to \mathcal{F}$ the implementor of the corresponding Bogoliubov transformation. The implementor $R_{\omega_{N}}$ is a unitary operator on $\mathcal{F}$, that enjoys the following properties:
\begin{eqnarray}\label{eq:propbog}
R_{\omega_{N}} \Omega &=& a^{*}_{k_{1}} a^{*}_{k_{2}}\cdots a^{*}_{k_{N}}\Omega \nonumber\\
R^{*}_{\omega_{N}} a_{k} R_{\omega_{N}} &=& a_{k} \chi(k \in \mathcal{B}^{c}_{\mu}) + a^{*}_{k} \chi(k\in \mathcal{B}_{\mu})\nonumber\\
&\equiv& b_{k} + c^{*}_{k}\;,
\end{eqnarray}
where:
\begin{equation}\label{eq:bkck}
b_{k} := a_{k} \chi(k \in \mathcal{B}^{c}_{\mu})\;,\qquad c_{k} := a_{k} \chi(k\in \mathcal{B}_{\mu})
\end{equation}
and $\mathcal{B}_{\mu}^{c}$ is the complement of the Fermi ball, $\mathcal{B}_{\mu}^{c} = (2\pi) \mathbb{Z}^{3} \setminus \mathcal{B}_{\mu}$. The first property allows to represent the Slater determinant associated to the plane waves $f_{k_{1}}, \ldots, f_{k_{N}}$ in terms of the unitary action of $R_{\omega_{N}}$ on the Fock space vacuum $\Omega$. The second property shows that, after conjugation, the annihilation operator $a_{k}$ becomes a creation operator $a^{*}_{k}$ if $k$ is inside the Fermi ball, or it stays an annihilation operator if $k$ is outside. The state $R_{\omega_{N}} \Omega$ plays the role of ``new vacuum'' for the theory, and it is annihilated by both $b_{k}$ and $c^{*}_{k}$.
These properties will be very useful in comparing the many-body ground state energy with the Hartree-Fock energy.

\section{Main result}\label{sec:HF}
The next theorem provides rigorous upper and lower bounds for the correlation energy of interacting fermions in the mean-field regime. 
%
%
\begin{theorem}[Main result.]\label{thm:1} Let $\hat v: 2\pi\mathbb{Z}^{3}\to \mathbb{R}$, $\hat{v}(p)\geq 0$, $\hat{v}(p) = \hat{v}(-p)$. Let
\begin{equation}
\mathcal{C}_{N} = E_{N} - \mathcal{E}^{\text{HF}}_{N}(\omega_{N})
\end{equation}
be the correlation energy, with $\mathcal{E}^{\text{HF}}_{N}(\omega_{N})$ given by Eq. (\ref{eq:planeHF}) and $N$ chosen as in Eq. (\ref{eq:Neps}). Then, there exist constants $v_{0}, K$, independent of $\eps$, such that for $\|\hat v\|_{1}\leq v_{0}$, $\| |p|^{K}\hat v(p) \|_{1} \leq C_{K}$ the following is true. For any $\delta > 0$ there exists $C_{\delta}>0$ such that:
\begin{equation}\label{eq:corrres}
- C_{\delta} N^{-\frac{9}{16} + \delta} + (1 + C\|\hat v\|_{1}) \mathcal{C}^{(2)}_{N}  \leq \mathcal{C}_{N} \leq  (1 - C \|\hat{v} \|_{1}) \mathcal{C}^{(2)}_{N} + C_{\delta} N^{-\frac{9}{16} + \delta}
\end{equation}
with $\mathcal{E}^{\text{HF}}_{N}(\omega_{N})$ given by Eq. (\ref{eq:planeHF}) and, setting $e(k) := |\eps^{2} |k|^{2} - \mu|$:
\begin{eqnarray}\label{eq:E2}
\mathcal{C}^{(2)}_{N} &=& -\frac{1}{2 N^{2}} \sum_{p}\sum_{\substack{k, k':\, k, k'\in \mathcal{B}_{\mu} \\ k + p, k'-p \in \mathcal{B}_{\mu}^{c}}} \frac{\hat{v}(p)^{2}}{e(k+p) + e(k) + e(k'-p) + e(k')}\nonumber\\
&& + \frac{1}{2N^{2}} \sum_{p}\sum_{\substack{k, k':\, k, k'\in \mathcal{B}_{\mu} \\ k + p, k'-p \in \mathcal{B}_{\mu}^{c}}} \frac{\hat{v}(p) \hat{v}(p - k'+k)}{e(k+p) + e(k) + e(k'-p) + e(k')}\;.
\end{eqnarray}
\end{theorem}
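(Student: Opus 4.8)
The plan is to derive (\ref{eq:corrres}) by matching upper and lower bounds, in both cases working on Fock space after conjugating $\mathcal{H}_N$ with the particle--hole Bogoliubov transformation $R_{\omega_N}$ of (\ref{eq:propbog}). Setting $\mathcal{L}_N := R^{*}_{\omega_N}\mathcal{H}_N R_{\omega_N} - \mathcal{E}^{\text{HF}}_N(\omega_N)$ and expanding every creation/annihilation operator according to (\ref{eq:propbog}), one obtains (this is Proposition \ref{prp:low1}) a decomposition $\mathcal{L}_N = \mathbb{H}_0 + \mathcal{Q}_N + \mathcal{R}_N$, where $\mathbb{H}_0 = \sum_{k\in\mathcal{B}^{c}_{\mu}} e(k)\, b^{*}_k b_k + \sum_{k\in\mathcal{B}_{\mu}} e(k)\, c^{*}_k c_k \geq 0$ is the kinetic energy of the excitations (the chemical potential drops out because the number of particle excitations equals the number of hole excitations on the relevant subspace), $\mathcal{Q}_N$ is the ``pair'' term of the form $\tfrac{1}{2N}\sum_{p}\hat v(p)\big(\sum_{k} b^{*}_{k+p} c^{*}_{k}\big)\big(\sum_{k'} c^{*}_{k'-p} b^{*}_{k'}\big) + \text{h.c.}$ with the restrictions $k,k'\in\mathcal{B}_{\mu}$, $k+p,k'-p\in\mathcal{B}^{c}_{\mu}$, and $\mathcal{R}_N$ collects the cubic terms together with the genuinely quartic particle--particle interaction $\tfrac{1}{2N}\sum_{p}\hat v(p)\sum_{k,k'} b^{*}_{k+p} b^{*}_{k'-p} b_{k'} b_{k} \geq 0$, the positivity holding precisely because $\hat v$ is of positive type. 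Since $R^{*}_{\omega_N}$ maps $\mathcal{F}^{(N)}$ onto this subspace, which contains the new vacuum $\Omega$, one has $\mathcal{C}_N = \inf\langle\xi,\mathcal{L}_N\xi\rangle$ over it. A direct computation (Appendix \ref{app:A}) then shows $\langle\Omega,\mathcal{Q}_N\,\mathbb{H}_0^{-1}\,\mathcal{Q}_N\Omega\rangle = -\mathcal{C}^{(2)}_N$, the two contractions of the four $b^{*},c^{*}$ operators giving the direct and exchange terms of (\ref{eq:E2}); the four-fold momentum sum converges and is $O(\eps)$ by counting lattice points in thin shells around the Fermi sphere, where the hypothesis $\||p|^{K}\hat v\|_1<\infty$ controls the high-momentum tail.

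The next step is a set of a priori estimates (Proposition \ref{prp:boundsQ}): every summand of $\mathcal{Q}_N$ and of $\mathcal{R}_N$ is bounded, up to factors of $\|\hat v\|_1$ and of $\mathcal{N}/N$, by $\mathbb{H}_0$ plus the positive quartic term, the cubic terms being controlled by $(\text{quartic})^{1/2}(\mathbb{H}_0 + \text{l.o.t.})^{1/2}$ after inserting resolutions of the identity and using the CAR, as in \cite{BPS, BPS2}. Feeding these into $\langle\xi,\mathcal{L}_N\xi\rangle\geq (1 - C\|\hat v\|_1)\langle\xi,\mathbb{H}_0\xi\rangle - C\|\hat v\|_1$ yields Corollary \ref{prp:lower}: $\mathcal{C}_N = O(\eps)$ with the optimal $N$-dependence, and, crucially, any approximate minimizer $\xi$ satisfies $\langle\xi,\mathbb{H}_0\xi\rangle = O(\eps)$ together with a smallness bound on $\langle\xi,\mathcal{N}\xi\rangle/N$. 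These estimates, though off by constants, make the perturbative analysis below self-consistent. For the sharp upper bound I would take the trial state $R_{\omega_N}e^{T}\Omega$ with $T = -T^{*}$ the quadratic operator $T = \tfrac{1}{2N}\sum_{p}\sum_{k,k'}\eta_p(k,k')\, b^{*}_{k+p} c^{*}_{k} b^{*}_{k'-p} c^{*}_{k'} - \text{h.c.}$, $\eta_p(k,k') = -\hat v(p)\big/\big(e(k+p)+e(k)+e(k'-p)+e(k')\big)$, i.e.\ the first-order Rayleigh--Schr\"odinger correction of $\Omega$ generated by $\mathcal{Q}_N$. Expanding $\langle e^{T}\Omega,\mathcal{L}_N e^{T}\Omega\rangle/\|e^{T}\Omega\|^{2}$ by the Baker--Campbell--Hausdorff formula, the quadratic part contributes $\langle T\Omega,\mathbb{H}_0 T\Omega\rangle + 2\Re\langle\Omega,\mathcal{Q}_N T\Omega\rangle = \mathcal{C}^{(2)}_N$ by the choice of $\eta$, while all higher commutators, the contributions of $\mathcal{R}_N$ on $e^{T}\Omega$, and the normalization $\|e^{T}\Omega\|^{2} = 1 + O(\|\hat v\|_1^{2})$ produce only $(1 - C\|\hat v\|_1)\mathcal{C}^{(2)}_N$-type corrections plus an absolute error bounded by $C_\delta N^{-\frac{9}{16}+\delta}$, the exponent emerging after optimizing powers of $\mathcal{N}/N$ against $\mathbb{H}_0$ and the shell counting.

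The lower bound is the core of the argument and proceeds by a rigorous second-order perturbation theory in the spirit of \cite{youngchris} (Lemma \ref{prp:lower2}). Writing $\xi = \alpha\Omega + \xi^{\perp}$ with $\xi^{\perp}\perp\Omega$ and using $\mathbb{H}_0\Omega = 0$ and the fact that $\mathcal{Q}_N\Omega$ is purely a $4$-excitation vector, one gets $\langle\xi,(\mathbb{H}_0+\mathcal{Q}_N)\xi\rangle = \langle\xi^{\perp},\mathbb{H}_0\xi^{\perp}\rangle + 2\Re\big(\alpha\langle\xi^{\perp},\mathcal{Q}_N\Omega\rangle\big) + \langle\xi^{\perp},\mathcal{Q}_N\xi^{\perp}\rangle$. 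Absorbing $\langle\xi^{\perp},\mathcal{Q}_N\xi^{\perp}\rangle$ into $\langle\xi^{\perp},\mathbb{H}_0\xi^{\perp}\rangle$ via Proposition \ref{prp:boundsQ} (legitimate since $\|\hat v\|_1$ is small), and treating the cross term by Cauchy--Schwarz after inserting $\mathbb{H}_0^{1/2}\mathbb{H}_0^{-1/2}$ --- allowed because $\mathbb{H}_0^{-1/2}\mathcal{Q}_N\Omega\in\cF$ even though $\mathbb{H}_0$ has no spectral gap --- yields
\begin{equation*}
\langle\xi,(\mathbb{H}_0+\mathcal{Q}_N)\xi\rangle \;\geq\; c\,\|\hat v\|_1\,\langle\xi^{\perp},\mathbb{H}_0\xi^{\perp}\rangle \;-\; |\alpha|^{2}\,(1+C\|\hat v\|_1)\,\langle\Omega,\mathcal{Q}_N\,\mathbb{H}_0^{-1}\,\mathcal{Q}_N\Omega\rangle \;.
\end{equation*}
Since $\langle\Omega,\mathcal{Q}_N\,\mathbb{H}_0^{-1}\,\mathcal{Q}_N\Omega\rangle = -\mathcal{C}^{(2)}_N > 0$ and $|\alpha|\leq 1$, the right side is $\geq (1+C\|\hat v\|_1)\mathcal{C}^{(2)}_N + c\|\hat v\|_1\langle\xi^{\perp},\mathbb{H}_0\xi^{\perp}\rangle$; the leftover positive reservoir $c\|\hat v\|_1\mathbb{H}_0$ together with the positive quartic part of $\mathcal{R}_N$ then absorbs the cubic terms of $\mathcal{R}_N$, and the a priori bounds of Corollary \ref{prp:lower} convert the remaining $\mathcal{N}/N$ and $\mathbb{H}_0/\eps$ factors into an error $C_\delta N^{-\frac{9}{16}+\delta}$ (Proposition \ref{cor:Nimpro}). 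I expect the main obstacle to be exactly this last point: showing that the non-quadratic terms of the excitation Hamiltonian, and the discrepancy between the exact kinetic form and the resolvent bound used to complete the square, are negligible at the scale $\mathcal{C}^{(2)}_N = O(\eps)$ --- this requires both the positivity $\hat v\geq 0$ (to keep the quartic term a usable reservoir) and the smallness of the number of excitations, together with sharp control of the unbounded operator $\mathbb{H}_0^{-1}$ near the Fermi surface.
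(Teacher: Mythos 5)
Your overall architecture coincides with the paper's: Bogoliubov conjugation and the decomposition of Proposition \ref{prp:low1}, a priori control of the cubic and quartic terms by $\mathbb{H}_{0}$, a first-order-perturbation-theory trial state for the upper bound (your $e^{T}\Omega$ is the exponentiated version of the paper's $(1-\mathbb{H}_{0}^{-1}\mathbb{F}^{*})\Omega$, and your computation of the leading term is correct), and the identification $\langle\Omega,\mathbb{F}\mathbb{H}_{0}^{-1}\mathbb{F}^{*}\Omega\rangle=-\mathcal{C}^{(2)}_{N}$. The $N^{-9/16+\delta}$ exponent indeed comes from lattice-point counting (Heath-Brown's bound on the Gauss sphere problem), and the bound $I_{\mu}(p)\le C$ of Lemma \ref{lem:H0} --- which you gesture at but do not flag as a substantial independent argument --- underlies essentially every estimate.

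The gap is in the lower bound, which is the core of the theorem. After splitting $\xi=\alpha\Omega+\xi^{\perp}$ and handling the cross term by Cauchy--Schwarz (this part is fine and is equivalent to the paper's completion of the square), you are left with $\langle\xi^{\perp},(\mathbb{F}+\mathbb{F}^{*})\xi^{\perp}\rangle$, which you propose to ``absorb into $\langle\xi^{\perp},\mathbb{H}_{0}\xi^{\perp}\rangle$ via Proposition \ref{prp:boundsQ}.'' This does not work: the bound of Proposition \ref{prp:boundsQ} on $\mathbb{Q}_{2,b},\mathbb{Q}_{2,c}$ necessarily carries the additive error $C\varepsilon\||p|\hat v\|_{1}$ (it comes from the anticommutator $\sum_{k}\langle\varphi,c_{k}c^{*}_{k}\varphi\rangle\sim N\varepsilon|p|$ in Eq. (\ref{eq:Q2b2}) and is present even on the vacuum), which is \emph{first} order in $\hat v$ and of the same order in $\varepsilon$ as $\mathcal{C}^{(2)}_{N}$; using it would only reproduce the non-sharp Corollary \ref{prp:lower} and destroy the constant $(1+C\|\hat v\|_{1})$. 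The alternative of completing the square again on $\xi^{\perp}$ produces a term $\|\xi^{\perp}\|^{2}\,|\mathcal{C}^{(2)}_{N}|/s$ with $s\sim\|\hat v\|_{1}$, and $\|\xi^{\perp}\|$ is \emph{not} small a priori: the available controls $\langle\xi,\mathbb{H}_{0}\xi\rangle\le C\varepsilon$ and $\langle\xi,\mathcal{N}\xi\rangle\le C_{\delta}N^{7/16+\delta}$ do not force $\xi$ to be close to $\Omega$, precisely because of the absence of a spectral gap. The paper avoids the issue by completing the square on the \emph{full} state $\varphi$ (so the vacuum-contracted part of $\mathbb{F}\mathbb{H}_{0}^{-1}\mathbb{F}^{*}$ carries the weight $\|\varphi\|^{2}=1$) and then doing the real work: putting $\mathbb{F}\mathbb{H}_{0}^{-1}\mathbb{F}^{*}$ into normal order and estimating every partially contracted term (the $a=0,1,2$ case analysis of Lemma \ref{prp:lower2}) by $C\|\hat v\|_{1}^{2}\langle\varphi,\mathbb{H}_{0}\varphi\rangle+C_{\delta}N^{-9/16+\delta}$. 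That analysis --- which is also where the $N^{-9/16+\delta}$ error in the lower bound actually originates --- is absent from your proposal, and without it your displayed inequality is unjustified.
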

\begin{remark} 
\begin{itemize}
\item[(i)] The expression in Eq. (\ref{eq:E2}) is what one naturally finds in formal perturbative computations of the correlation energy, \cite{Macke, BP, GB}. It turns out that $\mathcal{C}^{(2)}_{N}$ is negative and that $\mathcal{C}^{(2)}_{N} = O(\eps)$. To begin, notice that the last term in Eq. (\ref{eq:E2}) is subleading with respect to the first. This follows from:
\begin{eqnarray}
&&\frac{1}{2N^{2}} \sum_{p}\sum_{\substack{k, k':\, k, k'\in \mathcal{B}_{\mu} \\ k + p, k'-p \in \mathcal{B}_{\mu}^{c}}} \frac{\hat{v}(p) \hat{v}(p - k'+k)}{e(k+p) + e(k) + e(k'-p) + e(k')} \nonumber\\&&\quad \leq \frac{\| \hat v \|_{1}}{2N^{2}} \sum_{p}\sum_{\substack{k:\, k\in \mathcal{B}_{\mu} \\ k + p \in \mathcal{B}_{\mu}^{c}}} \frac{\hat{v}(p)}{e(k+p) + e(k)} \\
&&\quad \leq \frac{\| \hat v \|_{1}}{2N} \sum_{p} \hat v(p) I_{\mu}(p)\;,\nonumber
\end{eqnarray}
where
\begin{equation}
I_{\mu}(p) = \frac{1}{N}\sum_{\substack{k:\, k\in \mathcal{B}_{\mu} \\ k + p \in \mathcal{B}_{\mu}^{c}}} \frac{1}{e(k+p) + e(k)}\;.
\end{equation}
As proven in Lemma \ref{lem:H0}, $I_{\mu}(p) \leq C$ for $|p| \leq N^{\zeta}$ and $\zeta > 0$. Therefore, using also the decay of $\hat v(p)$ for large $|p|$:
\begin{equation}
\frac{1}{2N^{2}} \sum_{p}\sum_{\substack{k, k':\, k, k'\in \mathcal{B}_{\mu} \\ k + p, k'-p \in \mathcal{B}_{\mu}^{c}}} \frac{\hat{v}(p) \hat{v}(p - k'+k)}{e(k+p) + e(k) + e(k'-p) + e(k')} = o(\eps)\;.
\end{equation}
Moreover, an explicit computation, reported in Appendix \ref{app:corr}, shows that:
\begin{equation}\label{eq:cormain}
\lim_{\eps\to 0^{+}} \varepsilon^{-1}\mathcal{C}_{N}^{(2)} = - (2\pi)^{3} \frac{\pi}{2} (1 - \log 2)\sum_{p} |p|\hat{v}(p)^{2}\;.
\end{equation}
Eq. (\ref{eq:cormain}) is consistent with the computations reported in the physics literature, see {\it e.g.} \cite[Eq. (27), (28)]{Macke}. Notice that if one formally replaces the potential $\hat v(p)$ by the Fourier transform of the Coulomb potential, $1/|p|^{2}$, one encounters a well-known logarithmic divergence.

\item[(ii)] The assumption on the decay of $\hat v(p)$ is technical, and far from being optimal. Inspection of the proof shows that $K$ can be chosen equal to $100$.

\item[(iii)] As recently proven in \cite{GHL} for jellium, with a method that can be adapted to the mean-field regime, the difference between the energy of the Free fermi sea $\mathcal{E}^{\text{HF}}_{N}(\omega_{N})$ and the Hartree-Fock ground state energy is exponentially small in $N$. Thus, our theorem allows to compute the next order correction to the many-body ground state energy that is not captured by the Hartree-Fock approximation (notice that this correction is also not captured by Thomas-Fermi nor Dirac-Schwinger theories, since they are all included in the Hartree-Fock approximation).

\item[(iv)] A simple consequence of the method of the proof is the following bound for the many-body reduced density matrix. Let $\psi_{N}$ be a normalized $N$-particle fermionic wave function with energy $E(\psi_{N})$, such that
\begin{equation}
|\mathcal{E}^{\text{HF}}_{N}(\omega_{N}) - E(\psi_{N})| \leq C\varepsilon\;.
\end{equation}
In particular, $\psi_{N}$ could be the fermionic ground state of the model. Let $\gamma_{\psi_{N}} = N \tr_{2,\ldots, N} |\psi_{N}\rangle \langle \psi_{N}|$ be its reduced one particle density matrix. Then, see Remark \ref{rem:delta}, for any $\delta>0$:
\begin{equation}\label{eq:dgo}
\tr\, \gamma_{\psi_{N}} (1 - \omega_{N}) \leq C_{\delta} N^{\frac{7}{16} + \delta}\;.
\end{equation}
The quantity $\delta(\gamma_{\psi_{N}}, \omega_{N}) := \tr\, \gamma_{\psi_{N}} (1 - \omega_{N})$ contains interesting information on the many-body ground state. In particular, it can be used to bound the Hilbert-Schmidt norm $\| \gamma_{\psi_{N}} - \omega_{N} \|_{\text{HS}}$. One has:
\begin{eqnarray}
\| \gamma_{\psi_{N}} - \omega_{N} \|_{\text{HS}}^{2} &=& \tr |\gamma_{\psi_{N}} - \omega_{N}|^{2} \nonumber\\
&\leq& \tr (\gamma_{\psi_{N}} + \omega_{N} - 2\gamma_{\psi_{N}} \omega_{N}) \nonumber\\
&=& 2 \delta(\gamma_{\psi_{N}}, \omega_{N}) \leq 2 C_{\delta} N^{\frac{7}{16} + \delta}\;.
\end{eqnarray}
Here we used that $0\leq \gamma_{\psi_{N}}\leq 1$, $0\leq \omega_{N}\leq 1$, that $\tr\,\gamma_{\psi_{N}} = \tr\,\omega_{N} = N$, and the cyclicity of the trace. The proof leading to Eq. (\ref{eq:dgo}) can be immediately adapted to compare the Hartree-Fock minimizer $\omega_{N}^{\text{gs}}$ with $\omega_{N}$, using that their energy difference is much smaller than $C\varepsilon$, \cite{GHL}. Therefore, one has:
\begin{equation}
\| \gamma_{\psi_{N}} - \omega_{N}^{\text{gs}} \|^{2}_{\text{HS}} \leq 8 C_{\delta} N^{\frac{7}{16} + \delta}\;.
\end{equation} 
\item[(v)] The proof of Theorem \ref{thm:1} makes important use of translation invariance, and of the boundedness of the potential. Nevertheless, we do not expect these assumptions to be essential, and we expect our result to hold in a more general setting. Recently, partial progress in the use of similar methods to derive effective evolution equations for many-body quantum systems interacting via Coulomb potentials has been obtained in \cite{PRSS}. The extension of Theorem \ref{thm:1} to Coulomb potentials would be particularly relevant, to rigorously justify the predictions of \cite{Macke, BP, GB, Sa}.
\item[(vi)] Finally, let us mention that an upper bound on the correlation energy capturing all powers of $v$, for potentials satisfying the same assumptions of Theorem \ref{thm:1}, has been recently proven in \cite{BNPSS}. The result of \cite{BNPSS} yields the analogue of the prediction of \cite{BP, GB} in the mean-field regime, as an upper bound for the correlation energy.
\end{itemize}
\end{remark}

%
%
%
%

The rest of the paper is devoted to the proof of Theorem \ref{thm:1}. In Section \ref{sec:1stlow} we shall prove some important bounds for the many-body interaction, that in particular allow to bound from below the many-body ground state energy in terms of the Hartree-Fock energy, up to an error $O(\varepsilon)$. This result allows to prove the validity of Hartree-Fock theory for the ground state energy, with an optimal rate of convergence. Then, in Section \ref{sec:2ndlow} we shall improve the strategy, and we shall compute the correlation energy at second order in the interaction potential, thus obtaining the lower bound stated in the theorem. Finally, in Section \ref{sec:up} we shall prove a matching upper bound, by choosing a suitable trial state motivated by perturbation theory.

\section{Proof of Theorem \ref{thm:1}}\label{sec:proof}
\subsection{Comparing Hartree-Fock and ground state energies}\label{sec:1stlow}

The starting point is the following proposition, which allows to compare the many-body ground state energy with the Hartree-Fock energy of $\omega_{N}$.
\begin{proposition}\label{prp:low1} 
%
%
The following identity holds true, for all $\varphi \in \mathcal{F}$ such that $\|\varphi\| = 1$ and $\langle \varphi, \mathcal{N} \varphi \rangle = N$:
\begin{equation}\label{eq:low1}
\langle \varphi, \mathcal{H}_{N} \varphi\rangle = \mathcal{E}^{\text{HF}}_{N}(\omega_{N}) + \langle R^*_{\omega_{N}}\varphi, \mathbb{H}_{0} R^*_{\omega_{N}}\varphi \rangle + \langle R^*_{\omega_{N}}\varphi, \mathbb{X} R^*_{\omega_{N}}\varphi \rangle + \langle R_{\omega_{N}}^{*} \varphi, \mathbb{Q} R_{\omega_{N}}^{*} \varphi \rangle
\end{equation}
where, with $b_{k}$, $c_{k}$ defined as in Eq. (\ref{eq:bkck}):
\begin{eqnarray}\label{eq:Q}
&&\mathbb{H}_{0} = \sum_{k} e(k) (b^*_{k} b_{k} + c^{*}_{k} c_{k})\;,\qquad e(k) = |\eps^{2} |k|^{2} - \mu|\;,\nonumber\\
&&\mathbb{X} = - \frac{1}{N} \sum_{p} \hat{v}(p) \sum_{k:\, k \in \mathcal{B}_{\mu}} (b^{*}_{k+p} b_{k+p} - c^{*}_{k+p} c_{k+p})\;,\nonumber\\
&&\mathbb{Q} = \frac{1}{2N} \sum_{p, k, k'} \hat{v}(p) \big\{ b_{k+p}^{*} b^{*}_{k'-p} b_{k'} b_{k} + c^{*}_{k+p} c^{*}_{k' - p} c_{k'} c_{k} \nonumber\\
&& \qquad\qquad\qquad\qquad +2 b^{*}_{k+p} c^{*}_{k} c_{k' - p} b_{k'} - 2 b^{*}_{k+p} c^{*}_{k'} c_{k'-p} b_{k} \\
&&+ \big( b^{*}_{k+p} b^{*}_{k'-p} c^{*}_{k'} c^{*}_{k} - 2b^{*}_{k+p} b^{*}_{k'-p} c^{*}_{k} b_{k'} + 2 b^{*}_{k+p} c^{*}_{k'+p} c^{*}_{k} c_{k'} + h.c. \big)\big\}\;.\nonumber
\end{eqnarray}
\end{proposition}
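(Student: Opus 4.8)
The plan is to prove the identity \eqref{eq:low1} by conjugating the many-body Hamiltonian $\mathcal{H}_N$ with the Bogoliubov implementor $R_{\omega_N}$ and carefully collecting the resulting terms. Since $R_{\omega_N}$ is unitary, we have $\langle \varphi, \mathcal{H}_N \varphi\rangle = \langle R^*_{\omega_N}\varphi, \, R^*_{\omega_N}\mathcal{H}_N R_{\omega_N}\, R^*_{\omega_N}\varphi\rangle$, so the whole content of the statement is the operator identity
\[
R^*_{\omega_N}\mathcal{H}_N R_{\omega_N} = \mathcal{E}^{\text{HF}}_N(\omega_N) + \mathbb{H}_0 + \mathbb{X} + \mathbb{Q}
\]
when restricted to the sector $\langle\varphi,\mathcal{N}\varphi\rangle = N$ (the constant $\mathcal{E}^{\text{HF}}_N(\omega_N)$ appearing as a multiple of the identity is where the particle-number constraint enters). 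First I would substitute the second relation in \eqref{eq:propbog}, $R^*_{\omega_N} a_k R_{\omega_N} = b_k + c^*_k$ (and its adjoint $R^*_{\omega_N} a^*_k R_{\omega_N} = b^*_k + c_k$), into each term of \eqref{eq:Hfock}.

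For the kinetic term $\sum_k \eps^2|k|^2 a^*_k a_k$, substitution gives $\sum_k \eps^2|k|^2 (b^*_k + c_k)(b_k + c^*_k)$. Split the sum according to $k\in\mathcal{B}_\mu$ (where $b_k=0$, so only $c_k c^*_k = 1 - c^*_k c_k$ survives) and $k\in\mathcal{B}_\mu^c$ (where $c_k=0$, so only $b^*_k b_k$ survives). The $k\in\mathcal{B}_\mu$ part produces $\sum_{k\in\mathcal{B}_\mu}\eps^2|k|^2$ minus $\sum_{k\in\mathcal{B}_\mu}\eps^2|k|^2 c^*_k c_k$, while the $k\in\mathcal{B}_\mu^c$ part produces $\sum_{k\in\mathcal{B}_\mu^c}\eps^2|k|^2 b^*_k b_k$. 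Recombining, and using $e(k) = \eps^2|k|^2 - \mu$ for $k\in\mathcal{B}_\mu^c$ and $e(k) = \mu - \eps^2|k|^2$ for $k\in\mathcal{B}_\mu$ together with $\sum_k \mu(c^*_kc_k - b^*_kb_k)$ combining with the number constraint $\sum_k(b^*_kb_k - c^*_kc_k) = \mathcal{N} - N$ (valid on the relevant sector), gives the first sum in $\mathcal{E}^{\text{HF}}_N(\omega_N)$ plus $\sum_k e(k)(b^*_kb_k + c^*_kc_k) = \mathbb{H}_0$. This is the cleanest part.

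The main work — and the main obstacle — is the interaction term $\frac{1}{2N}\sum_{k,k',p}\hat v(p)\, a^*_{k+p}a^*_{k'-p}a_{k'}a_k$. Substituting $a^\#\to b^\# + c^\#$ in each of the four operators yields $2^4 = 16$ terms, each a product of four $b$'s and $c$'s with various characteristic functions attached. I would organize these by the number of $c$'s (equivalently, by how many momenta lie inside $\mathcal{B}_\mu$). The term with zero $c$'s gives $b^*_{k+p}b^*_{k'-p}b_{k'}b_k$; the term with four $c$'s gives $c^*_{k+p}c^*_{k'-p}c_{k'}c_k$; these are the first two terms of $\mathbb{Q}$ (after renaming summation variables, and using that $\hat v(-p)=\hat v(p)$). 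The genuinely delicate steps are: (a) normal-ordering the terms that contain $c c^*$ pairs — e.g. pieces like $b^* c^* c c^* = b^*c^*c\,c^* $ require the CAR $\{c,c^*\}$, and the resulting contractions produce both the exchange term $\mathbb{X}$ (from terms where a $b^*b$ or $c^*c$ survives with a delta-function-induced momentum restriction to $\mathcal{B}_\mu$) and the constant exchange contribution $-\frac{1}{2N}\sum_{k,k'\in\mathcal{B}_\mu}\hat v(k-k')$ appearing in $\mathcal{E}^{\text{HF}}_N(\omega_N)$; and (b) the direct term: one of the $cc^*cc^*$-type fully-contracted pieces yields $\frac{1}{2N}\sum_{k,k'\in\mathcal{B}_\mu}\hat v(0)$-type expressions that, after using $\sum_{k\in\mathcal{B}_\mu}1 = N$, reproduce $\frac{N\hat v(0)}{2}$. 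Bookkeeping the signs coming from the CAR reorderings, and matching the momentum labels in the cross terms ($b^*b^*c^*c^*$, $b^*b^*c^*b$, $b^*c^*c^*c$, $b^*c^*cb$ with the correct coefficients $\pm 1, \pm 2$) against the displayed form of $\mathbb{Q}$, is purely mechanical but error-prone; this is where I would be most careful, and I expect the paper relegates exactly this computation to Appendix \ref{app:interaction} as announced. I would close by checking that all the "leftover" c-number terms assemble precisely into $\mathcal{E}^{\text{HF}}_N(\omega_N)$ as given in \eqref{eq:planeHF}, and that no term of odd degree in the $b,c$'s survives (they cancel because $\hat v(p)=\hat v(-p)$ pairs them off).
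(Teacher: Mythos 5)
Your proposal follows the paper's proof essentially verbatim: conjugate $\mathcal{H}_N$ by $R_{\omega_N}$ via Eq.~(\ref{eq:propbog}), expand the sixteen terms of the interaction, normal order, and use the constraint $\langle\varphi,\mathcal{N}\varphi\rangle=N$ to replace $\sum_k(\eps^2|k|^2-\mu)(b_k^*b_k-c_k^*c_k)$ by $\mathbb{H}_0$ in expectation; the paper indeed relegates the normal-ordering bookkeeping to Appendix~\ref{app:interaction}. The one item your sketch passes over is that normal ordering the interaction also produces a quadratic \emph{direct} term $\mathbb{D}=\hat v(0)\sum_k(b_k^*b_k-c_k^*c_k)$, which is not part of $\mathbb{X}$ and must be absorbed by the same particle-number argument you already invoke for the $\mu$-shift in the kinetic energy.
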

Let us briefly comment on the terms appearing on the right-hand side of Eq. (\ref{eq:low1}). The term $\langle R^*_{\omega_{N}}\varphi, \mathbb{H}_{0} R^*_{\omega_{N}}\varphi \rangle$ takes into account the absolute value of the difference between the Fermi energy and the kinetic energy of the excitations around the Fermi level. Later, we shall crucially exploit the positivity of this term. The operator $\mathbb{X}$ is the second quantization of the exchange term, while the operator $\mathbb{Q}$ collects all quartic terms. This operator describes pairs of excitations around the Fermi level. 

As we shall see later, the terms contributing to the correlation energy are the kinetic energy, and the $b^{*} b^{*} c^{*} c^{*}$ term (together with its adjoint), describing the creation of two particles inside the Fermi ball and two particles outside the Fermi ball. Equivalently, after conjugation with the Bogoliubov transformation, this operator destroys two particles in the Fermi ball (one also says that it creates two {\it holes}), and creates two particles outside the Fermi ball. All the other terms will be either positive, or will be controlled by the positivity of the kinetic energy.

Let us now prove Proposition \ref{prp:low1}.
\begin{proof} Let $R_{\omega_{N}}$ the implementor of the Bogoliubov transformation associated to $\omega_{N}$, introduced in Eq. (\ref{eq:propbog}). We write, by unitarity of $R_{\omega_{N}}$:
\begin{equation}
\langle \varphi, \mathcal{H}_{N} \varphi\rangle = \langle R^{*}_{\omega_{N}}\varphi, R^{*}_{\omega_{N}}\mathcal{H}_{N} R_{\omega_{N}} R^{*}_{\omega_{N}} \varphi\rangle\;,
\end{equation}
and we compute the conjugated Hamiltonian $R^{*}_{\omega_{N}}\mathcal{H}_{N} R_{\omega_{N}}$ using the particle-hole transformation, Eq. (\ref{eq:propbog}). Let us start with the kinetic energy. We have:
\begin{eqnarray}\label{eq:kinetic}
\sum_{k} \eps^{2} |k|^{2} R^{*}_{\omega_{N}} a^{*}_{k} a_{k} R_{\omega_{N}} &=& \sum_{k} \eps^{2} |k|^{2} (b^{*}_{k} + c_{k})(b_{k} + c^{*}_{k})\nonumber\\
&=& \sum_{k} \eps^{2} |k|^{2} ( b^{*}_{k} b_{k} - c^{*}_{k} c_{k} ) + \sum_{k\in \mathcal{B}_{\mu}} \eps^{2} |k|^{2}\;.
\end{eqnarray}
The last step follows after normal ordering, and using that $b_{k} c_{k} = 0$. Consider now the transformed many-body interaction. Using again the particle-hole transformation rule (\ref{eq:propbog}) and putting the result back into normal order we get, after a straightforward algebra (which is an adaptation of Proposition 3.3 of \cite{BPS}, in the context of quantum dynamics; see also Proposition 3.1 of \cite{BJPSS}):
\begin{eqnarray}\label{eq:interaction}
&&\frac{1}{2N} \sum_{p, k, k'} \hat{v}(p) R^{*}_{\omega_{N}} a^{*}_{k+p} a^{*}_{k'-p} a_{k'} a_{k} R_{\omega_{N}} \\
\qquad &=& \frac{1}{2N} \sum_{p, k, k'} \hat{v}(p) ( b^{*}_{k+p} + c_{k+p} ) (b^{*}_{k'-p} + c_{k'-p}) (b_{k'} + c^{*}_{k'}) (b_{k} + c^{*}_{k})\nonumber\\
\qquad &\equiv& \mathbb{D} + \mathbb{X} + \mathbb{Q} + \frac{N}{2} \hat{v}(0) - \frac{1}{2N}\sum_{k, k'} \hat{v}(k - k')  \widehat \omega_{N}(k) \widehat \omega_{N}(k')\;,\nonumber
\end{eqnarray}
with $\mathbb{Q}$, $\mathbb{X}$ given by Eq. (\ref{eq:Q}), and 
\begin{equation}\label{eq:Direct}
\mathbb{D} = \hat{v}(0) \sum_{k} ( b^{*}_{k} b_{k} - c^{*}_{k} c_{k} )\;.
\end{equation}
For completeness, we report the details of Eq. (\ref{eq:interaction}) in Appendix \ref{app:interaction}. Notice that the last two terms in (\ref{eq:interaction}) plus the last one in (\ref{eq:kinetic}) reproduce the Hartree-Fock energy $\mathcal{E}^{\text{HF}}_{N}(\omega_{N})$. To conclude the proof of Eq. (\ref{eq:low1}), we claim that, for all $\varphi \in \mathcal{F}$ such that $\langle \varphi, \mathcal{N} \varphi \rangle = N$:
\begin{equation}\label{eq:H0def}
\langle R^{*}_{\omega_{N}} \varphi, \sum_{k} ( \eps^{2} |k|^{2} + \hat{v}(0)) ( b^{*}_{k} b_{k} - c^{*}_{k} c_{k} ) R^{*}_{\omega_{N}}\varphi \rangle = \langle R^{*}_{\omega_{N}} \varphi, \mathbb{H}_{0} R^{*}_{\omega_{N}} \varphi \rangle\;.
\end{equation}
To check this, simply notice that:
\begin{equation}
\langle R^{*}_{\omega_{N}} \varphi, \sum_{k} ( b^{*}_{k} b_{k} - c^{*}_{k} c_{k} ) R^{*}_{\omega_{N}}\varphi \rangle = \langle  \varphi, \sum_{k} ( b^{*}_{k} b_{k} + c^{*}_{k} c_{k} ) \varphi \rangle - N = 0\;.
\end{equation}
Therefore,
\begin{eqnarray}
&&\langle R^{*}_{\omega_{N}} \varphi, \sum_{k} ( \eps^{2} |k|^{2} + \hat{v}(0)) ( b^{*}_{k} b_{k} - c^{*}_{k} c_{k} ) R^{*}_{\omega_{N}}\varphi \rangle  \nonumber\\&& \qquad = \langle R^{*}_{\omega_{N}} \varphi, \sum_{k} ( \eps^{2} |k|^{2} - \mu) ( b^{*}_{k} b_{k} - c^{*}_{k} c_{k} ) R^{*}_{\omega_{N}}\varphi \rangle\nonumber\\
&&\qquad \equiv \langle R^{*}_{\omega_{N}} \varphi, \sum_{k} |\eps^{2} |k|^{2} - \mu| ( b^{*}_{k} b_{k} + c^{*}_{k} c_{k} ) R^{*}_{\omega_{N}}\varphi \rangle\;.
\end{eqnarray}
This proves Eq. (\ref{eq:H0def}), and concludes the proof of Proposition \ref{prp:low1}.
\end{proof}
Proposition \ref{prp:low1} is the starting point for the proof of the lower bound of the many-body quantum energy. In the next proposition, we give bounds for the $\mathbb{X}$, $\mathbb{Q}$ terms appearing in Eq. (\ref{eq:low1}), that will play a crucial role in the proof of our main result. To do this, it is convenient to rewrite:
\begin{eqnarray}\label{eq:Qsplit}
\mathbb{Q} &=& \mathbb{Q}_{1} + \mathbb{Q}_{2} + \mathbb{Q}_{3}\nonumber\\
\mathbb{Q}_{1} &:=& \frac{1}{2N} \sum_{p, k, k'} \hat{v}(p) \big( b_{k+p}^{*} b^{*}_{k'-p} b_{k'} b_{k} - 2 b^{*}_{k+p} c^{*}_{k'+p} c_{k'} b_{k} + c^{*}_{k+p} c^{*}_{k'-p} c_{k'} c_{k}\big)\nonumber\\
\mathbb{Q}_{2} &:=& \frac{1}{2N} \sum_{p, k, k'} \hat{v}(p) \big( 2 b^{*}_{k+p} c^{*}_{k} c_{k'-p} b_{k'} + b^{*}_{k+p} b^{*}_{k'-p} c^{*}_{k'} c^{*}_{k} + c_{k} c_{k'} b_{k'-p} b_{k+p}\big)\nonumber\\
&\equiv& \mathbb{Q}_{2,a} + \mathbb{Q}_{2,b} + \mathbb{Q}_{2,c}\;,\nonumber\\
\mathbb{Q}_{3} &:=& \frac{1}{N} \sum_{p, k, k'} \hat{v}(p) \big( - b^{*}_{k+p} b^{*}_{k'-p} c^{*}_{k} b_{k'} +  b^{*}_{k+p} c^{*}_{k'+p} c^{*}_{k} c_{k'}\big) + h.c.
\end{eqnarray}
The second quantization of the exchange term can be easily controlled with the number operator:
\begin{equation}\label{eq:Xbd}
|\langle \varphi, \mathbb{X} \varphi \rangle|\leq \frac{\|\hat{v}\|_{1}}{N} \langle\varphi, \mathcal{N} \varphi \rangle\;.
\end{equation}

\begin{proposition}\label{prp:boundsQ} Under the same assumption on $\hat v$ of Theorem \ref{thm:1} the following is true, for all $\varphi \in \mathcal{F}$ and $\alpha \geq 0$:
\begin{equation}\label{eq:NtoH}
\frac{1}{N}\langle \varphi, \mathcal{N} \varphi \rangle \leq \frac{1}{N} \Big(\sum_{k: e(k) \leq 1/N^{\alpha}} 1\Big)\|\varphi\|^{2} + \frac{1}{N^{1-\alpha}} \langle \varphi, \mathbb{H}_{0}\varphi \rangle\;.
\end{equation}
Also,
\begin{equation}\label{eq:Q1pos}
\langle \varphi, \mathbb{Q}_{1} \varphi \rangle = \frac{1}{2N} \sum_{p} \hat{v}(p) \langle\varphi, D_{p}^{*} D_{p}\varphi \rangle + \langle \varphi, \mathbb{Q}_{1,a} \varphi \rangle
\end{equation}
with $D_{p} = b^{*}_{k+p} b_{k} - c^{*}_{k-p}c_{k}$ and
\begin{equation}\label{eq:Q1abd}
|\langle \varphi, \mathbb{Q}_{1,a} \varphi \rangle| \leq \frac{\|\hat{v}\|_{1}}{N} \langle \varphi, \mathcal{N}\varphi \rangle\;.
\end{equation}
Moreover, for any $\xi>0$:
\begin{eqnarray}\label{eq:Qs}
|\langle \varphi, \mathbb{Q}_{2,a}\varphi \rangle| &\leq& C\| \hat{v}  \|_{1} \langle\varphi, \mathbb{H}_{0}\varphi \rangle + \frak{e}_{K}\\
|\langle \varphi, \mathbb{Q}_{2,\sharp} \varphi \rangle| &\leq& C \varepsilon \| |p| \hat{v} \|_{1} \|\varphi\|^{2} + C\|\hat{v}\|_{1} \langle \varphi, \mathbb{H}_{0}\varphi  \rangle + \frak{e}_{K}\nonumber\\
| \langle \varphi, \mathbb{Q}_{3} \varphi \rangle | &\leq& \frac{\xi}{N} \sum_{p} \hat{v}(p) \langle \varphi, D^{*}_{p} D_{p} \varphi \rangle + \frac{C\| \hat{v}\|_{1}}{\xi} \langle \varphi, \mathbb{H}_{0} \varphi \rangle + \frak{e}_{K}\;,\nonumber
\end{eqnarray}
where: $\sharp = b,c$; $0\leq \frak{e}_{K} \leq C_{K} N^{\frac{1}{3} - \zeta K} \|\varphi\|^{2}$ with $C_{K}, K$ as in Theorem \ref{thm:1} and $\zeta = 1/50$.
\end{proposition}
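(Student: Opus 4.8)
The plan is to establish each estimate by the standard Cauchy--Schwarz splitting of the quartic operators, always ending up with factors that are controlled either by the number operator $\mathcal{N}$ or, better, by the kinetic fluctuation operator $\mathbb{H}_0 = \sum_k e(k)(b_k^* b_k + c_k^* c_k)$. First I would prove \eqref{eq:NtoH}: split the sum defining $\mathcal{N} = \sum_k (b_k^* b_k + c_k^* c_k)$ into $e(k) \le N^{-\alpha}$ and $e(k) > N^{-\alpha}$; the first piece is bounded by $(\sum_{k: e(k)\le N^{-\alpha}} 1)\|\varphi\|^2$ using $\|b_k\|, \|c_k\| \le 1$, and on the second piece one writes $1 \le N^{\alpha} e(k)$ and recognizes $\sum_k e(k)(b_k^* b_k + c_k^* c_k) = \mathbb{H}_0$. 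The bound \eqref{eq:Xbd} on $\mathbb{X}$ is the same idea applied to a single sum. This reduction of $\mathcal{N}$ to $\mathbb{H}_0$ at the cost of the counting factor $\sum_{k: e(k)\le N^{-\alpha}}1$ — which is $O(N^{1-\alpha/3})$ or so by counting lattice points in a thin shell — is exactly the ``absence of spectral gap'' mechanism alluded to in the introduction, and it is why $\mathbb{H}_0$ rather than $\mathcal{N}$ must be the reference quantity.

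Next, for \eqref{eq:Q1pos} I would take the first and third terms of $\mathbb{Q}_1$ (the $b^*b^*bb$ and $c^*c^*cc$ pieces), reindex the $c$-term so that both combine into $\frac{1}{2N}\sum_p \hat v(p) D_p^* D_p$ with $D_p = \sum_k (b_{k+p}^* b_k - c_{k-p}^* c_k)$ — this is a completion of the square, using $\hat v(p)\ge 0$ to guarantee the resulting operator is a genuine sum of positive terms — and the cross term $-2b^*c^*cb$ plus the normal-ordering corrections generated in forming $D_p^*D_p$ get absorbed into $\mathbb{Q}_{1,a}$, which is a sum over $p$ of $\hat v(p)$ times a number-operator-type quantity, hence bounded by $\frac{\|\hat v\|_1}{N}\langle\varphi,\mathcal{N}\varphi\rangle$ as in \eqref{eq:Q1abd}. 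For the $\mathbb{Q}_2$ and $\mathbb{Q}_3$ estimates \eqref{eq:Qs} the routine is: pull out $\hat v(p)$ (or, after Cauchy--Schwarz in the $p$-sum, $\hat v(p)$ and $|p|\hat v(p)$), then for each fixed momentum-transfer insert a weight $e(\cdot)^{1/2}$ on one creation/annihilation operator and its reciprocal on the partner, so that one factor produces $\langle\varphi,\mathbb{H}_0\varphi\rangle^{1/2}$ and the other is handled by $\|b_k\|,\|c_k\|\le 1$ together with a momentum count. The $\mathbb{Q}_{2,b}, \mathbb{Q}_{2,c}$ (the $b^*b^*c^*c^*$ pair-creation term and its adjoint) produce the $\varepsilon\||p|\hat v\|_1$ term because the momentum count over the thin shell near the Fermi surface where $k\in\mathcal{B}_\mu$, $k+p\in\mathcal{B}_\mu^c$ gains a factor $\varepsilon|p|$ from the shell width; this requires the shell estimate of the type recorded in Lemma \ref{lem:H0}. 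The $\mathbb{Q}_3$ term is genuinely cubic-looking and must be paired against $D_p^*D_p$ with a free parameter $\xi$ (Cauchy--Schwarz with weight $\xi$ and $\xi^{-1}$), one factor giving $\frac{\xi}{N}\sum_p\hat v(p)\langle\varphi,D_p^*D_p\varphi\rangle$ and the other $\frac{C\|\hat v\|_1}{\xi}\langle\varphi,\mathbb{H}_0\varphi\rangle$; this is what makes the $\mathbb{Q}_1$-positivity available to counterbalance $\mathbb{Q}_3$ in the later lower bound.

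The tail terms $\frak{e}_K$ arise from the truncation of all momentum sums to $|p|\le N^{\zeta}$: on the complement $|p| > N^{\zeta}$ one uses the polynomial decay hypothesis $\||p|^K\hat v(p)\|_1 \le C_K$ to gain $N^{-\zeta K}$, against which the crudest bound (each operator norm $\le 1$, times the number $O(N)$ of momenta in the Fermi ball for the leftover sums, times an extra $N^{1/3}$ from an unrestricted shell sum) gives the stated $C_K N^{1/3 - \zeta K}\|\varphi\|^2$ with $\zeta = 1/50$. The main obstacle, and the step deserving the most care, is the $\mathbb{Q}_{2,b/c}$ estimate: one must split $b^*_{k+p}b^*_{k'-p}c^*_{k'}c^*_{k}$ so that the two ``outside'' operators $b^*$ carry the $e(k+p)^{1/2}, e(k'-p)^{1/2}$ weights contributing to $\mathbb{H}_0$, while the two ``inside'' operators $c^*_{k'}, c^*_k$ are summed freely over $\mathcal{B}_\mu$ — but the partition of unity $\chi(k\in\mathcal{B}_\mu)\chi(k+p\in\mathcal{B}_\mu^c)$ couples these sums, and extracting the clean factor $\varepsilon\||p|\hat v\|_1$ rather than a spurious $O(1)$ requires the geometric fact that, for $|p|$ not too large, the set of $k\in\mathcal{B}_\mu$ with $k+p\in\mathcal{B}_\mu^c$ lies in a shell of width $O(\varepsilon|p|)$ around $\partial\mathcal{B}_\mu$, on which $e(k)\lesssim \varepsilon|p|$ as well — so one actually gains on both the count and the energy denominator simultaneously. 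Handling this carefully, while keeping track of which operator absorbs which weight so that no factor of $\mathbb{H}_0$ is double-counted, is where the real work lies.
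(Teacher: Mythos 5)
Your plan reproduces the paper's proof of Proposition \ref{prp:boundsQ} essentially step by step: the energy-threshold splitting of $\mathcal{N}$ for \eqref{eq:NtoH}; completing the square in $D_{p}$ for $\mathbb{Q}_{1}$; Cauchy--Schwarz combined with the key estimate $\frac{1}{\sqrt{N}}\sum_{k}\|b_{k+p}c_{k}\varphi\|\leq C I_{\mu}(p)^{1/2}\|\mathbb{H}_{0}^{1/2}\varphi\|$ of Lemma \ref{lem:H0} for $\mathbb{Q}_{2,a}$ and $\mathbb{Q}_{3}$ (the latter with the $\xi$/$\xi^{-1}$ weighting against $D_{p}^{*}D_{p}$); the normal-ordering contraction counted via $\sum_{k\in\mathcal{B}_{\mu},\,k+p\in\mathcal{B}_{\mu}^{c}}1\leq CN\varepsilon|p|$ for $\mathbb{Q}_{2,b}$, $\mathbb{Q}_{2,c}$; and the cutoff $|p|\leq N^{\zeta}$ plus the decay hypothesis for the tails $\frak{e}_{K}$. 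All the essential ingredients are in place.

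One bookkeeping error must be corrected. You assert that the $b^{*}b^{*}bb$ and $c^{*}c^{*}cc$ pieces of $\mathbb{Q}_{1}$ alone combine into $\frac{1}{2N}\sum_{p}\hat v(p)D_{p}^{*}D_{p}$, and that the cross term $-2b^{*}_{k+p}c^{*}_{k'+p}c_{k'}b_{k}$ goes into $\mathbb{Q}_{1,a}$. That cannot be right: expanding $D_{p}^{*}D_{p}$ with $D_{p}=\sum_{k}(b^{*}_{k+p}b_{k}-c^{*}_{k-p}c_{k})$ produces precisely the mixed products $-b^{*}b\,c^{*}c-c^{*}c\,b^{*}b$, which (since the $b$'s and $c$'s live on disjoint momentum sets and anticommute without contractions) are exactly the $-2b^{*}c^{*}cb$ cross term of $\mathbb{Q}_{1}$. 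So the cross term is part of the square, not of the remainder. Moreover, if that quartic term were placed in $\mathbb{Q}_{1,a}$, the bound \eqref{eq:Q1abd} would fail, since a genuine quartic is of size $\mathcal{N}^{2}/N$ rather than $\mathcal{N}/N$: the remainder $\mathbb{Q}_{1,a}$ consists only of the \emph{quadratic} contractions $-b^{*}_{k+p}b_{k'}\delta_{k'-p,k}\chi(k\notin\mathcal{B}_{\mu})-c^{*}_{k+p}c_{k'}\delta_{k'-p,k}\chi(k\in\mathcal{B}_{\mu})$ generated when reordering $b^{*}bb^{*}b$ and $c^{*}cc^{*}c$, and those are indeed bounded by $\|\hat v\|_{1}\mathcal{N}/N$. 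A smaller remark on $\mathbb{Q}_{2,b}$: the efficient split is into the pairs $\sum_{k}b^{*}_{k+p}c^{*}_{k}$ versus $\sum_{k'}b^{*}_{k'-p}c^{*}_{k'}$, so that each factor is exactly the paired object controlled by Lemma \ref{lem:H0} after normal ordering (which produces the $CN\varepsilon|p|$ contraction); weighting only the two $b^{*}$'s and summing the $c^{*}$'s ``freely over $\mathcal{B}_{\mu}$'' would cost a factor $N$ per hole sum unless the shell constraint is retained, as you yourself note at the end.
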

\begin{remark}
The $\frak{e}_{K}$ error terms are due to the large momenta dependence of $\hat v(p)$. They are smaller than any fixed power of $\eps$, provided $\hat v(p)$ decays fast enough for large $|p|$. As it will be clear in the proof, for $\hat v(p)$ compactly supported one can set $\frak{e}_{K} = 0$.
\end{remark}

The logic of the bound (\ref{eq:NtoH}) is to separate the plane wave states that are energetically too close to the Fermi level, from those that are far from it. The number of particles occupying states that are far from the Fermi level can be estimated via their kinetic energy. Instead, for the states close to the Fermi level one cannot use the kinetic energy; we control the number of particles occupying such states by simply counting the number of available states. This strategy might be improved by defining, from the very beginning, the correlation energy by comparing the many-body ground state energy with the Hartree-Fock ground state energy. The two definitions are equivanent: as recently proven in \cite{GHL} for jellium, the energy of the Fermi sea and the HF ground state energy are exponentially close in the density of particles. The advantage of comparing with the true HF ground state would be that, as proven in \cite{BLLS}, there are no unfilled shells in unrestricted Hartree-Fock theory. In our scaling, one expects the first HF excited state to be separated by an energy gap $O(1/N)$ from the HF ground state. This would imply, in particular, that the analogue of the sum in the left-hand side in Eq. (\ref{eq:numb}) would be empty, for $\alpha = 1$. However, the price to pay in this approach is that one cannot rely on translation invariance anymore, since the HF ground state is not translation invariant. Translation invariance of the reference state plays an important role in our proofs, in particular in the proof of Lemma \ref{lem:H0} below, and even more in the analysis of Section \ref{sec:2ndlow}. 

Before discussing the proof of Proposition \ref{prp:boundsQ}, let us show how these estimates can be used to obtain a lower bound on the correlation energy.
\begin{corollary}\label{prp:lower} Under the same assumptions of Theorem \ref{thm:1}, the following is true. There exists a constant $C>0$, independent of $N$, such that the following bound holds, for all $\varphi \in \mathcal{F}$ such that $\langle \varphi, \mathcal{N}\varphi \rangle = N$ and $\|\varphi\| = 1$:
\begin{equation}\label{eq:lower}
\langle \varphi, \mathcal{H}_{N} \varphi\rangle \geq \mathcal{E}^{\text{HF}}_{N}(\omega_{N}) + (1 - C \| \hat{v} \|_{1}) \langle R^{*}_{\omega_{N}} \varphi, \mathbb{H}_{0} R_{\omega_{N}}^{*} \varphi \rangle - C \||p|\hat{v}\|_{1}\eps - \frak{e}_{K}\;.
\end{equation}
In particular, for $v$ such that $1 - C\|\hat{v}\|_{1}\geq 0$, one has:
\begin{equation}\label{eq:corrlow1}
\mathcal{C}_{N} \geq -C \||p|\hat{v}\|_{1}\eps - \frak{e}_{K}\;.
\end{equation}
\end{corollary}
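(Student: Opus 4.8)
The plan is to assemble Corollary~\ref{prp:lower} directly from Proposition~\ref{prp:low1} and the collection of bounds in Proposition~\ref{prp:boundsQ}. Starting from the identity \eqref{eq:low1}, I would write
$$
\langle \varphi, \mathcal{H}_{N} \varphi\rangle = \mathcal{E}^{\text{HF}}_{N}(\omega_{N}) + \langle R^*_{\omega_{N}}\varphi, \mathbb{H}_{0} R^*_{\omega_{N}}\varphi \rangle + \langle R^*_{\omega_{N}}\varphi, \mathbb{X} R^*_{\omega_{N}}\varphi \rangle + \langle R_{\omega_{N}}^{*} \varphi, \mathbb{Q} R_{\omega_{N}}^{*} \varphi \rangle
$$
and abbreviate $\psi := R^*_{\omega_{N}}\varphi$, noting $\|\psi\|=1$. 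Since $R_{\omega_N}$ is an implementor of a particle–hole transformation, $\langle \psi,\mathcal{N}\psi\rangle$ is controlled by $N + \langle\psi,\mathbb{H}_0\psi\rangle$ (or one can simply apply \eqref{eq:NtoH} to $\psi$ with $\alpha$ chosen appropriately); this is what feeds the number-operator bounds into kinetic-energy bounds.

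The next step is to lower-bound $\mathbb{Q}$. Using the decomposition \eqref{eq:Qsplit}, $\langle\psi,\mathbb{Q}_1\psi\rangle$ is handled by \eqref{eq:Q1pos}: the $D_p^*D_p$ piece is nonnegative because $\hat v(p)\geq 0$, and the remainder $\mathbb{Q}_{1,a}$ costs only $\frac{\|\hat v\|_1}{N}\langle\psi,\mathcal{N}\psi\rangle$. The terms $\mathbb{Q}_{2,a}$, $\mathbb{Q}_{2,b}=\mathbb{Q}_{2,\sharp}$ with $\sharp=b$, $\mathbb{Q}_{2,c}=\mathbb{Q}_{2,\sharp}$ with $\sharp=c$, and $\mathbb{Q}_3$ are all estimated by \eqref{eq:Qs}. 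For $\mathbb{Q}_3$ I would choose the free parameter $\xi$ small enough (e.g. $\xi = 1/4$) so that the $\frac{\xi}{N}\sum_p\hat v(p)\langle\psi,D_p^*D_p\psi\rangle$ term is absorbed by the nonnegative $\frac{1}{2N}\sum_p\hat v(p)\langle\psi,D_p^*D_p\psi\rangle$ coming from $\mathbb{Q}_1$, leaving everything else bounded by a constant times $\|\hat v\|_1\langle\psi,\mathbb{H}_0\psi\rangle$, plus $C\varepsilon\||p|\hat v\|_1$, plus the error $\frak{e}_K$. Collecting all these, together with the exchange bound \eqref{eq:Xbd} and \eqref{eq:Q1abd} fed through \eqref{eq:NtoH} with a suitable $\alpha$ (so that $\frac{1}{N}\sum_{k:e(k)\leq N^{-\alpha}}1$ is itself $O(\varepsilon)$ and $\frac{1}{N^{1-\alpha}}\langle\psi,\mathbb{H}_0\psi\rangle$ has a small coefficient), we obtain
$$
\langle \varphi, \mathcal{H}_{N} \varphi\rangle \geq \mathcal{E}^{\text{HF}}_{N}(\omega_{N}) + (1 - C\|\hat v\|_1)\langle\psi,\mathbb{H}_0\psi\rangle - C\||p|\hat v\|_1\,\varepsilon - \frak{e}_K,
$$
which is \eqref{eq:lower} after unpacking $\psi = R^*_{\omega_N}\varphi$. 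Here the constant $C$ absorbs all the $O(1)$ factors appearing in Proposition~\ref{prp:boundsQ}, and one uses that the counting term in \eqref{eq:NtoH} is $O(N^{2/3})/N = O(\varepsilon)$ for appropriate $\alpha$ (the number of lattice points in a thin shell of width $N^{-\alpha}$ around the Fermi sphere of radius $\sim\varepsilon^{-1}$).

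For the second assertion, I would specialize to $\varphi$ a near-minimizer of $\langle\varphi,\mathcal{H}_N\varphi\rangle$ over normalized $N$-particle states, or rather take the infimum: since $\mathbb{H}_0\geq 0$, whenever $1 - C\|\hat v\|_1 \geq 0$ the term $(1-C\|\hat v\|_1)\langle\psi,\mathbb{H}_0\psi\rangle$ is nonnegative and can be dropped, giving $\langle\varphi,\mathcal{H}_N\varphi\rangle \geq \mathcal{E}^{\text{HF}}_N(\omega_N) - C\||p|\hat v\|_1\varepsilon - \frak{e}_K$ for every admissible $\varphi$; taking the infimum over $\varphi$ yields $E_N \geq \mathcal{E}^{\text{HF}}_N(\omega_N) - C\||p|\hat v\|_1\varepsilon - \frak{e}_K$, i.e. \eqref{eq:corrlow1}. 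The only genuine subtlety, and the step I expect to require the most care, is the bookkeeping of constants and free parameters: one must verify that the absorption of the $\mathbb{Q}_3$ ``$D_p^*D_p$'' term into the $\mathbb{Q}_1$ positivity is legitimate for a uniform choice of $\xi$, that the various $\langle\psi,\mathbb{H}_0\psi\rangle$ coefficients sum to something of the form $C\|\hat v\|_1$ with $C$ independent of $N$, and that the $\alpha$ in \eqref{eq:NtoH} can be picked (e.g. $\alpha$ close to but below $1$, balancing the shell-counting term against the kinetic term) so that all leftover contributions are genuinely $O(\varepsilon)$ and not merely $o(1)$. None of this is deep, but it is where the claimed rate $O(\varepsilon)$ — equivalently $O(N^{-1/3})$ — is actually pinned down.
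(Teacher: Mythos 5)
Your proposal is correct and follows essentially the same route as the paper: start from the identity of Proposition \ref{prp:low1}, feed the bounds of Proposition \ref{prp:boundsQ} (absorbing the $\mathbb{Q}_3$ term $\frac{\xi}{N}\sum_p\hat v(p)D_p^*D_p$ into the nonnegative $\mathbb{Q}_1$ contribution with a fixed small $\xi$, the paper takes $1/4$ as well), control the number operator via Eq. (\ref{eq:NtoH}) with $\alpha=1$ so the shell-counting term is $O(\varepsilon)$, and finally drop the nonnegative kinetic term and take the infimum to get Eq. (\ref{eq:corrlow1}).
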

\begin{remark}\label{rem:corlower}
\begin{itemize}
\item[(i)] This lower bound proves the validity of Hartree-Fock theory, at the level of the ground state energy. With respect to previous results, \cite{Ba, Ba2, GS}, we only consider bounded potentials, and we assume translation invariance. The improvement with respect to the preexisting literature is that, as we shall see, our bound on the error term is optimal in its $\varepsilon$-dependence. We expect to be possible to extend Proposition \ref{prp:lower} to a larger class of nonnegative potentials (in particular, without any a priori smallness condition), and to the nontranslation invariant setting; we defer these extensions to future work.

In Section \ref{sec:2ndlow} we shall improve the constant appearing in the lower bound (\ref{eq:corrlow1}), by showing that it can be replaced by an expression that matches second order perturbation theory for small potential. 

\item[(ii)] Suppose that $|\mathcal{E}^{\text{HF}}_{N}(\omega_{N}) - \langle \varphi, \mathcal{H}_{N} \varphi \rangle| \leq C\varepsilon$. Then, Eq. (\ref{eq:lower}) immediately implies that, for $\hat v$ small enough:
\begin{equation}\label{eq:kinbd}
\langle R^{*}_{\omega_{N}} \varphi, \mathbb{H}_{0} R_{\omega_{N}}^{*} \varphi \rangle \leq C\varepsilon\;.
\end{equation}
\end{itemize}
\end{remark}
\begin{proof}(of Corollary \ref{prp:lower}.) The proof is a direct consequence of Proposition \ref{prp:boundsQ}. Let $\alpha = 1$ in Eq. (\ref{eq:NtoH}). We have:
\begin{equation}\label{eq:numb}
\frac{1}{N} \Big(\sum_{k:e(k) \leq 1/N} 1\Big) \equiv \frac{1}{N} \Big( \sum_{\substack{k\in 2\pi \mathbb{Z}^{3} \\ | \eps^{2} |k|^{2} - \mu | \leq 1/N }} 1 \Big) \leq C\varepsilon.
\end{equation} 
Therefore, Eq. (\ref{eq:low1}) together with Eq. (\ref{eq:numb}) and the bounds of Proposition \ref{prp:boundsQ} easily implies, for some $\delta$-dependent constant $C_{\delta}$:
\begin{eqnarray}
\langle \varphi, \mathcal{H}_{N} \varphi\rangle &\geq& \mathcal{E}^{\text{HF}}_{N}(\omega_{N}) + (1 - C_{\delta}\|\hat{v}\|_{1}) \langle \varphi, \mathbb{H}_{0} \varphi \rangle\\
&& + \frac{1}{2N}(1 - 2\delta) \sum_{p} \hat{v}(p) \langle \varphi, D_{p}^{*} D_{p}\varphi \rangle - C\varepsilon \| |p| \hat{v} \|_{1} - \frak{e}_{K}\;,\nonumber
\end{eqnarray}
which gives Eq. (\ref{eq:lower}), using that $\hat{v}(p)\geq 0$ and choosing, {\it e.g.}, $\delta = 1/4$.
\end{proof}
\begin{remark}\label{rem:impro} 
As the proof suggests, the $O(\varepsilon)$ terms contributing to the lower bound for the correlation energy come from the estimate of the number operator in terms of the kinetic energy, see Eqs. (\ref{eq:NtoH}), (\ref{eq:numb}), and from the first term appearing in the bound of the $\mathbb{Q}_{2,b}$, $\mathbb{Q}_{2,c}$ terms, see the second of Eqs. (\ref{eq:Qs}). The improved bound of Section \ref{sec:2ndlow} will be based on a better estimate for the sum in the left-hand side of Eq. (\ref{eq:numb}), and on the use of the kinetic energy operator $\mathbb{H}_{0}$ to partially control the $\mathbb{Q}_{2,b}$, $\mathbb{Q}_{2,c}$ terms.
\end{remark}

The proof of Proposition \ref{prp:boundsQ} is based on the next key lemma, whose proof is deferred to Appendix \ref{app:key}.
\begin{lemma}\label{lem:H0} Let $\varphi \in \mathcal{F}$. Then, there exists constants $C>0$, $\zeta>0$ independent of $N$ such that, for $p\in 2\pi \mathbb{Z}^{3}$, $|p| \leq N^{\zeta}$:
\begin{eqnarray}\label{eq:estH0}
&&\frac{1}{N^{\frac{1}{2}}}\sum_{k} \big\| b_{k+p} c_{k} \varphi \big\| \leq C I_{\mu}(p)^{\frac{1}{2}}\| \mathbb{H}_{0}^{\frac{1}{2}}\varphi \|\;,\nonumber\\
&&I_{\mu}(p) = \frac{1}{N} \sum_{k:\, k+p\notin \mathcal{B}_{\mu},\, k\in \mathcal{B}_{\mu}} \frac{1}{e(k+p) + e(k)}\;,\qquad I_{\mu}(p) \leq C\;.
\end{eqnarray}
\end{lemma}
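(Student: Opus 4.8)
The plan is to separate the two assertions: the operator inequality (a Cauchy--Schwarz estimate, weighted by the excitation energy) and the scalar bound $I_{\mu}(p)\leq C$ (a semiclassical lattice‑point count near the Fermi sphere).

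\emph{The operator inequality.} First I would note that $b_{k+p}c_{k}\varphi=0$ unless $k\in\mathcal{B}_{\mu}$ and $k+p\in\mathcal{B}_{\mu}^{c}$; writing $S_{p}$ for this set of $k$, one has $p\neq 0$ and $e(k+p)+e(k)>0$ on $S_{p}$. Then Cauchy--Schwarz with weight $e(k+p)+e(k)$ gives
\begin{equation}
\sum_{k\in S_{p}}\|b_{k+p}c_{k}\varphi\|\leq\Big(\sum_{k\in S_{p}}\frac{1}{e(k+p)+e(k)}\Big)^{1/2}\Big(\sum_{k\in S_{p}}(e(k+p)+e(k))\,\|b_{k+p}c_{k}\varphi\|^{2}\Big)^{1/2}\;,
\end{equation}
and the first factor is $(N I_{\mu}(p))^{1/2}$ by definition. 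For the second factor, since $k\neq k+p$ the number operators $n_{k}:=a^{*}_{k}a_{k}$ and $n_{k+p}$ commute, so $\|b_{k+p}c_{k}\varphi\|^{2}=\langle\varphi,n_{k}n_{k+p}\varphi\rangle$; using $n_{k}\leq 1$ in the part weighted by $e(k+p)$, $n_{k+p}\leq 1$ in the part weighted by $e(k)$, together with $\mathbb{H}_{0}=\sum_{q}e(q)n_{q}$ and the (injective) re‑indexing $q=k+p$, one gets $\sum_{k\in S_{p}}(e(k+p)+e(k))\|b_{k+p}c_{k}\varphi\|^{2}\leq 2\langle\varphi,\mathbb{H}_{0}\varphi\rangle$. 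This yields the inequality with, say, $C=\sqrt 2$.

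\emph{The bound $I_{\mu}(p)\leq C$.} On $S_{p}$ one has $e(k)=\mu-\eps^{2}|k|^{2}$, $e(k+p)=\eps^{2}|k+p|^{2}-\mu$, hence $e(k+p)+e(k)=\eps^{2}(2k\cdot p+|p|^{2})$, which is a positive integer multiple of $(2\pi)^{2}\eps^{2}$ and, crucially, depends on $k$ only through $k\cdot\hat p$ with $\hat p=p/|p|$. Thus $I_{\mu}(p)=\frac{1}{N\eps^{2}}\sum_{k\in S_{p}}(2k\cdot p+|p|^{2})^{-1}$, and I would slice this according to the value of $k\cdot p\in(2\pi)^{2}\mathbb{Z}$. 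On the slice where $2k\cdot p+|p|^{2}=(2\pi)^{2}\ell$ the summand is the constant $((2\pi)^{2}\ell\eps^{2})^{-1}$, while the constraints $|k|^{2}\leq\mu/\eps^{2}<|k+p|^{2}$ force the squared component $|k_{\perp}|^{2}$ of $k$ orthogonal to $p$ into a window of length $(2\pi)^{2}\ell$ just below $\mu/\eps^{2}-(k\cdot\hat p)^{2}$; counting $2\pi\mathbb{Z}^{3}$‑lattice points in this two‑dimensional annular region (the relevant sublattice has covolume of order $|p|$) gives, in the bulk, of order $\ell/|p|$ points, so each slice contributes $O(1/(|p|\eps^{2}))$. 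Summing over the $\sim|p|\,(\sqrt\mu/\eps)$ nonempty slices produces $O(\sqrt\mu\,\eps^{-3})$, and dividing by $N\sim\eps^{-3}$ gives $I_{\mu}(p)=O(1)$. Equivalently, and perhaps cleaner to write up, $\frac{1}{N}\sum_{k\in S_{p}}(e(k+p)+e(k))^{-1}$ is a Riemann sum for $\frac{1}{N(2\pi)^{3}}\int_{\{|k|\leq R,\ |k+p|>R\}}(\eps^{2}(2k\cdot p+|p|^{2}))^{-1}\,dk$ with $R=\sqrt\mu/\eps$; integrating first over $k_{\perp}$ and then over $k\cdot\hat p$ shows the integral is $\sim\pi R/\eps^{2}$, hence a constant after division by $N$.

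\emph{Main obstacle.} The delicate point is the lattice‑point count near the Fermi sphere, where $e(k+p)+e(k)$ is small: there the annular window is thin and the naive area‑over‑covolume estimate carries a Gauss‑circle‑type error that can be comparable to the main term, precisely where the summand is largest. Controlling the accumulated contribution of these near‑boundary slices \emph{uniformly} in the direction of $p$ — whose perpendicular sublattice may be highly skew — is the technical heart of the argument, and is the reason one restricts to $|p|\leq N^{\zeta}$ for a small fixed $\zeta>0$: this keeps the lune thin relative to the Fermi ball and makes the error terms $o(1)$. The precise value of $\zeta$ is immaterial for the applications of the lemma.
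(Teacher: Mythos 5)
Your first step, the weighted Cauchy--Schwarz inequality with weight $e(k+p)+e(k)$ followed by $\sum_{k}(e(k+p)+e(k))\|b_{k+p}c_{k}\varphi\|^{2}\leq 2\langle\varphi,\mathbb{H}_{0}\varphi\rangle$, is correct and is exactly the argument in the paper. The gap is in the second assertion, $I_{\mu}(p)\leq C$. Your slicing by the value of $k\cdot p$ and the area-over-covolume computation correctly produce the leading term, but you then explicitly defer the error control ("the technical heart of the argument") without carrying it out, and this is not a routine detail that can be waved through: with only the trivial lattice-point remainder (boundary length over covolume, i.e. $O(R)$ per slice with $R\sim\eps^{-1}$), the thin near-equatorial slices with $\ell\lesssim \eps^{-1}$ contribute
\begin{equation*}
\frac{1}{N}\sum_{\ell\lesssim \eps^{-1}}\frac{1}{\ell\eps^{2}}\cdot O(\eps^{-1})\;\sim\;\log(1/\eps)\;,
\end{equation*}
so one only gets $I_{\mu}(p)\leq C\log N$. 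That weaker bound would make the constants in Proposition \ref{prp:boundsQ} grow with $N$ and hence require an $N$-dependent smallness condition on $\|\hat v\|_{1}$, which defeats Theorem \ref{thm:1}.

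Closing this gap is the entire content of the paper's Appendix \ref{app:key}: one needs a sub-linear remainder for 2D lattice points in convex domains (Huxley's exponent $\gamma=131/208$, Eq. (\ref{eq:count2}); any exponent $<1$ would do, but something beyond the trivial estimate is indispensable), applied uniformly over the skew perpendicular sublattices by rescaling $(\beta,\gamma)\mapsto(n_{2},n_{3})$ and inflating the ellipse by $P=\lceil|p|\rceil$ to keep the curvature bounded; and a three-regime decomposition ($e(k)+e(k+p)$ smaller than $A|p|\eps^{2}$, between $A|p|\eps^{2}$ and $\eps^{2-\gamma}$, and larger), with the first two regimes shown to be $o(1)$ and only the third producing the $O(1)$ Riemann-sum limit you describe. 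So while your outline points in the right direction, the proof as proposed is incomplete at precisely the step where the lemma could fail.
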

\begin{remark} Inspection of the proof shows that one can choose $\zeta = 1/50$.
\end{remark}

%
%
%

We are now ready to prove Proposition \ref{prp:boundsQ}.
\begin{proof}(of Proposition \ref{prp:boundsQ}.) Let us start by proving Eq. (\ref{eq:NtoH}). We have:
\begin{eqnarray}\label{eq:calN}
&&\frac{1}{N} \langle \varphi, \mathcal{N} \varphi  \rangle =  \frac{1}{N} \sum_{k} \langle \varphi, a^{*}_{k}a_{k} \varphi \rangle \nonumber\\&&\quad = \frac{1}{N} \Big[\sum_{k:\, e(k) \leq 1/N^{\alpha}} \langle \varphi, a^{*}_{k} a_{k} \varphi \rangle \nonumber + \sum_{k:\, e(k) > 1/N^{\alpha}} \langle \varphi, a^{*}_{k} a_{k} \varphi \rangle \Big]\nonumber\\
&&\quad \leq \frac{1}{N}\Big(\sum_{k:\, e(k) \leq 1/N^{\alpha}} 1\Big) + \frac{1}{N^{1-\alpha}} \langle \varphi, \mathbb{H}_{0} \varphi\rangle\;,
\end{eqnarray}
where in the last step we used that:
\begin{equation}
\frac{1}{N}\sum_{k: e(k) > 1/N^{\alpha}} a^{*}_{k} a_{k} \leq \frac{1}{N^{1-\alpha}}\sum_{k} e(k) a^{*}_{k} a_{k} \equiv \frac{1}{N^{1-\alpha}}\mathbb{H}_{0}\;.
\end{equation}
This proves Eq. (\ref{eq:NtoH}). Let us now consider the quartic terms. Let us start by proving Eq. (\ref{eq:Q1pos}). We rewrite $\mathbb{Q}_{1}$ as:
\begin{eqnarray}
\mathbb{Q}_{1} &=& \frac{1}{2N}\sum_{p, k, k'} \hat{v}(p)\big(b^{*}_{k+p} b_{k} b^{*}_{k'-p} b_{k'} - b^{*}_{k+p} b_{k} c^{*}_{k'+p} c_{k'}\nonumber\\&& - c^{*}_{k'+p} c_{k'} b^{*}_{k+p} b_{k} + c^{*}_{k+p} c_{k} c^{*}_{k'-p} c_{k'}  \big) + \mathbb{Q}_{1,a}\;,
\end{eqnarray}
with:
\begin{equation}
\mathbb{Q}_{1,a} = \frac{1}{2N} \sum_{p, k, k'} \hat{v}(p) \big( - b^{*}_{k+p} b_{k'} \delta_{k'-p,k}\chi(k\notin \mathcal{B}_{\mu}) - c^{*}_{k+p} c_{k'} \delta_{k'-p,k}\chi(k\in \mathcal{B}_{\mu}) \big)\;.
\end{equation}
We further rewrite $\mathbb{Q}_{1}$ as, using that $\hat{v}(p) = \hat{v}(-p)$:
\begin{eqnarray}\label{eq:Q10}
\mathbb{Q}_{1} &=& \frac{1}{2N}\sum_{p, k, k'} \hat{v}(p)\big(b^{*}_{k+p} b_{k} b^{*}_{k'-p} b_{k'} - b^{*}_{k+p} b_{k} c^{*}_{k'+p} c_{k'}\nonumber\\&& - c^{*}_{k'-p} c_{k'} b^{*}_{k-p} b_{k} + c^{*}_{k-p} c_{k} c^{*}_{k'+p} c_{k'}  \big) + \mathbb{Q}_{1,a}\nonumber\\
&\equiv& \frac{1}{2N} \sum_{p} \hat{v}(p) D^{*}_{p} D_{p} + \mathbb{Q}_{1,a}\;,
\end{eqnarray}
with $D_{p} = \sum_{k} (b^{*}_{k+p} b_{k} - c^{*}_{k-p} c_{k})$, $D^{*}_{p} = D_{-p}$. Notice that, thanks to $\hat{v}(p) \geq 0$, the first term in the last line of Eq. (\ref{eq:Q10}) is positive. Consider the term $\mathbb{Q}_{1,a}$. We estimate it as, for any $\varphi\in \mathcal{F}$:
\begin{eqnarray}
|\langle \varphi, \mathbb{Q}_{1,a} \varphi\rangle| &\leq& \frac{1}{2N}\sum_{p} \hat{v}(p) \sum_{k} \big[ \| b_{k+p}\varphi \|^{2} + \| c_{k+p}\varphi \|^{2}\big]\\
&\leq& \frac{\|\hat{v}_{1}\|}{N}  \langle \varphi, \mathcal{N} \varphi \rangle\;.\nonumber
\end{eqnarray}
This proves Eqs. (\ref{eq:Q1pos}), (\ref{eq:Q1abd}). Let us consider the term $\mathbb{Q}_{2}$. We shall estimate its three contributions separately. We have:
\begin{equation}\label{eq:Q2a}
|\langle \varphi, \mathbb{Q}_{2,a} \varphi \rangle| \leq \frac{1}{N}\sum_{p} \hat{v}(p) \sum_{k, k'} \big\| b_{k+p} c_{k} \varphi \big\| \big\| b_{k'-p} c_{k'} \varphi  \big\|\;.
\end{equation}
We rewrite $\hat v(p)$ as $\hat v_{<}(p) + \hat v_{>}(p)$, with $\hat v_{>}(p)$ supported for momenta $|p| > N^{\zeta}$, with $\zeta$ as in Lemma \ref{lem:H0}. The term corresponding to $v_{>}(p)$ is estimated as:
\begin{eqnarray}
\frac{1}{N}\sum_{p} \hat{v}_{>}(p) \sum_{k, k'} \big\| b_{k+p} c_{k} \varphi \big\| \big\| b_{k'-p} c_{k'} \varphi  \big\| &\leq& CN\eps^{2} \sum_{p} \hat v_{>}(p)|p|^{2}\nonumber\\ &\leq& C_{K} N^{\frac{1}{3} - \zeta K}\;,
\end{eqnarray}
where we used that, by assumption on the potential, $\| \hat v(p) |p|^{K} \|_{1} \leq C_{K}$, together with $\|b_{k}\| \leq 1$, $\|c_{k}\| \leq 1$ and with:
\begin{equation}\label{eq:countnnopt}
\sum_{k:\, k+p \in \mathcal{B}_{\mu}^{c}, \, k\in \mathcal{B}_{\mu}} 1 \leq CN\eps |p|\;.
\end{equation}
Consider now the contribution of $\hat v_{<}(p)$. We have, by Lemma \ref{lem:H0}:
\begin{equation}
\frac{1}{N} \sum_{p} \hat v_{<}(p) \sum_{k, k'} \big\| b_{k+p} c_{k} \varphi \big\| \big\| b_{k'-p} c_{k'} \varphi  \big\| \leq C\| \hat v  \|_{1} \langle \varphi, \mathbb{H}_{0} \varphi \rangle\;.
\end{equation}
This proves the first of Eq. (\ref{eq:Qs}). Now, consider the term $\mathbb{Q}_{2,b}$. We have:
\begin{eqnarray}\label{eq:Q2b1}
&&|\langle \varphi, \mathbb{Q}_{2,b} \varphi \rangle| \leq \frac{1}{2N} \sum_{p} \hat{v}(p) \Big| \Big\langle \sum_{k} b^{*}_{k+p} c^{*}_{k} \varphi, \sum_{k'} b_{k' - p} c_{k'} \varphi \Big\rangle \Big|\\
&& \leq \frac{1}{4N} \sum_{p} \hat{v}_{<}(p) \Big[ \Big\| \sum_{k} b^{*}_{k+p} c^{*}_{k} \varphi \Big\|^{2} + \Big\|  \sum_{k'} b_{k' - p} c_{k'} \varphi  \Big\|^{2} \Big] + C_{K} N^{\frac{1}{3} - \zeta K}\nonumber\\
&& \leq \frac{1}{4N} \sum_{p} \hat{v}_{<}(p) \Big[ \Big\| \sum_{k} b^{*}_{k+p} c^{*}_{k} \varphi \Big\|^{2} + C N \langle \varphi, \mathbb{H}_{0} \varphi \rangle \Big] + C_{K} N^{\frac{1}{3} - \zeta K}\;,\nonumber
\end{eqnarray}
where in the last step we used the triangle inequality and Lemma \ref{lem:H0}. Then, we write:
\begin{eqnarray}\label{eq:Q2b2}
&&\Big\| \sum_{k} b^{*}_{k+p} c^{*}_{k} \varphi \Big\|^{2} = \sum_{k,k'} \langle \varphi, c_{k'} b_{k'+p} b^{*}_{k+p} c^{*}_{k} \varphi \rangle\nonumber\\
&&= \sum_{k,k'} \langle \varphi, b^{*}_{k+p} c^{*}_{k} c_{k'} b_{k'+p} \varphi \rangle + \sum_{k: k+p\in \mathcal{B}_{\mu}^{c}} \langle \varphi, c_{k} c^{*}_{k} \varphi \rangle - \sum_{k: k\in \mathcal{B}_{\mu}} \langle \varphi, b^{*}_{k+p} b_{k+p} \varphi \rangle \nonumber\\
&&\leq \Big\| \sum_{k} b_{k+p} c_{k}\varphi  \Big\|^{2} + CN\eps |p|\;.
\end{eqnarray}
Using again Lemma \ref{lem:H0}, we get:
\begin{eqnarray}
|\langle \varphi, \mathbb{Q}_{2,b} \varphi \rangle| &\leq& \frac{1}{4N}\sum_{p} \hat{v}_{<}(p) \Big[ CN\eps|p| + C N \langle \varphi, \mathbb{H}_{0} \varphi \rangle\Big] + C_{K} N^{\frac{1}{3} - \zeta K} \nonumber\\
&\leq& C\||p|\hat v\|_{1} \eps  + C\| \hat v \|_{1} \langle \varphi, \mathbb{H}_{0} \varphi \rangle + C_{K} N^{\frac{1}{3} - \zeta K}\;.
\end{eqnarray}
The term $\mathbb{Q}_{2,c}$ is estimated in exactly the same way. This proves the second of Eq. (\ref{eq:Qs}). Finally, consider the term $\mathbb{Q}_{3}$. We rewrite is as:
\begin{eqnarray}
\mathbb{Q}_{3} &=& \frac{1}{N} \sum_{p,k,k'} \hat{v}(p) ( b^{*}_{k+p} c^{*}_{k} b^{*}_{k'-p} b_{k'} - b^{*}_{k+p} c^{*}_{k} c^{*}_{k'+p} c_{k'} ) + \text{h.c} \nonumber\\
&\equiv& \frac{1}{N} \sum_{k,p} \hat{v}(p) b^{*}_{k+p} c^{*}_{k} D_{-p} + \text{h.c..}
\end{eqnarray}
By Cauchy-Schwarz inequality, for any $\xi > 0$ (recalling that $D^{*}_{p} = D_{-p}$):
\begin{eqnarray}
| \langle \varphi, \mathbb{Q}_{3} \varphi \rangle | &\leq& \frac{2}{N} \sum_{p} \hat{v}(p) \Big\| \sum_{k} b_{k+p} c_{k} \varphi \Big\| \Big\| D_{-p} \varphi \Big\|\nonumber\\
&\leq& \frac{\xi}{N} \sum_{p} \hat{v}(p) \langle \varphi, D^{*}_{p} D_{p} \varphi \rangle + \frac{1}{\xi N} \sum_{p} \hat{v}(p) \Big\| \sum_{k} b_{k+p} c_{k} \varphi \Big\|^{2}\;.\nonumber
\end{eqnarray}
Therefore, by Lemma \ref{lem:H0} we get:
\begin{eqnarray}\label{eq:Q3}
| \langle \varphi, \mathbb{Q}_{3} \varphi \rangle | &\leq& \frac{\xi}{N} \sum_{p} \hat{v}(p) \langle \varphi, D^{*}_{p} D_{p} \varphi \rangle + \frac{C}{\xi} \sum_{p} \hat{v}(p) \langle \varphi, \mathbb{H}_{0} \varphi \rangle\nonumber\\ && + C_{K} N^{\frac{1}{3} - \zeta K}\;.
\end{eqnarray}
This proves the last of Eqs. (\ref{eq:Qs}), and concludes the proof of Proposition \ref{prp:boundsQ}.
%
%
\end{proof}
\subsection{Proof of Theorem \ref{thm:1}: lower bound}\label{sec:2ndlow}

In this section we shall prove the lower bound for the correlation energy announced in Theorem \ref{thm:1}. As stressed in Remark \ref{rem:impro}, the lower bound for the correlation energy obtained in Eq. (\ref{eq:corrlow1}) is determined by the estimates of the terms $\mathbb{Q}_{2,b}$, $\mathbb{Q}_{2,c}$ in Eqs. (\ref{eq:Qs}), and from the estimate of the counting of lattice points in Eq. (\ref{eq:numb}). As the next proposition shows, this last point can be improved by using more refined estimates on the Gauss' sphere problem.
\begin{proposition}\label{prp:number} For every $\delta>0$ there exists a constant $K_{\delta}>0$, independent of $N$, such that for every $\alpha>0$ the following bound holds true:
\begin{equation}\label{eq:countbetter}
\frac{1}{N}\Big(\sum_{k:\, e(k) \leq 1/N^{\alpha}} 1\Big) \leq C N^{-\alpha} + K_{\delta} N^{-\frac{9}{16} + \delta} \;,\qquad \text{($e(k) = |\eps^{2} |k|^{2} - \mu|$).}
\end{equation}
\end{proposition}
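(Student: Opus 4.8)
The plan is to unfold the definition of $e(k)$, turn the left-hand side of (\ref{eq:countbetter}) into a count of integer points in a thin spherical shell, and then feed in the best available bound for the three-dimensional Gauss sphere problem.

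First I would substitute $k = 2\pi m$ with $m\in\mathbb{Z}^{3}$, so that $e(k)\le 1/N^{\alpha}$ becomes $|4\pi^{2}\eps^{2}|m|^{2} - \mu|\le N^{-\alpha}$, i.e.
\begin{equation}
\ell \le |m|^{2}\le u\;,\qquad \ell := \frac{\eps^{-2}}{4\pi^{2}}\big(\mu - N^{-\alpha}\big)\;,\quad u := \frac{\eps^{-2}}{4\pi^{2}}\big(\mu + N^{-\alpha}\big)\;.
\end{equation}
By (\ref{eq:Neps}) and the choice of $\mu$ one has $\eps^{-2}\le C N^{2/3}$ for $N$ large, and since the normalization forces $\mu>1$ (so $N^{-\alpha}<\mu$), we get $0<\ell\le u\le C\eps^{-2}\le CN^{2/3}$ and $u-\ell = \tfrac{\eps^{-2}}{2\pi^{2}}N^{-\alpha}\le CN^{2/3-\alpha}$. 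Thus
\begin{equation}
\sum_{k:\,e(k)\le 1/N^{\alpha}}1 \;=\; \#\big\{m\in\mathbb{Z}^{3}:\ \ell\le |m|^{2}\le u\big\}\;.
\end{equation}

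Next, introduce $\mathcal{R}(t):=\#\{m\in\mathbb{Z}^{3}:|m|\le\sqrt{t}\}=\sum_{0\le n\le t}r_{3}(n)$, with $r_{3}(n)$ the number of representations of $n$ as a sum of three squares. Up to harmless boundary contributions (each of size $r_{3}(O(N^{2/3}))=O(N^{1/3+\delta})$), the count above is $\mathcal{R}(u)-\mathcal{R}(\ell)$. The key input is the bound for the three-dimensional sphere problem, due to Heath-Brown,
\begin{equation}\label{eq:sphere}
\mathcal{R}(t) = \frac{4\pi}{3}\,t^{3/2} + O\big(t^{21/32+\delta}\big)\qquad\text{for every }\delta>0\;.
\end{equation}
Applying (\ref{eq:sphere}) at $t=u$ and $t=\ell$, and using the mean value theorem on the main term, $u^{3/2}-\ell^{3/2}\le\tfrac{3}{2}\sqrt{u}\,(u-\ell)\le CN^{1/3}\cdot N^{2/3-\alpha}=CN^{1-\alpha}$, while the two error terms are $O\big((N^{2/3})^{21/32+\delta}\big)=O(N^{7/16+\delta})$ (after renaming $\delta$). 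Hence
\begin{equation}
\frac{1}{N}\Big(\sum_{k:\,e(k)\le 1/N^{\alpha}}1\Big)\le \frac{1}{N}\big(CN^{1-\alpha}+C_{\delta}N^{7/16+\delta}\big)= CN^{-\alpha}+C_{\delta}N^{-9/16+\delta}\;,
\end{equation}
which is (\ref{eq:countbetter}). The identity $\tfrac{2}{3}\cdot\tfrac{21}{32}-1=-\tfrac{9}{16}$ is exactly what fixes the exponent in the statement.

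The only nontrivial ingredient — and the main obstacle — is the sphere-problem estimate (\ref{eq:sphere}); the trivial bound $\mathcal{R}(t)=\tfrac{4\pi}{3}t^{3/2}+O(t)$ would only give $N^{-1/3}$ and is insufficient. Everything else (the rescaling, the reduction to $\mathcal{R}(u)-\mathcal{R}(\ell)$, controlling the $o(\eps^{-3})$ term in (\ref{eq:Neps})) is routine bookkeeping. One should also note that no separate treatment of the regime $\alpha\ge 2/3$ (where the shell contains only $O(1)$ admissible values of $|m|^{2}$) is needed: there the main term $u^{3/2}-\ell^{3/2}$ is negligible and the pointwise bound $r_{3}(n)=O(n^{1/2+\delta})=O(N^{1/3+\delta})$ is already dominated by $N^{7/16+\delta}$, so (\ref{eq:sphere}) handles all $\alpha>0$ uniformly.
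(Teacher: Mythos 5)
Your proof is correct and follows essentially the same route as the paper: rewrite the left-hand side as a count of lattice points in a thin spherical shell and invoke the Heath--Brown bound $E_{3}(R)=O(R^{21/16+\delta})$ for the sphere problem, whence the main term gives $CN^{-\alpha}$ and the error gives $N^{7/16+\delta}/N=N^{-9/16+\delta}$. The extra bookkeeping you supply (mean value theorem on $u^{3/2}-\ell^{3/2}$, boundary terms via $r_{3}(n)=O(n^{1/2+\delta})$) is sound but not a different method.
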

\begin{proof} The proof is based on refined estimates on the counting of lattice points. As proved in \cite{HB}, see also \cite{Iwa}, for any $R>0$, $\delta>0$ there exists $C_{\delta}>0$ independent of $R$ such that:
\begin{eqnarray}\label{eq:count}
\sum_{k\in \mathbb{Z}^{3}: |k| \leq R} 1 &=& \frac{4\pi}{3} R^{3} + E_{3}(R) \nonumber\\
|E_{3}(R)| &\leq& C_{\delta} R^{\frac{21}{16} + \delta}\;.
\end{eqnarray}
Therefore, we easily get:
\begin{eqnarray}\label{eq:numbth}
\frac{1}{N} \Big(\sum_{k:\, e(k) \leq 1/N^{\alpha}} 1\Big) &=& \frac{1}{N} \Big(\sum_{k:\, ||k|^{2} - N^{2/3}\mu| \leq N^{2/3 - \alpha}} 1\Big)\nonumber\\
&\leq& C N^{-\alpha} + C C_{\delta} N^{-\frac{9}{16} + \frac{\delta}{3}}\;.
\end{eqnarray}
This concludes the proof.
\end{proof}
\begin{remark} The nonoptimal bound (\ref{eq:countnnopt}) is implied by Eq. (\ref{eq:count}) by using the weaker bound $|E_{3}(R)|\leq CR^{2}$. In general, it is enough to know that $E_{3}(R) = o(R^{2})$ to prove that the left-hand side of Eq. (\ref{eq:countbetter}) is $o(\varepsilon)$, which is what we need to compute the correlation energy.
\end{remark}
This proposition, combined with the estimate (\ref{eq:NtoH}) for the number operator, immediately implies the following corollary.
\begin{corollary}\label{cor:Nimpro} The following improved estimate for the number operator holds true:
\begin{equation}\label{eq:Nimpro}
\langle \xi, \mathcal{N} \xi \rangle \leq CN^{1-\alpha}\|\xi\|^{2} + K_{\delta} N^{\frac{7}{16} + \delta}\|\xi\|^{2} + N^{\alpha}\langle \xi, \mathbb{H}_{0}\xi \rangle\;,\qquad \forall \xi \in \mathcal{F}\;.
\end{equation}
\end{corollary}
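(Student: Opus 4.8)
The plan is to combine the two ingredients that the excerpt has just made available: the splitting of the number operator into a ``near-Fermi-surface'' piece and a ``far-from-surface'' piece (Eq. (\ref{eq:NtoH}) of Proposition \ref{prp:boundsQ}), and the sharpened lattice-point count from Proposition \ref{prp:number}. Indeed, Eq. (\ref{eq:NtoH}) reads, for any $\alpha\geq 0$ and any $\varphi\in\mathcal{F}$,
\begin{equation*}
\frac{1}{N}\langle\varphi,\mathcal{N}\varphi\rangle \leq \frac{1}{N}\Big(\sum_{k:\,e(k)\leq 1/N^{\alpha}}1\Big)\|\varphi\|^{2} + \frac{1}{N^{1-\alpha}}\langle\varphi,\mathbb{H}_{0}\varphi\rangle\,,
\end{equation*}
so the only thing left to do is to insert the improved bound (\ref{eq:countbetter}) on the counting term. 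I would just apply Proposition \ref{prp:number} with the same $\alpha$ and the same $\delta$, multiply the resulting inequality through by $N$, and rename $\varphi$ as $\xi$ to match the statement.

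Concretely, the steps are: (i) fix $\xi\in\mathcal{F}$ and $\alpha>0$; (ii) quote Eq. (\ref{eq:NtoH}) with $\varphi=\xi$; (iii) bound the lattice-point term $\tfrac1N\big(\sum_{k:e(k)\leq 1/N^{\alpha}}1\big)$ from above by $CN^{-\alpha}+K_{\delta}N^{-9/16+\delta}$ using Proposition \ref{prp:number}; (iv) multiply the whole inequality by $N$ to obtain
\begin{equation*}
\langle\xi,\mathcal{N}\xi\rangle \leq CN^{1-\alpha}\|\xi\|^{2} + K_{\delta}N^{7/16+\delta}\|\xi\|^{2} + N^{\alpha}\langle\xi,\mathbb{H}_{0}\xi\rangle\,,
\end{equation*}
which is exactly (\ref{eq:Nimpro}). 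Since $9/16$ was obtained from $-9/16=-21/16+1/3\cdot(\ldots)$ type bookkeeping on the Heath-Brown exponent $21/16$, and $7/16=1-9/16$, multiplying by $N$ is the only arithmetic needed; one can also absorb the harmless constant $C$ in front of $C_\delta$ in (\ref{eq:numbth}) into $K_\delta$, and note the exponent $\delta/3$ there can be renamed $\delta$ since $\delta>0$ is arbitrary.

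There is essentially no obstacle here: the corollary is a one-line consequence of the two preceding results, and the proof amounts to ``combine (\ref{eq:NtoH}) and (\ref{eq:countbetter}), then multiply by $N$.'' The only thing to be mildly careful about is that the statement should hold \emph{for all} $\alpha>0$ (the error term $K_{\delta}N^{7/16+\delta}$ being $\alpha$-independent), which is automatic because both (\ref{eq:NtoH}) and (\ref{eq:countbetter}) are quantified over all $\alpha$. Later applications will presumably optimize over $\alpha$ (e.g. balancing $N^{1-\alpha}$ against $N^{\alpha}\langle\xi,\mathbb{H}_0\xi\rangle$ when $\langle\xi,\mathbb{H}_0\xi\rangle=O(\varepsilon)=O(N^{-1/3})$), but that optimization is not part of proving the corollary itself.
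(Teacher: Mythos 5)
Your proposal is correct and is exactly the paper's argument: the corollary is stated as an immediate consequence of combining Eq. (\ref{eq:NtoH}) with Proposition \ref{prp:number} and multiplying by $N$, with $1-\tfrac{9}{16}=\tfrac{7}{16}$ giving the stated exponent. Your remarks about the $\alpha$-independence of the error term and the renaming of $\delta$ are accurate but not needed beyond what the paper already implies.
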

Therefore, if $\|\xi\| = 1$ and $\langle \xi, \mathbb{H}_{0}\xi \rangle\leq C \varepsilon$, optimizing over $\alpha$:
\begin{equation}
\langle \xi, \mathcal{N} \xi \rangle \leq C_{\delta} N^{\frac{7}{16} + \delta}\;.
\end{equation}
\begin{remark}\label{rem:delta}
Suppose that $\xi = R^{*}_{\omega_{N}} \varphi$, with $\varphi$ a normalized $N$-particle state, $\varphi = (0, 0, \ldots, 0, \psi_{N}, 0,\ldots)$. Then,
\begin{eqnarray}
\langle R^{*}_{\omega_{N}} \varphi, \mathcal{N} R^{*}_{\omega_{N}} \varphi \rangle &=& \langle \varphi, \mathcal{N} \varphi\rangle + N  - 2\big\langle \varphi, \sum_{k} c^{*}_{k} c_{k} \varphi \big\rangle \nonumber\\
&=& 2N - 2 \tr\, \gamma_{\psi_{N}} \omega_{N} \equiv 2\tr\, \gamma_{\psi_{N}} (1 - \omega_{N})\;.
\end{eqnarray}
Suppose that $| \langle \varphi, \mathcal{H}_{N} \varphi \rangle - \mathcal{E}^{\text{HF}}_{N}(\omega_{N})| \leq C\varepsilon$. Then, Eq. (\ref{eq:Nimpro}) together with the bound for the kinetic energy (\ref{eq:kinbd}) implies:
\begin{equation}
\tr\, \gamma_{\psi_{N}} (1 - \omega_{N}) \leq C_{\delta} N^{\frac{7}{16} + \delta}\;.
\end{equation}
which proves Eq. (\ref{eq:dgo}).
\end{remark}

Therefore, the only negative $O(\varepsilon)$ term contributing to the lower bound for the correlation energy is produced by the estimate of the term
\begin{equation}\label{eq:defF}
\mathbb{Q}_{2,c} = \frac{1}{2N} \sum_{p, k, k'} \hat v(p) c_{k} c_{k'} b_{k'-p} b_{k+p} =: \mathbb{F}\;,
\end{equation}
that is by the creation of two excitations outside the Fermi ball, and two particles (also called holes) inside the Fermi ball; recall Eqs. (\ref{eq:Q2b1}), (\ref{eq:Q2b2}). The goal of the present section will be to improve this bound, by partially controlling this term using some of the kinetic energy $\mathbb{H}_{0}$. As a result, we will get a lower bound for the correlation energy, that agrees with second order perturbation theory. The next proposition proves the lower bound in Eq. (\ref{eq:corrres}) of Theorem \ref{thm:1}. 
\begin{proposition}[Lower bound for the correlation energy]\label{thm:cor} Under the assumptions of Theorem \ref{thm:1}, the following is true. For all $\psi\in \mathcal{F}$ such that $\langle \psi, \mathcal{N} \psi \rangle = N$ and $\|\psi\| = 1$, for all $\delta>0$:
\begin{eqnarray}\label{eq:corrlow}
\langle \psi, \mathcal{H}_{N} \psi\rangle &\geq& \mathcal{E}^{\text{HF}}_{N}(\omega_{N}) + C\| \hat{v} \|_{1} \langle R^{*}_{\omega_{N}} \psi, \mathbb{H}_{0} R_{\omega_{N}}^{*} \psi \rangle\nonumber\\&& - \frac{1}{1 - C\| \hat{v} \|_{1}} \langle \Omega, \mathbb{F} \mathbb{H}_{0}^{-1} \mathbb{F}^{*} \Omega \rangle - C_{\delta} N^{-\frac{9}{16} + \delta}\;,
\end{eqnarray}
where:
\begin{eqnarray}\label{eq:FHF}
\langle \Omega, \mathbb{F} \mathbb{H}_{0}^{-1} \mathbb{F}^{*} \Omega \rangle &=& \frac{1}{2 N^{2}} \sum_{p}\sum_{\substack{k, k':\, k, k'\in \mathcal{B}_{\mu} \\ k + p, k'-p \in \mathcal{B}_{\mu}^{c}}} \frac{\hat{v}(p)^{2}}{e(k+p) + e(k) + e(k'-p) + e(k')}\nonumber\\
&& - \frac{1}{2N^{2}} \sum_{p}\sum_{\substack{k, k':\, k, k'\in \mathcal{B}_{\mu} \\ k + p, k'-p \in \mathcal{B}_{\mu}^{c}}} \frac{\hat{v}(p) \hat{v}(p - k'+k)}{e(k+p) + e(k) + e(k'-p) + e(k')}\;.\nonumber\\
\end{eqnarray}
\end{proposition}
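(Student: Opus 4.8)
The plan is to start from the identity in Proposition \ref{prp:low1}, combining the positivity of $\mathbb{H}_0$ and of the $D_p^*D_p$ term in $\mathbb{Q}_1$ with the bounds of Proposition \ref{prp:boundsQ}, but being more careful with the single genuinely dangerous term, $\mathbb{Q}_{2,c} = \mathbb{F}$ and its adjoint $\mathbb{Q}_{2,b}=\mathbb{F}^*$ (recall Eq. \eqref{eq:defF}). First I would collect all the ``safe'' estimates: by Eqs. \eqref{eq:Xbd}, \eqref{eq:Q1pos}, \eqref{eq:Q1abd}, the first and third lines of \eqref{eq:Qs}, together with the improved number-operator bound of Corollary \ref{cor:Nimpro} (via Proposition \ref{prp:number}, with $\alpha$ optimized), all of $\mathbb{X}$, $\mathbb{Q}_{1,a}$, $\mathbb{Q}_{2,a}$, $\mathbb{Q}_3$ and the $\mathcal{N}/N$ contributions are bounded by $C\|\hat v\|_1 \langle\varphi,\mathbb{H}_0\varphi\rangle + C_\delta N^{-9/16+\delta}$ (choosing $\xi$ small in the $\mathbb{Q}_3$ bound so that the $D_p^*D_p$ part is absorbed into the positive $\mathbb{Q}_1$ term), where $\varphi = R^*_{\omega_N}\psi$. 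The point is that the only negative $O(\varepsilon)$ contribution now comes from $\langle\varphi,(\mathbb{F}+\mathbb{F}^*)\varphi\rangle$.

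The heart of the proof is then a lower bound on $\langle\varphi, (\mathbb{F}+\mathbb{F}^*)\varphi\rangle$ in terms of $\mathbb{H}_0$. Here I would invoke the completion-of-the-square / second-order perturbation theory lemma advertised in the introduction (Lemma \ref{prp:lower2}, in the spirit of \cite{youngchris}): for any $\lambda > 0$,
\begin{equation}
\mathbb{F} + \mathbb{F}^* \geq -\lambda\, \mathbb{H}_0 - \frac{1}{\lambda}\, \mathbb{F}\,\mathbb{H}_0^{-1}\,\mathbb{F}^*\,,
\end{equation}
which follows from $0 \le (\sqrt\lambda\,\mathbb{H}_0^{1/2} + \tfrac{1}{\sqrt\lambda}\mathbb{H}_0^{-1/2}\mathbb{F})^*(\cdots)$ once one checks that $\mathbb{F}^*\Omega'$ lies in the form domain of $\mathbb{H}_0^{-1}$, i.e. that $\mathbb{F}$ creates only states with $e(k+p)+e(k)+e(k'-p)+e(k')$ bounded below — which is true since the two holes and two particles all sit strictly on opposite sides of the Fermi surface with a gap of order $\varepsilon^2$ at worst, and the operator $\mathbb{H}_0^{-1}$ is applied to a vector supported on the $4$-excitation sector built from $\Omega' = R_{\omega_N}\Omega$. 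Then I would replace $\langle\varphi, \mathbb{F}\mathbb{H}_0^{-1}\mathbb{F}^*\varphi\rangle$ by its vacuum expectation $\langle\Omega, \mathbb{F}\mathbb{H}_0^{-1}\mathbb{F}^*\Omega\rangle$ up to a controllable error: since $\mathbb{F}^*$ acting on a general $\varphi$ differs from $\mathbb{F}^*$ acting on $\Omega'$ by terms involving extra $b^{(*)}, c^{(*)}$ pairs, the discrepancy is bounded by $C\|\hat v\|_1^2\langle\varphi,\mathbb{H}_0\varphi\rangle$ plus the usual $\frak{e}_K$-type and $N^{-9/16+\delta}$ errors, using $\|b_k\|,\|c_k\|\le 1$, the counting bound \eqref{eq:countnnopt}, and Lemma \ref{lem:H0}. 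Choosing $\lambda$ proportional to $\|\hat v\|_1$ then yields the coefficient $-1/(1-C\|\hat v\|_1)$ in front of $\langle\Omega,\mathbb{F}\mathbb{H}_0^{-1}\mathbb{F}^*\Omega\rangle$ and leaves a positive multiple of $\mathbb{H}_0$, which is the claimed inequality \eqref{eq:corrlow}.

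Finally I would verify the explicit formula \eqref{eq:FHF}. Since $\mathbb{F}^*\Omega = \frac{1}{2N}\sum_{p,k,k'}\hat v(p)\, b^*_{k+p}b^*_{k'-p}c^*_{k'}c^*_{k}\Omega$ restricted to $k,k'\in\mathcal{B}_\mu$, $k+p,k'-p\in\mathcal{B}_\mu^c$, and $\mathbb{H}_0$ acts diagonally with eigenvalue $e(k+p)+e(k)+e(k'-p)+e(k')$ on these $4$-excitation states, the computation of $\langle\Omega,\mathbb{F}\mathbb{H}_0^{-1}\mathbb{F}^*\Omega\rangle$ reduces to Wick's theorem for the anticommuting $b^*,c^*$: the ``direct'' contraction gives the $\hat v(p)^2$ term and the single ``exchange'' contraction (swapping the roles of $k+p\leftrightarrow k'-p$, equivalently $p\mapsto p-k'+k$) gives the $-\hat v(p)\hat v(p-k'+k)$ term; the relative minus sign is the fermionic signature. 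The main obstacle is the rigorous justification of the completion-of-the-square step, specifically controlling $\mathbb{H}_0^{-1/2}\mathbb{F}^*\varphi$ uniformly — one must ensure $\mathbb{F}^*$ never produces excitations with vanishing $\mathbb{H}_0$-energy (the ``no small denominators'' issue), which is exactly why it is essential that $\mathbb{F}$ simultaneously creates holes inside and particles outside $\mathcal{B}_\mu$, so the four energies never all vanish; and then showing that passing from $\varphi$ to $\Omega$ inside $\mathbb{F}\mathbb{H}_0^{-1}\mathbb{F}^*$ only costs $\|\hat v\|_1\langle\varphi,\mathbb{H}_0\varphi\rangle$, which requires the careful commutator bookkeeping of Lemma \ref{lem:H0} type estimates.
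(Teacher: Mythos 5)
Your plan follows essentially the same route as the paper: absorb $\mathbb{X}$, $\mathbb{Q}_{1}$, $\mathbb{Q}_{2,a}$, $\mathbb{Q}_{3}$ into the positive $D_p^*D_p$ term, a small multiple of $\mathbb{H}_0$ and $C_\delta N^{-9/16+\delta}$ errors via Proposition \ref{prp:boundsQ} and Corollary \ref{cor:Nimpro}, then treat $\mathbb{F}+\mathbb{F}^*$ by completing the square against $\mathbb{H}_0$ and normal-ordering $\mathbb{F}\mathbb{H}_0^{-1}\mathbb{F}^*$ to extract the vacuum expectation --- this is exactly Lemma \ref{prp:lower2}, and your Wick computation of \eqref{eq:FHF} matches the paper's. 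One slip: the parameter in the square completion must be taken close to one, $\lambda = 1 - O(\|\hat v\|_1)$ (the paper uses $\alpha = 1-3C\|\hat v\|_1$), not ``proportional to $\|\hat v\|_1$'' --- with $\lambda\sim\|\hat v\|_1$ the prefactor $-\lambda^{-1}$ of the vacuum term would diverge as $\|\hat v\|_1\to 0$, contradicting the coefficient $-1/(1-C\|\hat v\|_1)$ you correctly state. Be aware that the step you describe in one sentence --- replacing $\langle\varphi,\mathbb{F}\mathbb{H}_0^{-1}\mathbb{F}^*\varphi\rangle$ by $\langle\Omega,\mathbb{F}\mathbb{H}_0^{-1}\mathbb{F}^*\Omega\rangle$ at the cost of $C\|\hat v\|_1^2\langle\varphi,\mathbb{H}_0\varphi\rangle + C_\delta N^{-9/16+\delta}$ --- is where nearly all of the technical work lies (the case-by-case contraction analysis in the proof of Lemma \ref{prp:lower2}, using Lemma \ref{lem:H0} and Proposition \ref{prp:number}); your proposal correctly identifies the needed ingredients but does not carry it out.
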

\begin{remark}\label{rem:welldef}
The operator $\mathbb{H}_{0}^{-1}$ will only be defined on vectors $\varphi \in \mathbb{F}^{*} \mathcal{F}$, recall Eq. (\ref{eq:defF}). Clearly, $\varphi^{(n)} = 0$ for $n<4$. We define:
\begin{eqnarray}\label{eq:H-1}
(\mathbb{H}_{0}^{-1}\varphi)^{(n)}(k_{1}, \ldots, k_{n}) &:=& \frac{1}{\sum_{i=1}^{n} e(k_{i})} \varphi^{(n)}(k_{1}, \ldots, k_{n})\;,\quad n\geq 4 \nonumber\\
\qquad\qquad (\mathbb{H}_{0}^{-1}\varphi)^{(n)} &:=& 0\;,\qquad n<4\;.
\end{eqnarray}
Notice that $\varphi^{(n)}(k_{1},\ldots, k_{n})$ is supported on momenta such that $\sum_{i=1}^{n} e(k_{i}) > 0$. This follows from the fact that the particles created by the $b$ operators in $\mathbb{F}^{*}$ have energies strictly larger than the Fermi level $\mu$. The right-hand side of the first of Eq. (\ref{eq:H-1}) is understood as zero whenever $k_{1}, \ldots. k_{n}$ are outside the support of $\varphi^{(n)}(k_{1}, \ldots, k_{n})$.
\end{remark}
The proof of Proposition \ref{thm:cor} relies on the bounds of Proposition \ref{prp:boundsQ}, and on the following lemma, whose proof is inspired by a strategy used in the context of non-relativistic QED \cite[Sec. 3.2]{youngchris}, cf. \cite[Sec. 2.2]{CH} and \cite[Appendix A]{youngchris2}. 
\begin{lemma}\label{prp:lower2} Under the same assumptions of Proposition \ref{thm:cor}, the following inequality holds true, for all $\varphi \in \mathcal{F}$ such that $\|\varphi\| = 1$ and for all $\alpha > 0$:
\begin{eqnarray}\label{eq:youngchris}
\langle \varphi, \alpha \mathbb{H}_{0} \varphi \rangle + \langle \varphi, (\mathbb{F} + \mathbb{F}^{*}) \varphi \rangle &\geq& - \frac{1}{\alpha} \langle \Omega, \mathbb{F} \mathbb{H}_{0}^{-1} \mathbb{F}^{*} \Omega \rangle - \frac{C\| \hat{v} \|_{1}^{2}}{\alpha} \langle \varphi, \mathbb{H}_{0} \varphi \rangle\nonumber\\ &&- C_{\delta} N^{-\frac{9}{16} + \delta}\;.
\end{eqnarray}
\end{lemma}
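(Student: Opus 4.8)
\textbf{Proof strategy for Lemma \ref{prp:lower2}.}

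The plan is to complete the square. The key observation is that $\mathbb{F}^{*}\Omega$ is an eigenvector-like object for $\mathbb{H}_{0}^{-1}$ in the sense of Remark \ref{rem:welldef}: since $\mathbb{F}^{*}$ creates two holes inside $\mathcal{B}_{\mu}$ and two particles outside, every vector in the range of $\mathbb{F}^{*}$ restricted to $\Omega$ lives in the strictly positive part of the spectrum of $\mathbb{H}_{0}$, so $\mathbb{H}_{0}^{-1}\mathbb{F}^{*}\Omega$ is well-defined and $\mathbb{H}_{0}(\mathbb{H}_{0}^{-1}\mathbb{F}^{*}\Omega) = \mathbb{F}^{*}\Omega$. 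With this in hand, I would introduce the shifted vector
\begin{equation}
\chi := \varphi + \tfrac{1}{\alpha} \mathbb{H}_{0}^{-1}\mathbb{F}^{*}\Omega\;,
\end{equation}
and compute $\langle \chi, \alpha\mathbb{H}_{0} \chi\rangle \geq 0$. Expanding,
\begin{equation}
0 \leq \alpha\langle \varphi, \mathbb{H}_{0}\varphi\rangle + 2\Re\langle \varphi, \mathbb{H}_{0}\mathbb{H}_{0}^{-1}\mathbb{F}^{*}\Omega\rangle + \tfrac{1}{\alpha}\langle \mathbb{F}^{*}\Omega, \mathbb{H}_{0}^{-1}\mathbb{F}^{*}\Omega\rangle\;,
\end{equation}
and the middle term is $2\Re\langle \varphi, \mathbb{F}^{*}\Omega\rangle$. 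Rearranging gives a lower bound of the form $\alpha\langle\varphi,\mathbb{H}_{0}\varphi\rangle + 2\Re\langle\varphi,\mathbb{F}^{*}\Omega\rangle \geq -\tfrac{1}{\alpha}\langle\Omega,\mathbb{F}\mathbb{H}_{0}^{-1}\mathbb{F}^{*}\Omega\rangle$, which is structurally the desired inequality, \emph{except} that the true middle term in the statement is $\langle\varphi,(\mathbb{F}+\mathbb{F}^{*})\varphi\rangle$, not $2\Re\langle\varphi,\mathbb{F}^{*}\Omega\rangle$.

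The heart of the argument is therefore to show that $\langle\varphi,(\mathbb{F}+\mathbb{F}^{*})\varphi\rangle$ and $2\Re\langle\varphi,\mathbb{F}^{*}\Omega\rangle$ agree up to errors absorbable into $C\|\hat v\|_{1}\alpha^{-1}\langle\varphi,\mathbb{H}_{0}\varphi\rangle$ and $C_{\delta}N^{-9/16+\delta}$. To see why one expects this, note $\langle\varphi,\mathbb{F}^{*}\varphi\rangle = \langle\mathbb{F}\varphi,\varphi\rangle$; writing $\mathbb{F} = \tfrac{1}{2N}\sum_{p,k,k'}\hat v(p)\,c_k c_{k'} b_{k'-p} b_{k+p}$ and commuting the two $c$-operators past $\varphi$ to land on $\Omega$, one replaces $\varphi$ on the right by its ``vacuum component'' plus a remainder; the remainder involves additional $b^{*}$ or $c^{*}$ excitations and is precisely of the type controlled by Lemma \ref{lem:H0} and the $\mathbb{Q}_{2,b}$-type estimates of Proposition \ref{prp:boundsQ}. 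Concretely I would estimate $|\langle\varphi,\mathbb{F}^{*}\varphi\rangle - \langle\varphi,\mathbb{F}^{*}\Omega\rangle\langle\Omega,\varphi\rangle|$ by Cauchy--Schwarz applied after writing $\mathbb{F}\varphi$ as a sum over $p$ of $\hat v(p)^{1/2}\cdot\hat v(p)^{1/2}(\ldots)$, bounding one factor by Lemma \ref{lem:H0} (giving $\|\mathbb{H}_0^{1/2}\varphi\|$) and the other by the norm of $\sum_k b_{k+p}c_k$ acting on a state differing from $\varphi$ by few excitations, so that the total number-operator count on the non-vacuum piece is controlled via Corollary \ref{cor:Nimpro} and the bound $\langle\varphi,\mathcal{N}\varphi\rangle \lesssim N^{1-\alpha} + N^{7/16+\delta} + N^{\alpha}\langle\varphi,\mathbb{H}_0\varphi\rangle$. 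The large-momentum tail of $\hat v$ produces the usual $\frak{e}_K$-type error, swallowed into $C_\delta N^{-9/16+\delta}$ after choosing $K$ large.

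\textbf{Main obstacle.} The genuinely delicate point is not the completion of the square — that is algebra once $\mathbb{H}_0^{-1}\mathbb{F}^{*}\Omega$ is seen to be well-defined — but the replacement $\langle\varphi,(\mathbb{F}+\mathbb{F}^{*})\varphi\rangle \rightsquigarrow 2\Re\langle\varphi,\mathbb{F}^{*}\Omega\rangle$, i.e.\ controlling the commutators generated when the annihilation operators in $\mathbb{F}$ fail to hit the vacuum. One must show these cross-terms are bounded by $\eta\langle\varphi,\mathbb{H}_0\varphi\rangle + C_\eta(\text{number-operator terms})$ with a constant $\eta$ proportional to $\|\hat v\|_1$ (and $\alpha$-dependence matching the $\alpha^{-1}$ in the statement), so that after the smallness assumption $\|\hat v\|_1 \leq v_0$ they can be reabsorbed on the left-hand side. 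This requires using Lemma \ref{lem:H0} with $I_\mu(p) \leq C$ uniformly in $p$ (for $|p| \leq N^\zeta$), the positivity $\hat v(p)\geq 0$ so that $\sum_p \hat v(p)(\cdots)$ can be split by Cauchy--Schwarz with weights $\hat v(p)^{1/2}$, and the sharp lattice-point count of Proposition \ref{prp:number} to guarantee the number-operator contributions are $o(\varepsilon)$, in fact $O(N^{-9/16+\delta})$ after optimizing $\alpha$. Keeping track that the final error is $N^{-9/16+\delta}$ and not merely $o(\varepsilon)=o(N^{-1/3})$ is where the refined Gauss-sphere bound (\ref{eq:count}) is essential.
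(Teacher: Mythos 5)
Your completion of the square is not the one the paper uses, and the difference is where the argument breaks. Shifting $\varphi$ by the \emph{fixed vector} $\alpha^{-1}\mathbb{H}_{0}^{-1}\mathbb{F}^{*}\Omega$ produces the cross term $2\Re\langle\varphi,\mathbb{F}^{*}\Omega\rangle$, so you must then prove the bridging inequality $\langle\varphi,(\mathbb{F}+\mathbb{F}^{*})\varphi\rangle\geq 2\Re\langle\varphi,\mathbb{F}^{*}\Omega\rangle-C\|\hat v\|_{1}^{2}\alpha^{-1}\langle\varphi,\mathbb{H}_{0}\varphi\rangle-C_{\delta}N^{-9/16+\delta}$. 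This is false. Take $\varphi=\mathbb{H}_{0}^{-1}\mathbb{F}^{*}\Omega/\|\mathbb{H}_{0}^{-1}\mathbb{F}^{*}\Omega\|$: since $\mathbb{F}$ lowers the particle number by four and $\varphi\in\mathcal{F}^{(4)}$, one has $\langle\varphi,(\mathbb{F}+\mathbb{F}^{*})\varphi\rangle=0$, while $2\Re\langle\varphi,\mathbb{F}^{*}\Omega\rangle=2\langle\Omega,\mathbb{F}\mathbb{H}_{0}^{-1}\mathbb{F}^{*}\Omega\rangle\,\langle\Omega,\mathbb{F}\mathbb{H}_{0}^{-2}\mathbb{F}^{*}\Omega\rangle^{-1/2}\sim\varepsilon\|\hat v\|_{1}$, whereas the allowed error is only $O(\|\hat v\|_{1}^{2}\varepsilon)+O(N^{-9/16+\delta})$ because $\langle\varphi,\mathbb{H}_{0}\varphi\rangle\sim\varepsilon$ for this state. (Your write-up also wavers between $2\Re\langle\varphi,\mathbb{F}^{*}\Omega\rangle$ and $\langle\varphi,\mathbb{F}^{*}\Omega\rangle\langle\Omega,\varphi\rangle$, which are different objects unless $\varphi$ is close to $\Omega$.) Moreover, the tools you propose for the bridging --- Lemma \ref{lem:H0} combined with the $\mathbb{Q}_{2,b}$-type estimates of Proposition \ref{prp:boundsQ} --- inevitably regenerate an error $C\varepsilon\||p|\hat v\|_{1}$, linear in the potential; that is precisely the loss in Corollary \ref{prp:lower} which Lemma \ref{prp:lower2} exists to remove, so the scheme is circular at the order that matters.

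The paper sidesteps the bridging entirely by completing the square with the \emph{operator} shift, writing the left-hand side as $\langle(1+(\alpha\mathbb{H}_{0})^{-1}\mathbb{F}^{*})\varphi,\alpha\mathbb{H}_{0}(1+(\alpha\mathbb{H}_{0})^{-1}\mathbb{F}^{*})\varphi\rangle-\alpha^{-1}\langle\varphi,\mathbb{F}\mathbb{H}_{0}^{-1}\mathbb{F}^{*}\varphi\rangle$, i.e.\ shifting by $\alpha^{-1}\mathbb{H}_{0}^{-1}\mathbb{F}^{*}\varphi$ rather than $\alpha^{-1}\mathbb{H}_{0}^{-1}\mathbb{F}^{*}\Omega$. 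The cross term is then exactly $\langle\varphi,(\mathbb{F}+\mathbb{F}^{*})\varphi\rangle$, and positivity of $\mathbb{H}_{0}$ gives $\geq-\alpha^{-1}\langle\varphi,\mathbb{F}\mathbb{H}_{0}^{-1}\mathbb{F}^{*}\varphi\rangle$ with no approximation. The substance of the proof --- entirely absent from your proposal --- is then to normal-order $\mathbb{F}\mathbb{H}_{0}^{-1}\mathbb{F}^{*}$ (after commuting the resolvent through the creation operators, which shifts $\mathbb{H}_{0}^{-1}$ to $(\mathbb{H}_{0}+E(r_{1},q)+E(r_{2},-q))^{-1}$), identify the fully contracted part with $\langle\Omega,\mathbb{F}\mathbb{H}_{0}^{-1}\mathbb{F}^{*}\Omega\rangle$, and bound the residual normal-ordered monomials $\mathcal{W}_{2j}$ by $C\|\hat v\|_{1}^{2}\langle\varphi,\mathbb{H}_{0}\varphi\rangle+C_{\delta}N^{-9/16+\delta}$ via the case analysis over contractions ($a=0,1,2$), using Lemma \ref{lem:H0}, the operator inequalities for $\mathbb{H}_{0}^{1/2}(\mathbb{H}_{0}+E)^{-1/2}$, and the refined lattice-point count of Proposition \ref{prp:number}. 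You would need to supply that entire estimate, and your shifted-vector setup does not lead to it.
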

Before discussing the proof of Lemma \ref{prp:lower2}, let us show how it can be used to prove Proposition \ref{thm:cor}.

\begin{proof}(of Proposition \ref{thm:cor}.) 
Consider the operators $\mathbb{Q}_{1}$, $\mathbb{Q}_{2,a}$, $\mathbb{Q}_{3}$. By Proposition \ref{prp:boundsQ}, we have:
\begin{equation}
\langle \varphi, \widetilde{\mathbb{Q}} \varphi \rangle \geq \frac{1}{4N} \sum_{p} \hat{v}(p) \langle \varphi, D^{*}_{p} D_{p}\varphi \rangle - C\| \hat{v} \|_{1} \langle \varphi, \mathbb{H}_{0} \varphi\rangle - C_{\delta} N^{-\frac{9}{16} + \delta} \;.
\end{equation}
This bound together with the identity Eq. (\ref{eq:low1}) implies:
\begin{eqnarray}
\langle \psi, \mathcal{H}_{N} \psi \rangle &\geq& \mathcal{E}^{\text{HF}}_{N}(\omega_{N}) + (1 - \widetilde C\| \hat{v}\|_{1}) \langle R^{*}_{\omega_{N}}\psi, \mathbb{H}_{0} R^{*}_{\omega_{N}} \psi \rangle + \langle R^*_{\omega_{N}}\varphi, \mathbb{X} R^*_{\omega_{N}}\varphi \rangle\nonumber\\
&& + \langle R^{*}_{\omega_{N}} \psi, (\mathbb{F} + \mathbb{F}^{*}) R^{*}_{\omega_{N}} \psi \rangle - \widetilde C_{\delta} N^{-\frac{9}{16} + \delta}\nonumber\\
&\geq&  \mathcal{E}^{\text{HF}}_{N}(\omega_{N}) + (1 - C\|\hat{v}\|_{1}) \langle R^{*}_{\omega_{N}}\psi, \mathbb{H}_{0} R^{*}_{\omega_{N}} \psi \rangle \nonumber\\
&& + \langle R^{*}_{\omega_{N}} \psi, (\mathbb{F} + \mathbb{F}^{*}) R^{*}_{\omega_{N}} \psi \rangle - \widetilde C_{\delta} N^{-\frac{9}{16} + \delta}
\end{eqnarray}
where in the last step we used the bound (\ref{eq:Xbd}) for $\mathbb{X}$, and the improved bound for the number operator of Corollary \ref{cor:Nimpro}. We rewrite this last lower bound by isolating a small part of kinetic energy:
\begin{eqnarray}\label{eq:proofthm}
\langle \psi, \mathcal{H}_{N} \psi \rangle &\geq& \mathcal{E}^{\text{HF}}_{N}(\omega_{N}) + 2C\|\hat{v}\|_{1} \langle R^{*}_{\omega_{N}} \psi, \mathbb{H}_{0} R^{*}_{\omega_{N}} \psi \rangle  - \widetilde C_{\delta} N^{-\frac{9}{16} + \delta}\\
&& + (1-3C\|\hat{v}\|_{1}) \langle R^{*}_{\omega_{N}} \psi, \mathbb{H}_{0} R^{*}_{\omega_{N}} \psi \rangle + \langle R^{*}_{\omega_{N}} \psi, (\mathbb{F} + \mathbb{F}^{*}) R^{*}_{\omega_{N}} \psi \rangle\;,\nonumber
\end{eqnarray}
and we apply Lemma \ref{prp:lower2} to the last two terms in Eq. (\ref{eq:proofthm}), choosing $\alpha = 1 - 3 C\| \hat{v} \|_{1}$. We get:
\begin{eqnarray}
&&\langle \psi, \mathcal{H}_{N} \psi\rangle \geq  \mathcal{E}^{\text{HF}}_{N}(\omega_{N}) + 2 C\| \hat{v} \|_{1} \langle R^{*}_{\omega_{N}} \psi, \mathbb{H}_{0} R_{\omega_{N}}^{*} \psi \rangle - \widetilde C_{\delta} N^{-\frac{9}{16} + \delta} \nonumber\\&&  -\frac{1}{1 - 3C\| \hat{v} \|_{1}} \langle \Omega, \mathbb{F} \mathbb{H}_{0}^{-1} \mathbb{F}^{*} \Omega \rangle - \frac{C\| \hat{v} \|_{1}^{2}}{1-3C\| \hat{v} \|_{1}} \langle R^{*}_{\omega_{N}} \psi, \mathbb{H}_{0} R_{\omega_{N}}^{*} \psi \rangle\nonumber\\&& - C_{\delta} N^{-\frac{9}{16} + \delta}\;.
\end{eqnarray}
Our final claim, Eq. (\ref{eq:corrlow}), follows after suitably redefining the constant $C$.
\end{proof}
To conclude, let us discuss the proof of Lemma \ref{prp:lower2}.
\begin{proof} (of Lemma \ref{prp:lower2}.) The starting point is to rewrite, for any $\alpha > 0$:
\begin{eqnarray}\label{eq:chris}
\langle \varphi, \alpha \mathbb{H}_{0} \varphi \rangle  + \langle \varphi, (\mathbb{F} + \mathbb{F}^{*}) \varphi \rangle &=& \langle (1 + (\alpha \mathbb{H}_{0})^{-1} \mathbb{F}^{*})\varphi , \alpha \mathbb{H}_{0} (1 + (\alpha\mathbb{H}_{0})^{-1} \mathbb{F}^{*}) \varphi \rangle \nonumber\\
&& - \frac{1}{\alpha} \langle \varphi, \mathbb{F} \mathbb{H}_{0}^{-1} \mathbb{F}^{*} \varphi \rangle\nonumber\\
&\geq& - \frac{1}{\alpha} \langle \varphi, \mathbb{F} \mathbb{H}_{0}^{-1} \mathbb{F}^{*} \varphi \rangle\;,
\end{eqnarray}
where in the last step we used the positivity of $\mathbb{H}_{0}$. Now, the idea is to put the right-hand side of the inequality into normal order; the fully contracted term will give rise to the vacuum expectation in Eq. (\ref{eq:youngchris}), while the error terms will be proven to be either controlled by the kinetic energy or to be $O(N^{-\frac{9}{16} + \delta})$. We have, for any vector $\varphi \in \mathcal{F}$:
\begin{eqnarray}\label{eq:FstarF}
&&\mathbb{F} \mathbb{H}_{0}^{-1} \mathbb{F}^{*}\varphi \nonumber\\&&\qquad = \frac{1}{4 N^{2}} \sum_{p,q} \hat{v}(p) \hat{v}(q)\sum_{\substack{ k_{1},k_{2} \\ r_{1},r_{2} }}b_{k_{1}+p} c_{k_{1}} b_{k_{2}-p} c_{k_{2}} \mathbb{H}_{0}^{-1} b^{*}_{r_{1}+q} c^{*}_{r_{1}} b^{*}_{r_{2}-q} c^{*}_{r_{2}}\varphi\nonumber\\
&&\qquad =\frac{1}{4 N^{2}} \sum_{p,q} \hat{v}(p) \hat{v}(q)\sum_{\substack{ k_{1},k_{2} \\ r_{1},r_{2} }} b_{k_{1}+p} c_{k_{1}} b_{k_{2}-p} c_{k_{2}} b^{*}_{r_{1}+q} c^{*}_{r_{1}} b^{*}_{r_{2}-q} c^{*}_{r_{2}}\nonumber\\&&\qquad\quad\cdot \frac{1}{\mathbb{H}_{0} + E(r_{1},q) + E(r_{2},-q)} \varphi\;,
\end{eqnarray}
where $E(r,q) = e(r+q) + e(r)$. 
%
%
Eq. (\ref{eq:FstarF}) follows from the definition (\ref{eq:H-1}); observe that the operators $b^{*}_{r_{1} + q}$, $c^{*}_{r_{1}}$ add particles with energies $e(r_{1} + q)$, $e(r_{1})$, whose sum is $E(k_{1}, q)$, while the operator $b^{*}_{r_{2}-q}$, $c^{*}_{r_{2}}$ add particles with energies $e(r_{2}-q)$, $e(r_{2})$, whose sum is $E(k_{2},-q)$. Let us now put the fermionic monomial into normal order. We have:
\begin{eqnarray}
b_{k_{1}+p} c_{k_{1}} b_{k_{2}-p} c_{k_{2}} b^{*}_{r_{1}+q} c^{*}_{r_{1}} b^{*}_{r_{2}-q} c^{*}_{r_{2}} &=& \sum_{j=0}^{4} m_{2j}\;,
\end{eqnarray}
where $m_{2j}$ is a sum of normal-ordered fermionic monomials of order $2j$. Correspondingly, we rewrite
\begin{equation}\label{eq:Wj}
\mathbb{F} \mathbb{H}_{0}^{-1} \mathbb{F}^{*} = \langle \Omega, \mathbb{F} \mathbb{H}_{0}^{-1} \mathbb{F}^{*} \Omega \rangle + \sum_{j=1}^{4} \mathcal{W}_{2j}\;.
\end{equation}
We shall now estimate the $\mathcal{W}_{2j}$ terms, for $|p|\leq N^{\zeta}$, $|q|\leq N^{\zeta}$ for $\zeta$ as in Lemma \ref{lem:H0}. The contribution of the higher momenta can be proven to be small, thanks to the decay of the potential $\hat v(p)$. In fact, since $E(r_{1}, q)$ and $E(r_{2},-q)$ are greater than $\eps^{2}$, the contribution to (\ref{eq:FstarF}) of terms where at least one among $p$ and $q$ is greater that $N^{\zeta}$ is bounded in norm as:
\begin{equation}
\frac{C}{N^{2} \eps^{2}} (N\eps)^{4} N^{-\zeta K} \| |p|^{K} \hat v \|_{1} \|\varphi\| \leq C_{K} N^{\frac{4}{3} - \zeta K} \|\varphi\|\;,
\end{equation}
which is $o(\eps)$ for $K$ large enough.

Consider now the terms $\mathcal{W}_{2j}$ for momenta $|p|, |q|$ less than $N^{\zeta}$. Being normal ordered, the idea will be to control these terms with the kinetic energy $\mathbb{H}_{0}$ times a small constant, coming from the fact that all these terms are of second order in the interaction potential. To this end, let us introduce some notation. Let $p_{1}\equiv p$, $p_{2} \equiv -p$, $q_{1} \equiv q$, $q_{2}\equiv -q$. We say that two monomials $b_{k_{i}+p_{i}} c_{k_{i}}$ and $b^{*}_{r_{i} + q_{i}} c^{*}_{r_{i}}$ have been {\it contracted} if, in the normal ordering procedure giving rise to (\ref{eq:Wj}), they produced a nontrivial anticommutator. We shall denote by $a \in \{0,1,2\}$ the number of contracted $bc$, $b^{*} c^{*}$ monomials.



%
%
\medskip

\noindent{\underline{Case $a=0$.}} Suppose that no fermionic operators have been contracted. Let $E(\underline{r}, \underline{q}) = E(r_{1}, q_{1}) + E(r_{2}, q_{2})$. The corresponding contribution to (\ref{eq:Wj}) is:
\begin{eqnarray}\label{eq:gen0}
&&\Big| \langle \varphi, b^{*}_{r_{1}+q_{1}} c^{*}_{r_{1}} b^{*}_{r_{2}+q_{2}} c^{*}_{r_{2}}  b_{k_{1}+p_{1}} c_{k_{1}} b_{k_{2}+p_{2}} c_{k_{2}} \frac{1}{\mathbb{H}_{0} + E(\underline{r}, \underline{q})} \varphi \rangle\Big| \nonumber\\
&&\leq \Big\|  c_{r_{2}} b_{r_{2}+q_2} c_{r_{1}} b_{r_{1}+q_1} \frac{1}{(\mathbb{H}_{0} + E(\underline{k}, \underline{p}))^{1/2}} \varphi  \Big\|\nonumber\\
&&\quad \cdot \Big\| b_{k_{1}+p_1} c_{k_{1}} b_{k_{2}+p_2} c_{k_{2}} \frac{1}{(\mathbb{H}_{0} + E(\underline{r}, \underline{q}))^{1/2}} \varphi \Big\|\;.
\end{eqnarray}
Therefore, by Lemma \ref{lem:H0}:
\begin{eqnarray}\label{eq:a=2}
&&\sum_{\substack{k_{i}, r_{j}: \\ k_{i}, r_{j} \in \mathcal{B}_{\mu} \\ r_{i} + q_{i}, k_{j} + p_{j} \in \mathcal{B}_{\mu}^{c}}}|(\ref{eq:gen0})| \nonumber\\
&&\leq CN \sum^{*}_{r_{1}, k_{2}}  \Big\|  \mathbb{H}_{0}^{1/2} c_{r_{1}} b_{r_{1}+q_{1}} \frac{1}{(\mathbb{H}_{0} + E(k_{2}, p_{2}))^{1/2}} \varphi  \Big\| \nonumber\\
&&\quad\qquad\qquad \cdot \Big\| \mathbb{H}_{0}^{1/2} b_{k_{2}+p_{2}} c_{k_{2}} \frac{1}{(\mathbb{H}_{0} + E(r_{1}, q_{1}))^{1/2}} \varphi \Big\| \nonumber\\
&&\leq CN \sum^{*}_{r_{1}, k_{2}}  \Big\|   c_{r_{1}} b_{r_{1}+q_{1}} \varphi  \Big\| \Big\| b_{k_{2}+p_{2}} c_{k_{2}} \varphi \Big\| \nonumber\\
&&\leq C N^{2} \langle \varphi, \mathbb{H}_{0} \varphi \rangle\;,
\end{eqnarray}
where the asterisk recalls the constraints on the momenta. The first inequality in Eq. (\ref{eq:a=2}) follows from Lemma \ref{lem:H0} and from the trivial bounds $E(\underline{r}, \underline{q}) \geq E(r_{1}, q_{1})$, $E(\underline{k}, \underline{p}) \geq E(k_{2}, p_{2})$. The second inequality follows from:
\begin{eqnarray}
&&\mathbb{H}_{0}^{1/2} c_{r_{1}} b_{r_{1}+q_{1}} \frac{1}{(\mathbb{H}_{0} + E(k_{2}, p_{2}))^{1/2}} \\&&\qquad\qquad\qquad =   \frac{\mathbb{H}_{0}^{1/2}}{(\mathbb{H}_{0} + E(r_{1},q_{1}) + E(k_{2}, p_{2}))^{1/2}} c_{r_{1}} b_{r_{1}+q_{1}}\nonumber\\
&&\mathbb{H}_{0}^{1/2} b_{k_{2} + p_{2}} c_{k_{2}} \frac{1}{(\mathbb{H}_{0} + E(r_{1}, q_{1}))^{1/2}}\nonumber\\ &&\qquad\qquad\qquad =   \frac{\mathbb{H}_{0}^{1/2}}{(\mathbb{H}_{0} + E(k_{2},p_{2}) + E(r_{1}, q_{1}))^{1/2}} b_{k_{2} + p_{2}} c_{k_{2}}\nonumber
\end{eqnarray}
and from
\begin{equation}
\frac{\mathbb{H}_{0}^{1/2}}{(\mathbb{H}_{0} + E(k_{2},p_{2}) + E(r_{1}, q_{1}))^{1/2}} \leq 1\;.
\end{equation}
The last step in Eq. (\ref{eq:a=2}) follows again from Lemma \ref{lem:H0}. This concludes the discussion of the $a=0$ case.
\medskip

\noindent{{\underline{Case $a=1$.}}} Suppose that only one pair of $c$, $c^{*}$ operators has been contracted. The corresponding contribution is:
\begin{eqnarray}\label{eq:gen2}
&&\Big|\langle \varphi, b^{*}_{r_{1} + q_{1}} c^{*}_{r_{1}} b^{*}_{r_{2} + q_{2}} b_{k_{1} + p_{1}} c_{k_{1}} b_{r_{2} + p_{2}} \frac{1}{\mathbb{H}_{0} + E(\underline{r}, \underline{q})} \varphi\rangle\Big|\delta_{k_{2},r_{2}}\\
&& \leq \Big\|  b_{r_{2} + q_{2}} c_{r_{1}} b_{r_{1} + q_{1}}  \frac{1}{(\mathbb{H}_{0} + E(k_{1}, p_{1}) + E(r_{2}, p_{2}))^{1/2}} \varphi \Big\|\nonumber\\&&\quad \cdot \Big\| b_{k_{1} + p_{1}} c_{k_{1}} b_{r_{2} + p_{2}} \frac{1}{(\mathbb{H}_{0} + E(\underline{r}, \underline{q}))^{1/2}} \varphi \Big\| \delta_{k_{2}, r_{2}}\;.\nonumber
\end{eqnarray}
Therefore, by Lemma \ref{lem:H0}:
\begin{eqnarray}\label{eq:case1}
&&\sum_{\substack{k_{i}, r_{j}: \\ k_{i}, r_{j} \in \mathcal{B}_{\mu} \\ r_{i} + q_{i}, k_{j} + p_{j} \in \mathcal{B}_{\mu}^{c}}}|(\ref{eq:gen2})| \\
&&\leq CN\cdot  \sum_{\substack{r_{2}:\, r_{2} \in \mathcal{B}_{\mu}\\ r_{2} + p_{2} \in \mathcal{B}_{\mu}^{c},\, r_{2} + q_{2} \in \mathcal{B}_{\mu}^{c}}} \Big\| \mathbb{H}_{0}^{1/2}b_{r_{2} + q_{2}} \frac{1}{(\mathbb{H}_{0} + E(r_{2},p_{2}))^{1/2}} \varphi \Big\| \nonumber\\&&\qquad\qquad\qquad\qquad\qquad\cdot \Big\| \mathbb{H}_{0}^{1/2} b_{r_{2} + p_{2}}  \frac{1}{(\mathbb{H}_{0} + E(r_{2}, q_{2}))^{1/2}}\varphi \Big\|\nonumber\\
&&\leq CN\cdot  \sum_{\substack{r_{2}:\, r_{2} \in \mathcal{B}_{\mu}\\ r_{2} + p_{2} \in \mathcal{B}_{\mu}^{c},\, r_{2} + q_{2} \in \mathcal{B}_{\mu}^{c}}} \Big\| b_{r_{2} + q_{2}} \frac{\mathbb{H}_{0}^{1/2}}{(\mathbb{H}_{0} + E(r_{2},p_{2}))^{1/2}} \varphi \Big\| \nonumber\\&&\qquad\qquad\qquad\qquad\qquad\cdot \Big\|  b_{r_{2} + p_{2}}  \frac{\mathbb{H}_{0}^{1/2}}{(\mathbb{H}_{0} + E(r_{2}, q_{2}))^{1/2}}\varphi \Big\|\;.\nonumber
\end{eqnarray}
We now split the sum as:
\begin{eqnarray}
\sum_{\substack{r_{2}:\, r_{2} \in \mathcal{B}_{\mu}\\ r_{2} + p_{2} \in \mathcal{B}_{\mu}^{c},\, r_{2} + q_{2} \in \mathcal{B}_{\mu}^{c}}} &=& \sum^{*}_{\substack{r_{2}: \\ \text{$e(r_{2} + q_{2}) \leq 1/N$ or $e(r_{2} + p_{2}) \leq 1/N$}}}\nonumber\\&& + \sum^{*}_{\substack{r_{2}: \\ \text{$e(r_{2} + q_{2}) > 1/N$ and $e(r_{2} + p_{2}) > 1/N$}}}
\end{eqnarray}
where the asterisk recalls the original constraints $r_{2}\in \mathcal{B}_{\mu}$, $r_{2} + p_{2} \in \mathcal{B}_{\mu}^{c}$, $r_{2} + q_{2} \in \mathcal{B}_{\mu}^{c}$. Correspondingly, we rewrite the right-hand side of Eq. (\ref{eq:case1}) as $\text{I} + \text{II}$. Consider $\text{I}$. Using that the number of terms in the sum is bounded by $C_{\delta} N^{1 - \frac{9}{16} + \delta}$ (recall Eq. (\ref{eq:numbth})), that $\|b_{k}\|\leq 1$ and that $\|\varphi\|=1$:
\begin{equation}\label{eq:gen3I}
|\text{I}| \leq C_{\delta}N^{2 - \frac{9}{16} + \delta}\;.
\end{equation}
Consider now $\text{II}$. By the Cauchy-Schwarz inequality, we have:
\begin{eqnarray}\label{eq:gen3II}
|\text{II}| &\leq& CN^{2} \Big(\sum^{*}_{r_{2}:\, e(r_{2}+q_{2})>1/N } \frac{1}{N} \Big\| b_{r_{2} + q_{2}} \frac{\mathbb{H}_{0}^{1/2}}{(\mathbb{H}_{0} + E(r_{2},p_{2}))^{1/2}} \varphi \Big\|^{2}\Big)^{1/2} \nonumber\\&&\cdot \Big( \sum^{*}_{r_{2}:\, e(r_{2}+p_{2})>1/N } \frac{1}{N} \Big\|  b_{r_{2} + p_{2}}  \frac{\mathbb{H}_{0}^{1/2}}{(\mathbb{H}_{0} + E(r_{2}, q_{2}))^{1/2}}\varphi \Big\|  \Big)^{1/2}\nonumber\\
&\leq& CN^{2} \langle \varphi, \mathbb{H}_{0} \varphi \rangle\;,
\end{eqnarray}
where in the last step we used that:
\begin{equation}
b_{r_{2} + q_{2}} \frac{\mathbb{H}_{0}^{1/2}}{(\mathbb{H}_{0} + E(r_{2},p_{2}))^{1/2}} = \frac{(\mathbb{H}_{0} + e(r_{2} + q_{2}))^{1/2}}{(\mathbb{H}_{0} + e(r_{2} + q_{2})+ E(r_{2},p_{2}))^{1/2}} b_{r_{2} + q_{2}} 
\end{equation}
and that:
\begin{equation}
\frac{(\mathbb{H}_{0} + e(r_{2} + q_{2}))^{1/2}}{(\mathbb{H}_{0} + e(r_{2} + q_{2})+ E(r_{2},p_{2}))^{1/2}} \leq 1\;.
\end{equation}
Therefore,
\begin{equation}
\sum_{\substack{k_{i}, r_{j}: \\ k_{i}, r_{j} \in \mathcal{B}_{\mu} \\ r_{i} + q_{i}, k_{j} + p_{j} \in \mathcal{B}_{\mu}^{c}}}|(\ref{eq:gen2})| \leq C_{\delta}N^{2 - \frac{9}{16} + \delta} + CN^{2} \langle \varphi, \mathbb{H}_{0} \varphi \rangle\;.
\end{equation}
Suppose now that only one pair of $b, b^{*}$ operators has been contracted. We have:
\begin{eqnarray}\label{eq:gen3}
&&\Big|\langle \varphi, b^{*}_{r_{1} + q_{1}} c^{*}_{r_{1}} c^{*}_{r_{2}} b_{k_{1} + p_{1}} c_{k_{1}} c_{r_{2} - p_{2} + q_{2}} \frac{1}{\mathbb{H}_{0} + E(\underline{r}, \underline{q})} \varphi\rangle\Big|\delta_{k_{2}+p_{2}, r_{2}+q_{2}} \nonumber\\
&&\leq  \Big\| c_{r_{2}} c_{r_{1}} b_{r_{1}+q_{1}} \frac{1}{(\mathbb{H}_{0} + E(r_{2} + q_{2}, -p_{2}) + E(k_{1}, p_{1}))^{1/2}}\varphi \Big\| \\
&&\quad\cdot \Big\| b_{k_{1} + p_{1}} c_{k_{1}} c_{r_{2} - p_{2} + q_{2}} \frac{1}{(\mathbb{H}_{0} + E(r_{1}, q_{1}) + E(r_{2}, q_{2}))^{1/2}} \varphi \Big\|\delta_{k_{2}+p_{2}, r_{2}+q_{2}}\;.\nonumber
\end{eqnarray}
Therefore, by Lemma \ref{lem:H0}:
\begin{eqnarray}\label{eq:gen4}
&&\sum_{\substack{k_{i}, r_{j}: \\ k_{i}, r_{j} \in \mathcal{B}_{\mu} \\ r_{i} + q_{i}, k_{j} + p_{j} \in \mathcal{B}_{\mu}^{c}}}|(\ref{eq:gen3})| \\
&&\leq CN\cdot  \sum_{\substack{r_{2}: \\ r_{2}\in \mathcal{B}_{\mu}, r_{2} + p_{2} \in \mathcal{B}_{\mu}^{c} \\ r_{2} + q_{2} - p_{2} \in \mathcal{B}_{\mu},\, r_{2} + q_{2} \in \mathcal{B}_{\mu}^{c}}}\Big\| \mathbb{H}_{0}^{1/2} c_{r_{2}} \frac{1}{(\mathbb{H}_{0} + E(r_{2} + q_{2}, -p_{2}))^{1/2}}\varphi \Big\| \nonumber\\
&&\qquad\qquad\qquad\qquad\qquad\cdot \Big\| \mathbb{H}_{0}^{1/2} c_{r_{2} - p_{2} + q_{2}} \frac{1}{(\mathbb{H}_{0} + E(r_{2}, q_{2}))^{1/2}} \varphi \Big\|\;.\nonumber
\end{eqnarray}
This term can be bounded proceeding as for the previous one. One has $(\ref{eq:gen4}) = \text{I} + \text{II}$, with $\text{I}$, $\text{II}$ bounded as in Eqs. (\ref{eq:gen3I}), (\ref{eq:gen3II}), respectively; we omit the details. The final result is:
\begin{equation}
\sum_{\substack{k_{i}, r_{j}: \\ k_{i}, r_{j} \in \mathcal{B}_{\mu} \\ r_{i} + q_{i}, k_{j} + p_{j} \in \mathcal{B}_{\mu}^{c}}}|(\ref{eq:gen3})| \leq C_{\delta}N^{2 - \frac{9}{16} + \delta}  + CN^{2} \langle \varphi, \mathbb{H}_{0} \varphi \rangle\;.
\end{equation}
To conclude the discussion of the case $a=1$, suppose that one $bc$ monomial and one $b^{*}c^{*}$ monomial have been fully contracted. The corresponding contribution to (\ref{eq:Wj}) is:
\begin{eqnarray}\label{eq:gen6}
&&\Big|\langle \varphi, b^{*}_{r_{1} + q_{1}} c^{*}_{r_{1}} b_{k_{1} + p_{1}} c_{k_{1}} \frac{1}{\mathbb{H}_{0} + E(\underline{r}, \underline{q})} \varphi\rangle\Big|\delta_{k_{2}, r_{2}}\delta_{p_{2}, q_{2}} \nonumber\\
&&\leq \Big\| c_{r_{1}} b_{r_{1} + q_{1}} \frac{1}{(\mathbb{H}_{0} + E(r_{2}, q_{2}) + E(k_{1}, p_{1}))^{1/2}} \varphi \Big\| \nonumber\\
&&\quad \cdot \Big\| b_{k_{1} + p_{1}} c_{k_{1}} \frac{1}{(\mathbb{H}_{0} + E(r_{1}, q_{1}) + E(r_{2}, q_{2}))^{1/2}}\varphi \Big\| \delta_{k_{2},r_{2}}\delta_{p_{2},q_{2}} \;.
\end{eqnarray}
Therefore, by Lemma \ref{lem:H0}:
\begin{eqnarray}\label{eq:a1}
&&\sum_{\substack{k_{i}, r_{j}: \\ k_{i}, r_{j} \in \mathcal{B}_{\mu} \\ r_{i} + q_{i}, k_{j} + p_{j} \in \mathcal{B}_{\mu}^{c}}} |(\ref{eq:gen6})| \nonumber\\
&&\leq  CN\cdot  \sum_{\substack{r_{2}: \\ r_{2} \in \mathcal{B}_{\mu}, r_{2} + q_{2} \in \mathcal{B}_{\mu}^{c}}} \Big\| \frac{\mathbb{H}_{0}^{1/2}}{( \mathbb{H}_{0} + E(r_{2}, q_{2}) )^{1/2}} \varphi\Big\|^{2}\delta_{p_{2},q_{2}}\nonumber\\
&&\leq CN \langle \varphi, \mathbb{H}_{0} \varphi \rangle \sum_{\substack{r_{2}: \\ r_{2} \in \mathcal{B}_{\mu}, r_{2} + q_{2} \in \mathcal{B}_{\mu}^{c}}} \frac{1}{E(r_{2}, q_{2})}\delta_{p_{2},q_{2}}\nonumber\\
&&\leq CN^{2}  \langle \varphi, \mathbb{H}_{0} \varphi \rangle\delta_{p_{2},q_{2}}\;.
\end{eqnarray}
This concludes the discussion of the $a=1$ case.
\medskip

\noindent{\underline{Case $a=2$.}} Consider now the case in which all $bc$, $b^{*}c^{*}$ monomials have been contracted. Suppose only the $c$, $c^{*}$ operators have been contracted. We have:
\begin{eqnarray}\label{eq:gen7}
&&\Big|\langle \varphi, b^{*}_{k_{1} + q_{1}} b^{*}_{k_{2}+q_{2}} b_{k_{1} + p_{1}} b_{k_{2} + p_{2}} \frac{1}{\mathbb{H}_{0} + E(\underline{k}, \underline{q})} \varphi \rangle\Big|\delta_{k_{1},r_{1}}\delta_{k_{2}, r_{2}} \nonumber\\
&&\leq \Big\| b_{k_{2}+q_{2}} b_{k_{1}+q_{1}} \frac{1}{(\mathbb{H}_{0} + E(k_{1}, p_{1}) + E(k_{2}, p_{2}))^{1/2}} \varphi \Big\|\nonumber\\
&&\quad \cdot  \Big\| b_{k_{1}+p_{1}} b_{k_{2}+p_{2}} \frac{1}{( \mathbb{H}_{0} + E(k_{1},q_{1}) + E(k_{2}, q_{2}) )^{1/2}} \Big\| \delta_{k_{1},r_{1}}\delta_{k_{2},r_{2}} \;.
\end{eqnarray}
Thus, by Cauchy-Schwarz:
\begin{eqnarray}\label{eq:gen8}
&&\sum_{\substack{k_{i}, r_{j}: \\ k_{i}, r_{j} \in \mathcal{B}_{\mu} \\ r_{i} + q_{i}, k_{j} + p_{j} \in \mathcal{B}_{\mu}^{c}}} |(\ref{eq:gen7})| \\
&&\leq \sum_{\substack{k_{1}, k_{2}: \\ k_{1}, k_{2} \in \mathcal{B}_{\mu}, k_{i}+p_{i} \in \mathcal{B}_{\mu}^{c}}}\Big\| b_{k_{2}+q_{2}} b_{k_{1}+q_{1}} \frac{1}{(\mathbb{H}_{0} + E(k_{1}, p_{1}) + E(k_{2}, p_{2}))^{1/2}} \varphi \Big\|^{2} \nonumber\\
&& \quad + (p_{i}\leftrightarrow q_{i})\;.
\end{eqnarray}
Consider the first term in the right-hand side. We have the following possibilities: only one among $k_{1}$ and $k_{2}$ is such that $e(k_{i} + q_{i}) < 1/N$; both $k_{1}$ and $k_{2}$ are such that $e(k_{i} + q_{i}) < 1/N$; both $k_{1}$ and $k_{2}$ are such that $e(k_{i} + q_{i}) \geq 1/N$. Correspondingly, we shall rewrite $(\ref{eq:gen8}) = \text{I} +\text{II} + \text{III}$. Consider $\text{I}$. We have:
\begin{eqnarray}\label{eq:estI}
|\text{I}| &\leq& CN\cdot  \sum_{\substack{k_{1}, k_{2}: \\ k_{1} \in \mathcal{B}_{\mu}, k_{1} + p_{1}\in \mathcal{B}_{\mu}^{c} \\ e(k_{1} + q_{1}) > 1/N, e(k_{2} + q_{2})\leq 1/N}} \frac{1}{N}\Big\| b_{k_{1}+q_{1}} \frac{1}{(\mathbb{H}_{0} + E(k_{1}, p_{1}))^{1/2}} \varphi \Big\|^{2} \nonumber\\
&\leq& CN\cdot  \sum_{k_{2}:\, e(k_{2} + q_{2}) \leq 1/N} \sum_{k_{1}} e(k_{1} + q_{1})\Big\| b_{k_{1}+q_{1}} \frac{1}{\mathbb{H}_{0}^{1/2}} \varphi \Big\|^{2} \nonumber\\
&\leq& C_{\delta}N^{2 - \frac{9}{16} + \delta}
\end{eqnarray}
where the last step follows from Eq. (\ref{eq:numbth}).  Consider $\text{II}$. Using again that the number of momenta $k_{i} \in 2\pi \mathbb{Z}^{3}$ satisfying the constraint $e(k_{i} + q_{i}) < 1/N$ is bounded by $C_{\delta} N^{1 - \frac{9}{16} + \delta}$ (recall Eq. (\ref{eq:numbth})), we have, by Lemma \ref{lem:H0}:
\begin{equation}\label{eq:estII}
|\text{II}| \leq C_{\delta}N^{1 - \frac{9}{16} + \delta} \sum_{\substack{k_{1}: \\ k_{1}\in \mathcal{B}_{\mu}, k_{1}+p_{1} \in \mathcal{B}_{\mu}^{c}}} \frac{1}{E(k_{1}, p_{1})} \leq C_{\delta}N^{2 - \frac{9}{16} + \delta}\;.
\end{equation}
Consider $\text{III}$. We have:
\begin{eqnarray}\label{eq:gen10b}
|\text{III}| &\leq& \sum_{\substack{k_{1}, k_{2}: \\ k_{1}, k_{2} \in \mathcal{B}_{\mu}, k_{i}+p_{i} \in \mathcal{B}_{\mu}^{c} \\ e(k_{i} + q_{i}) > 1/N}}\Big\| b_{k_{2}+q_{2}} b_{k_{1}+q_{1}} \frac{1}{(\mathbb{H}_{0} + E(k_{1}, p_{1}) + E(k_{2}, p_{2}))^{1/2}} \varphi \Big\|^{2} \nonumber\\
&\leq& C  N^{2} \langle \varphi, \mathbb{H}_{0} \varphi \rangle\;.
\end{eqnarray}
Therefore, combining Eqs. (\ref{eq:estI}), (\ref{eq:estII}), (\ref{eq:gen10b}) we get:
\begin{equation}
\sum_{\substack{k_{i}, r_{j}: \\ k_{i}, r_{j} \in \mathcal{B}_{\mu} \\ r_{i} + q_{i}, k_{j} + p_{j} \in \mathcal{B}_{\mu}^{c}}} |(\ref{eq:gen7})| \leq C_{\delta}N^{2 - \frac{9}{16} + \delta} + C N^{2} \langle \varphi, \mathbb{H}_{0} \varphi \rangle\;.
\end{equation}
Similar bounds are obtained in case the contractions involve the $b, b^{*}$ operators. Consider now the case in which one pair of $c$, $c^{*}$ operators and one pair of $b$, $b^{*}$ operators have been contracted. We have:
\begin{eqnarray}\label{eq:gen10}
&&\Big|\langle \varphi, b^{*}_{r_{1}+q_{1}} c^{*}_{r_{2}} b_{k_{1}+p_{1}} c_{k_{2}} \frac{1}{\mathbb{H}_{0} + E(\underline{r},\underline{q})} \varphi \rangle\Big|\delta_{k_{1},r_{1}}\delta_{k_{2}+p_{2}, r_{2}+q_{2}}\nonumber\\
&&\leq \Big\| c_{r_{2}} b_{r_{1}+q_{1}}\frac{1}{(\mathbb{H}_{0} + E(r_{1}, p_{1}) + E(k_{2}, p_{2}))^{1/2}}  \varphi\Big\| \\
&&\quad \cdot \Big\|b_{k_{1}+p_{1}} c_{k_{2}} \frac{1}{(\mathbb{H}_{0} + E(r_{1},q_{1}) + E(r_{2}, q_{2}))^{1/2}} \varphi \Big\| \delta_{k_{1},r_{1}}\delta_{k_{2}+p_{2}, r_{2}+q_{2}}\;.\nonumber
\end{eqnarray}
Therefore, by Cauchy-Schwarz:
\begin{eqnarray}
&&\sum_{\substack{k_{i}, r_{j}: \\ k_{i}, r_{j} \in \mathcal{B}_{\mu} \\ r_{i} + q_{i}, k_{j} + p_{j} \in \mathcal{B}_{\mu}^{c}}}|(\ref{eq:gen10})| \leq \nonumber\\
&& \leq \sum_{\substack{k_{1}, k_{2}: \\ k_{i}\in \mathcal{B}_{\mu},\, k_{i}+ p_{i} \in \mathcal{B}_{\mu}^{c}}} \Big\| c_{k_{2} + p_{2} - q_{2}} b_{k_{1}+q_{1}}\frac{1}{(\mathbb{H}_{0} + E(k_{1}, p_{1}) + E(k_{2}, p_{2}))^{1/2}}  \varphi\Big\|^{2} +\nonumber\\
&& \sum_{\substack{k_{1}, k_{2}: \\ k_{i}\in \mathcal{B}_{\mu},\, k_{i}+ p_{i} \in \mathcal{B}_{\mu}^{c} \\ k_{2} + p_{2} - q_{2} \in \mathcal{B}_{\mu},\, k_{1} + q_{1} \in \mathcal{B}_{\mu}^{c}}} \Big\|b_{k_{1}+p_{1}} c_{k_{2}} \frac{1}{(\mathbb{H}_{0} + E(k_{1},q_{1}) + E(k_{2} + p_{2} - q_{2}, q_{2}))^{1/2}} \varphi \Big\|^{2}\;. \nonumber
\end{eqnarray}
Both terms can be studied as in (\ref{eq:gen8})-(\ref{eq:gen10b}), we omit the details; the result is:
\begin{equation}
\sum_{\substack{k_{i}, r_{j}: \\ k_{i}, r_{j} \in \mathcal{B}_{\mu} \\ r_{i} + q_{i}, k_{j} + p_{j} \in \mathcal{B}_{\mu}^{c}}}|(\ref{eq:gen10})| \leq C_{\delta}N^{2 - \frac{9}{16} + \delta} + C N^{2} \langle \varphi, \mathbb{H}_{0} \varphi \rangle\;.
\end{equation}
Finally, suppose that all the $c, c^{*}$ operators and one pair of $b, b^{*}$ operators have been contracted. The corresponding contribution is:
\begin{eqnarray}\label{eq:gen9}
&&\Big| \langle \varphi, b^{*}_{k_{1}+q_{1}} b_{k_{2} + p_{2}} \frac{1}{\mathbb{H}_{0} + E(\underline{k}, \underline{q})} \varphi \rangle \Big|\delta_{r_{1}, k_{1}}\delta_{r_{2}, k_{2}}\delta_{k_{2} + p_{2}, k_{1} + p_{1}}   \nonumber\\
&&\leq \Big\| b_{k_{1} + q_{1}} \frac{1}{(\mathbb{H}_{0} + E(k_{2}, q_{2}) + E(k_{1}, p_{1}))^{1/2}}\varphi \Big\| \\&&\quad \cdot \Big\| b_{k_{2} + p_{2}} \frac{1}{(\mathbb{H}_{0} + E(k_{1}, q_{1}) + E(k_{2}, q_{2}))^{2}}\Big\|\delta_{r_{1}, k_{1}} \delta_{r_{2}, k_{2}}\delta_{k_{2} + p_{2}, k_{1} + p_{1}} \;.\nonumber
\end{eqnarray}
Then,
\begin{eqnarray}\label{eq:a2}
&&\sum_{\substack{k_{i}, r_{j}: \\ k_{i}, r_{j} \in \mathcal{B}_{\mu} \\ r_{i} + q_{i}, k_{j} + p_{j} \in \mathcal{B}_{\mu}^{c}}}|(\ref{eq:gen9})| \leq \nonumber\\
&& \sum_{\substack{ k_{1}:\, k_{1} \in \mathcal{B}_{\mu},\, k_{1} + p_{1} \in \mathcal{B}_{\mu}^{c} \\ k_{1} + q_{1} \in \mathcal{B}^{c}_{\mu}}} \Big\| b_{k_{1} + q_{1}} \frac{1}{(\mathbb{H}_{0} + E(k_{1}+p_{1} - p_{2}, q_{2}) + E(k_{1}, p_{1}))^{1/2}}\varphi \Big\| \nonumber\\&&\qquad\qquad\qquad\qquad  \cdot \Big\| b_{k_{1} + p_{1}} \frac{1}{(\mathbb{H}_{0} + E(k_{1}, q_{1}) + E(k_{1} + p_{1} - p_{2}, q_{2}))^{2}} \varphi\Big\|\nonumber\\
&&\qquad\qquad\leq \sum_{\substack{ k_{1}:\, k_{1} \in \mathcal{B}_{\mu},\, k_{1} + p_{1}\in \mathcal{B}_{\mu}^{c} }} \frac{1}{E(k_{1}, p_{1})} + \sum_{\substack{ k_{1}:\, k_{1} \in \mathcal{B}_{\mu},\, k_{1} + q_{1}\in \mathcal{B}_{\mu}^{c} }} \frac{1}{E(k_{1}, q_{1})}  \nonumber\\&&\qquad\qquad \leq CN\;,
\end{eqnarray} 
where the last step follows from Lemma \ref{lem:H0}. The remaining cases can be studied in exactly the same way. This concludes the discussion of the $a=2$ case.

\medskip

\noindent{\underline{Conclusion.}} All in all the bounds (\ref{eq:a=2}), (\ref{eq:a1}), (\ref{eq:a2}) imply, for $|p|, |q|\leq N^{\zeta}$, recalling Eqs. (\ref{eq:FstarF}), (\ref{eq:Wj}):
\begin{equation}\label{eq:Wjest}
\langle \varphi, \mathcal{W}_{2j}\varphi \rangle \leq C_{\delta}N^{-\frac{9}{16} + \delta} + C\| \hat{v} \|^{2}_{1} \langle \varphi, \mathbb{H}_{0} \varphi \rangle\;.
\end{equation}
Therefore, by Eqs. (\ref{eq:chris}), (\ref{eq:Wj}), (\ref{eq:Wjest}):
\begin{eqnarray}
\langle \varphi, \alpha \mathbb{H}_{0} \varphi \rangle  + \langle \varphi, (\mathbb{F} + \mathbb{F}^{*}) \varphi \rangle &\geq& - \frac{1}{\alpha} \langle \varphi, \mathbb{F} \mathbb{H}_{0}^{-1} \mathbb{F}^{*} \varphi \rangle\\
&\geq& - \frac{1}{\alpha} \langle \Omega, \mathbb{F} \mathbb{H}_{0}^{-1} \mathbb{F}^{*} \Omega \rangle - C\| \hat{v} \|_{1}^{2} \langle \varphi, \mathbb{H}_{0} \varphi \rangle\nonumber\\&& - C_{\delta}N^{- \frac{9}{16} + \delta} \;.\nonumber
\end{eqnarray}
This concludes the proof of Lemma \ref{prp:lower2}.
\end{proof}

\subsection{Proof of Theorem \ref{thm:1}: upper bound}\label{sec:up}

To conclude the proof of Theorem \ref{thm:1}, in this section we shall compute an upper bound for the correlation energy.
\begin{proposition}[Upper bound for the correlation energy]\label{prp:upper} Under the same assumptions of Theorem \ref{thm:cor}, the following is true:
\begin{equation}\label{eq:upcor}
E_{N} \leq \mathcal{E}^{\text{HF}}_{N}(\omega_{N}) - (1 - C\|\hat{v}\|_{1}) \langle \Omega, \mathbb{F} \mathbb{H}_{0}^{-1} \mathbb{F}^{*}\Omega \rangle + C_{\delta}N^{- \frac{9}{16} + \delta}\;.
\end{equation}
\end{proposition}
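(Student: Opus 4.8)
The plan is to prove (\ref{eq:upcor}) by the variational principle, feeding into the identity of Proposition \ref{prp:low1} the free Fermi sea corrected by its formal second order perturbation, written directly in the excitation picture. Concretely I would set $\eta := \mathbb{H}_{0}^{-1}\mathbb{F}^{*}\Omega$ — well defined by Remark \ref{rem:welldef}, since $\mathbb{F}^{*}\Omega$ is supported on momenta with $\sum_{i}e(k_{i})\geq e(k+p)+e(k'-p)>0$ — put $\xi := \Omega - \eta\in\mathcal{F}$, and take $\psi_{\text{trial}} := R_{\omega_{N}}\xi/\|\xi\|$. Because $\mathbb{F}^{*}$ creates two $b$- and two $c$-excitations, $\xi$ lies in the kernel of $\mathcal{N}_{b}-\mathcal{N}_{c}$; combined with the identity $R^{*}_{\omega_{N}}\mathcal{N}R_{\omega_{N}} = \mathcal{N}_{b}+N-\mathcal{N}_{c}$ coming from (\ref{eq:propbog}), this gives $\mathcal{N}\psi_{\text{trial}} = N\psi_{\text{trial}}$, i.e. $\psi_{\text{trial}}\in\mathcal{F}^{(N)}$. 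Hence $E_{N}\leq\langle\psi_{\text{trial}},\mathcal{H}_{N}\psi_{\text{trial}}\rangle$ and Proposition \ref{prp:low1} applies, reducing the claim to estimating $\frac{1}{\|\xi\|^{2}}\langle\xi,(\mathbb{H}_{0}+\mathbb{X}+\mathbb{Q})\xi\rangle$ from above by $-(1-C\|\hat v\|_{1})\langle\Omega,\mathbb{F}\mathbb{H}_{0}^{-1}\mathbb{F}^{*}\Omega\rangle + C_{\delta}N^{-\frac{9}{16}+\delta}$.

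The main term is isolated by the splitting $\mathbb{Q} = \mathbb{Q}_{1}+\mathbb{Q}_{2,a}+\mathbb{Q}_{3}+\mathbb{F}^{*}+\mathbb{F}$ of (\ref{eq:Qsplit}), (\ref{eq:defF}) (recall $\mathbb{Q}_{2,b}=\mathbb{F}^{*}$, $\mathbb{Q}_{2,c}=\mathbb{F}$). For the ``completing the square'' part $\langle\xi,(\mathbb{H}_{0}+\mathbb{F}+\mathbb{F}^{*})\xi\rangle$, I would use that $\mathbb{H}_{0}\Omega=\mathbb{F}\Omega=0$, that $\mathbb{F}$ lowers the excitation number by $4$ (so it maps $\eta\in\mathcal{F}^{(4)}$ into $\mathcal{F}^{(0)}$), that $\mathbb{H}_{0}\eta=\mathbb{F}^{*}\Omega$, and orthogonality of excitation sectors, to obtain $\langle\xi,\mathbb{H}_{0}\xi\rangle = \langle\Omega,\mathbb{F}\mathbb{H}_{0}^{-1}\mathbb{F}^{*}\Omega\rangle$ and $\langle\xi,\mathbb{F}\xi\rangle=\langle\xi,\mathbb{F}^{*}\xi\rangle=-\langle\Omega,\mathbb{F}\mathbb{H}_{0}^{-1}\mathbb{F}^{*}\Omega\rangle$, hence $\langle\xi,(\mathbb{H}_{0}+\mathbb{F}+\mathbb{F}^{*})\xi\rangle = -\langle\Omega,\mathbb{F}\mathbb{H}_{0}^{-1}\mathbb{F}^{*}\Omega\rangle$ — precisely the minimum over real scalings $\Omega-\lambda\eta$ of $(\lambda^{2}-2\lambda)\langle\Omega,\mathbb{F}\mathbb{H}_{0}^{-1}\mathbb{F}^{*}\Omega\rangle$, attained at $\lambda=1$, which is why $\xi$ was chosen this way. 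The remaining operators are then controlled on $\xi$ by the very estimates of Proposition \ref{prp:boundsQ}: since $\langle\xi,\mathbb{H}_{0}\xi\rangle = \langle\Omega,\mathbb{F}\mathbb{H}_{0}^{-1}\mathbb{F}^{*}\Omega\rangle = O(\eps)$ is automatically of the size of the correlation energy, $\langle\xi,\mathcal{N}\xi\rangle = 4\|\eta\|^{2}$, and $\|D_{p}\xi\|=\|D_{p}\eta\|\leq C\|\eta\|$ (as $\eta$ has exactly four excitations), the bounds (\ref{eq:Xbd}), (\ref{eq:Q1pos})--(\ref{eq:Q1abd}) and the last two lines of (\ref{eq:Qs}) give $|\langle\xi,(\mathbb{X}+\mathbb{Q}_{1}+\mathbb{Q}_{2,a}+\mathbb{Q}_{3})\xi\rangle| \leq C\|\hat v\|_{1}\langle\Omega,\mathbb{F}\mathbb{H}_{0}^{-1}\mathbb{F}^{*}\Omega\rangle + C\|\hat v\|_{1}^{3}N^{-1} + 2\frak{e}_{K}$. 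Finally $\|\xi\|^{2} = 1+\|\eta\|^{2}$ with $\|\eta\|^{2}=\langle\Omega,\mathbb{F}\mathbb{H}_{0}^{-2}\mathbb{F}^{*}\Omega\rangle\leq C\|\hat v\|_{1}^{2}$ (see below), so $\|\xi\|^{-2}\in[1-C\|\hat v\|_{1}^{2},\,1]$. Assembling these, using $\langle\Omega,\mathbb{F}\mathbb{H}_{0}^{-1}\mathbb{F}^{*}\Omega\rangle\geq0$ and $\frak{e}_{K},\,N^{-1}\ll N^{-\frac{9}{16}+\delta}$, yields (\ref{eq:upcor}); here the $N^{-\frac{9}{16}+\delta}$ error is actually far from sharp, it only has to absorb $\frak{e}_{K}$ and an $O(N^{-1})$ exchange-type remainder, and is written this way to match the lower bound.

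The one genuinely delicate point I anticipate is the normalization bound $\|\eta\|^{2}=\langle\Omega,\mathbb{F}\mathbb{H}_{0}^{-2}\mathbb{F}^{*}\Omega\rangle\leq C\|\hat v\|_{1}^{2}$: a naive estimate replacing $\mathbb{H}_{0}^{-1}$ by $C\eps^{-2}$ on the relevant states is far too crude. The resolution is to exploit that the momenta $k$ for which $e(k+p)+e(k)$ is small near the Fermi surface are correspondingly rare, by factorising $(e(k+p)+e(k)+e(k'-p)+e(k'))^{-2}\leq (e(k+p)+e(k))^{-1}(e(k'-p)+e(k'))^{-1}$ and then invoking the summability bound $I_{\mu}(p)\leq C$ of Lemma \ref{lem:H0} (together with the decay of $\hat v$ for the high-momentum tail, which contributes another $\frak{e}_{K}$-type remainder). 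Everything else is routine bookkeeping in the excitation Fock space with tools already developed for the lower bound.
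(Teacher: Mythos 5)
Your proposal is correct and follows essentially the same route as the paper: the same trial state $R_{\omega_{N}}(1-\mathbb{H}_{0}^{-1}\mathbb{F}^{*})\Omega/M_{\mathbb{F}}$, the same reduction via Proposition \ref{prp:low1} and the splitting of $\mathbb{Q}$, the same exact evaluation $\langle\xi,(\mathbb{H}_{0}+\mathbb{F}+\mathbb{F}^{*})\xi\rangle=-\langle\Omega,\mathbb{F}\mathbb{H}_{0}^{-1}\mathbb{F}^{*}\Omega\rangle$, and the same normalization bound via the factorization of $\mathbb{H}_{0}^{-2}$ and $I_{\mu}(p)\leq C$ (Appendix \ref{app:norm}). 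The only deviations are cosmetic: you verify the $N$-particle property through $\mathcal{N}_{b}-\mathcal{N}_{c}$ rather than by direct conjugation, and you bound $\mathcal{N}$ and $D_{p}$ on the trial state directly (using that it has at most four excitations) instead of invoking Corollary \ref{cor:Nimpro}, which is if anything slightly cleaner.
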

Eq. (\ref{eq:upcor}) proves the upper bound on the correlation energy in Theorem \ref{thm:1}, and concludes the proof of our main result. The proof of Proposition \ref{prp:upper} is based on the choice of a suitable trial state. We define:
\begin{equation}\label{eq:trial}
\psi_{\mathbb{F}} = M_{\mathbb{F}}^{-1} R_{\omega_{N}} (1 - \mathbb{H}_{0}^{-1} \mathbb{F}^{*}) \Omega\;,
\end{equation}
where the normalization factor $M_{\mathbb{F}}$ is chosen to ensure that $\| \psi_{\mathbb{F}} \| = 1$, that is $M_{\mathbb{F}}^{2} = 1 + \langle \Omega, \mathbb{F} \mathbb{H}_{0}^{-2} \mathbb{F}^{*} \Omega \rangle$. As proven in Appendix \ref{app:norm}, we have:
\begin{equation}\label{eq:normbd}
|M_{\mathbb{F}} - 1|\leq C\| \hat{v} \|^{2}_{1} + \frak{e}_{K}\;,
\end{equation}
with $\frak{e}_{K} = o(1)$, recall Eqs. (\ref{eq:Qs}). It is not difficult to see that $\psi_{\mathbb{F}}$ is an $N$-particle state, $\psi_{\mathbb{F}} \in \mathcal{F}^{(N)}$. Since $R_{\omega_{N}} \Omega$ is an $N$-particle state, it is enough to show that $R_{\omega_{N}}\mathbb{H}_{0}^{-1} \mathbb{F}^{*} \Omega$ is an $N$-particle state. Proceeding as in Eq. (\ref{eq:FstarF}):
\begin{eqnarray}\label{eq:Npart}
&&R_{\omega_{N}} \mathbb{H}_{0}^{-1} \mathbb{F}^{*} \Omega = R_{\omega_{N}} \frac{1}{2N} \sum_{p, k, k'} \hat v(p) b^{*}_{k+q} c^{*}_{k} b^{*}_{k'-q} c^{*}_{k'} \frac{1}{E(k, q) + E(k',-q)} \Omega\nonumber\\
&&\quad = -\frac{1}{2N} \sum_{p, k, k'} \hat v(p) b^{*}_{k+q} b^{*}_{k'-q} c_{k}  c_{k'}  \frac{1}{E(k, q) + E(k',-q)} R_{\omega_{N}}\Omega\;,
\end{eqnarray}
where in the last step we used that $R^{*}_{\omega_{N}} b_{k} R_{\omega_{N}} = b_{k}$, $R^{*}_{\omega_{N}} c_{k} R_{\omega_{N}} = c^{*}_{k}$, recall Eq. (\ref{eq:propbog}). Thus, $R_{\omega_{N}} \mathbb{H}_{0}^{-1} \mathbb{F}^{*} \Omega\in \mathcal{F}^{(N)}$, since every addend in Eq. (\ref{eq:Npart}) is an element of $\mathcal{F}^{(N)}$.

The intuition behind the choice Eq. (\ref{eq:trial}) is the following. Recall the expression of the many-body Hamiltonian after conjugation with the Bogoliubov transformation, Eq. (\ref{eq:low1}). The previous analysis suggests that $\mathcal{H}_{N} \simeq \mathcal{E}_{\text{N}}^{\text{HF}}(\omega_{N}) + R_{\omega_{N}} \mathbb{H}_{0} R^{*}_{\omega_{N}} + R_{\omega_{N}} (\mathbb{F} + \mathbb{F}^{*}) R^{*}_{\omega_{N}}$, up to terms that are small if one restricts the Fock space to states with low kinetic energy $\mathbb{H}_{0}$. We are interested in finding an expression for the ground state of this approximate Hamiltonian. The state $\mathbb{R}_{\omega_{N}} \Omega$ is the ground state of the ``unperturbed'' operator $R_{\omega_{N}} \mathbb{H}_{0} R^{*}_{\omega_{N}}$. The question is how does this eigenstate change as one includes the ``perturbation'' $R_{\omega_{N}} (\mathbb{F} + \mathbb{F}^{*}) R^{*}_{\omega_{N}}$. The ansatz (\ref{eq:trial}) is what one would obtain applying first-order perturbation theory, neglecting higher order terms.

Let us now prove Proposition \ref{prp:upper}.
\begin{proof}
Let $\varphi_{\mathbb{F}} = M_{\mathbb{F}}^{-1} ( 1 - \mathbb{H}_{0}^{-1} \mathbb{F}^{*} )\Omega$. Then, by the variational principle, the ground state energy is bounded above by:
\begin{eqnarray}
E_{N} &\leq& \langle \psi_{\mathbb{F}}, \mathcal{H}_{N} \psi_{\mathbb{F}} \rangle\nonumber\\
&\equiv& \langle \varphi_{\mathbb{F}}, R^{*}_{\omega_{N}} \mathcal{H}_{N} R_{\omega_{N}} \varphi_{\mathbb{F}} \rangle\;.
\end{eqnarray}
Eq. (\ref{eq:low1}) together with the estimates in Proposition \ref{prp:boundsQ} and with Corollary \ref{cor:Nimpro} imply:
\begin{eqnarray}
E_{N} &\leq& \mathcal{E}^{\text{HF}}_{N}(\omega_{N}) + (1 + C\| \hat{v} \|_{1})\langle \varphi_{\mathbb{F}}, \mathbb{H}_{0} \varphi_{\mathbb{F}} \rangle + \langle \varphi_{\mathbb{F}}, \mathbb{Q} \varphi_{\mathbb{F}}\rangle + C_{\delta}N^{- \frac{9}{16} + \delta}\nonumber\\
&=& \mathcal{E}^{\text{HF}}_{N}(\omega_{N}) + (1 + C\| \hat{v} \|_{1})\langle \varphi_{\mathbb{F}}, \mathbb{H}_{0} \varphi_{\mathbb{F}} \rangle\nonumber\\
&& + \langle \varphi_{\mathbb{F}}, (\mathbb{Q}_{1} + \mathbb{Q}_{2,a}+ \mathbb{F} + \mathbb{F}^{*}) \varphi_{\mathbb{F}}\rangle + C_{\delta}N^{- \frac{9}{16} + \delta}\nonumber\\
&\leq& \mathcal{E}^{\text{HF}}_{N}(\omega_{N}) + M_{\mathbb{F}}^{-2}(1 + C\| \hat{v} \|_{1}) \langle \Omega, \mathbb{F} \mathbb{H}_{0}^{-1} \mathbb{F}^{*} \Omega \rangle \\
&& + \langle \varphi_{\mathbb{F}}, (\mathbb{Q}_{1} + \mathbb{Q}_{2,a}) \varphi_{\mathbb{F}}\rangle  - 2M_{\mathbb{F}}^{-2}\langle \Omega, \mathbb{F} \mathbb{H}_{0}^{-1} \mathbb{F}^{*} \Omega \rangle + C_{\delta}N^{- \frac{9}{16} + \delta}\;.\nonumber
\end{eqnarray}
By Eq. (\ref{eq:Q2a}), we have:
\begin{eqnarray}
\langle \varphi_{\mathbb{F}}, \mathbb{Q}_{2,a} \varphi_{\mathbb{F}}\rangle &\leq& C\|\hat{v}\|_{1} \langle \varphi_{\mathbb{F}}, \mathbb{H}_{0} \varphi_{\mathbb{F}}\rangle\nonumber\\
&=& C \|\hat{v}\|_{1} \langle \Omega, \mathbb{F} \mathbb{H}_{0}^{-1} \mathbb{F}^{*} \Omega \rangle\;.
\end{eqnarray}
Finally,
\begin{eqnarray}
\langle \varphi_{\mathbb{F}}, \mathbb{Q}_{1} \varphi_{\mathbb{F}}\rangle = \frac{1}{M_{\mathbb{F}}^{2}} \langle\Omega, \mathbb{F} \mathbb{H}_{0}^{-1} \mathbb{Q}_{1} \mathbb{H}_{0}^{-1} \mathbb{F}^{*}\Omega \rangle\;.
\end{eqnarray}
To estimate this term, we use that $\mathbb{Q}_{1} \leq (\| \hat{v} \|_{1}/N) \mathcal{N}^{2}$. Using that $\mathbb{F}^{*}\Omega \in \mathcal{F}^{(4)}$, and that $\langle \Omega, \mathbb{F} \mathbb{H}_{0}^{-2} \mathbb{F}^{*} \Omega \rangle \leq C\|\hat{v}\|_{1}^{2}$ (see Appendix \ref{app:norm}), we get:
\begin{equation}
\langle \varphi_{\mathbb{F}}, \mathbb{Q}_{1} \varphi_{\mathbb{F}}\rangle \leq \frac{16 \| \hat{v} \|_{1}}{N M_{\mathbb{F}}^{2}} \langle \Omega, \mathbb{F} \mathbb{H}_{0}^{-2} \mathbb{F}^{*} \Omega \rangle\leq \frac{C\| \hat{v} \|^{3}_{1}}{N}\;.
\end{equation}
Therefore, all in all we have:
\begin{equation}
E_{N} \leq \mathcal{E}^{\text{HF}}_{N}(\omega_{N}) - (1 - C\| \hat{v} \|_{1}) M_{\mathbb{F}}^{-2} \langle \Omega, \mathbb{F} \mathbb{H}_{0}^{-1} \mathbb{F}^{*} \Omega\rangle + C_{\delta}N^{- \frac{9}{16} + \delta}\;.
\end{equation}
Eq. (\ref{eq:upcor}) follows for $\|\hat{v}\|_{1}$ small enough, after suitably redefining the constant $C$. This concludes the proof of Proposition \ref{prp:upper}.
\end{proof}

\noindent{\bf Acknowledgements.} The work of M. P. has been supported by the Swiss National Science Foundation via the grant ``Mathematical Aspects of Many-Body Quantum Systems''. We thank Niels Benedikter, Phan Th\`anh Nam and Benjamin Schlein for interesting discussions.

\appendix

\section{Proof of Lemma \ref{lem:H0}}\label{app:key}
We start by writing:
\begin{eqnarray}
&&\sum_{k} \| b_{k+p} c_{k} \varphi \| = \sum_{k} \frac{1}{\sqrt{e(k+p) + e(k)}} \sqrt{e(k+p) + e(k)}  \| b_{k+p} c_{k} \varphi \| \nonumber\\
&&\leq \Big(\sum_{k:\, k+p\notin \mathcal{B}_{\mu},\, k\in \mathcal{B}_{\mu}} \frac{1}{e(k+p) + e(k)}\Big)^{\frac{1}{2}} \Big(\sum_{k} (e(k+p) + e(k)) \| b_{k+p} c_{k}  \|^{2}\Big)^{\frac{1}{2}}\nonumber\\
&&\leq N^{\frac{1}{2}} I_{\mu}(p)^{\frac{1}{2}} \| \mathbb{H}_{0}^{\frac{1}{2}} \varphi \|\;,
\end{eqnarray}
where we defined:
\begin{equation}
I_{\mu}(p) = \frac{1}{N} \sum_{k:\, k+p\notin \mathcal{B}_{\mu},\, k\in \mathcal{B}_{\mu}} \frac{1}{e(k+p) + e(k)}\;.
\end{equation}
We claim that, for $|p| \leq N^{\zeta}$ and $\zeta>0$ to be determined below:
\begin{equation}\label{eq:Itot}
I_{\mu}(p) \leq C\;,
\end{equation}
which would conclude the proof of Lemma \ref{lem:H0}. To prove this, we shall split the sum in three terms:
\begin{equation}
I_{\mu}(p) = I^{\text{A}}_{\mu}(p) + I^{\text{B}}_{\mu}(p) + I^{\text{C}}_{\mu}(p)
\end{equation}
with, for $A$ large enough (independent of $|p|$ and $N$), and with $\gamma = 131/208$ (the motivation for this choice will be clear later on):
\begin{eqnarray}
I^{\text{A}}_{\mu}(p) &=& \frac{1}{N} \sum_{\substack{k:\, k+p\notin \mathcal{B}_{\mu},\, k\in \mathcal{B}_{\mu} \\ e(k) + e(k+p) \leq A|p|\eps^{2}}} \frac{1}{e(k+p) + e(k)}\nonumber\\
I^{\text{B}}_{\mu}(p) &=& \frac{1}{N} \sum_{\substack{k:\, k+p\notin \mathcal{B}_{\mu},\, k\in \mathcal{B}_{\mu} \\ A|p|\eps^{2} < e(k) + e(k+p) \leq \eps^{2 - \gamma}}} \frac{1}{e(k+p) + e(k)}\nonumber\\
I^{\text{C}}_{\mu}(p) &=& \frac{1}{N} \sum_{\substack{k:\, k+p\notin \mathcal{B}_{\mu},\, k\in \mathcal{B}_{\mu} \\ \eps^{2-\gamma} < e(k) + e(k+p)}} \frac{1}{e(k+p) + e(k)}\;.
\end{eqnarray}
We shall study the three terms separately. As we shall see, the main contribution to $I_{\mu}(p)$ will come from $I^{\text{C}}_{\mu}(p)$. The other two terms will be proven to be $o(1)$ as $\eps \to 0$.

\medskip

\noindent{\underline{Term $A$}.} Using that $e(k+p) + e(k) \geq \eps^{2}$, we estimate:
\begin{equation}
I^{\text{A}}_{\mu}(p) \leq \frac{1}{N \eps^{2}} |\mathcal{R}_{p}|
\end{equation}
where $\mathcal{R}_{p}$ is the set:
\begin{equation}
\mathcal{R}_{p} = \big\{ k\in \mathcal{B}_{\mu}\mid k+p\notin \mathcal{B}_{\mu},\; e(k+p) + e(k) \leq A|p|^{2}\varepsilon^{2} \big\}\;.
\end{equation}
Recall that $e(k) = |\eps^{2} |k|^{2} - \mu|$. The constraint $e(k) \leq A|p|^{2}\eps^{2}$ is equivalent to $||k|^{2} - N^{\frac{2}{3}} \mu| \leq A|p|^{2}$, that is $||k| - N^{\frac{1}{3}} \sqrt{\mu}|\leq C|p|^{2}N^{-\frac{1}{3}}$: the lattice points in $\mathcal{R}_{p}$ are at a distance at most $C |p|^{2} N^{-\frac{1}{3}}$ from the Fermi surface. Also, the constraints $k\in \mathcal{B}_{\mu}$, $k+p\notin\mathcal{B}_{\mu}$ imply that:
\begin{equation}\label{eq:ee}
e(k+p) + e(k) = \eps^{2} |k+p|^{2} - \mu + \mu - \eps^{2} |k|^{2} = \eps^{2}|p|^{2} + 2\eps^{2}k\cdot p\;.
\end{equation}
Let $\theta_{k}$ the relative angle of $k$ and $p$, $k\cdot p = |k| |p| \cos\theta_{k}$. Since $|k| = O(N^{\frac{1}{3}})$, the constraint $e(k+p) + e(k) \leq A|p|^{2}\varepsilon^{2}$ together with Eq. (\ref{eq:ee}) implies that
\begin{equation}
|\cos\theta_{k}| \leq CA|p| \eps\;.
\end{equation}
Hence, the set $\mathcal{R}_{p}$ has width bounded by $CA|p|$ in the direction of $p$.

To count the lattice points in $\mathcal{R}_{p}$, we slice $\mathcal{R}_{p}$ with planes orthogonal to the vector $p$. Let $\{ e_{1}, e_{2}, e_{3} \}$ be the standard orthogonal basis for $(2\pi) \mathbb{Z}^{3}$, with $|e_{i}| = 2\pi$. If $p$ is parallel to one of the basis vectors, say $e_{1}$, the lattice points intersected by these planes form subsets of $(2\pi)\mathbb{Z}^{2}$. We write:
\begin{eqnarray}
\mathcal{R}_{p} &=& \bigcup_{n} \big\{ k\in \mathcal{R}_{p} \mid k\cdot e_{1} = (2\pi)^{2} n \big\} \nonumber\\
&\equiv& \bigcup_{n} \mathcal{R}_{p, n}\;.
\end{eqnarray} 
Since the width of $\mathcal{R}_{p}$ in the direction of $p$ is bounded by $CA|p|$, to count the number of lattice points in $\mathcal{R}_{p}$ it will be enough to count the number of lattice points in a single slice $\mathcal{R}_{p, n}$. This number is bounded by the number of lattice points lying in the difference of two disks:
\begin{equation}\label{eq:Rpnfirst}
| \mathcal{R}_{p,n} | \leq | \{ k\in (2\pi) \mathbb{Z}^{2}\mid N^{\frac{1}{3}}\nu - C|p|^{2}N^{-\frac{1}{3}} \leq |k| \leq N^{\frac{1}{3}}\nu  \} |\;,
\end{equation}
for some $\nu = \sqrt{\mu} - O(N^{-\frac{1}{3}})$. To estimate the right-hand side of Eq. (\ref{eq:Rpnfirst}) we proceed as follows. Let $\mathcal{C}\subset \mathbb{R}^{2}$ be a convex body with smooth boundary ({\it e.g.}, a disk). Then, for any $\delta > 0$ and for $\gamma = 131/208$, \cite{Hux}:
\begin{equation}\label{eq:count2}
| \{ x\in \mathbb{Z}^{2}\mid x/R \in \mathcal{C}  \} | = R^{2} \mu(\mathcal{C}) + E_{2}(R)\;,\quad |E_{2}(R)| \leq C_{\delta} R^{\gamma + \delta}
\end{equation}
where $\mu(\mathcal{C})$ is the area of $\mathcal{C}$. Therefore, thanks to Eq. (\ref{eq:count2}), it is easy to estimate $|\mathcal{R}_{p,n}|$ as:
\begin{equation}\label{eq:Rpn}
| \mathcal{R}_{p,n} | \leq | \{ k\in (2\pi) \mathbb{Z}^{2}\mid N^{\frac{1}{3}}\nu - CN^{-\frac{1}{3}} \leq |k| \leq N^{\frac{1}{3}}\nu  \} | \leq K_{\delta} N^{\frac{1}{3} (\gamma + \delta)}\;.
\end{equation}
Hence, Eq. (\ref{eq:Rpn}) implies that
\begin{equation}
I^{\text{A}}_{\mu}(p) \leq  \frac{1}{N^{\frac{1}{3}}}|\mathcal{R}_{p}| \leq CA|p| K_{\delta} N^{\frac{1}{3}(\gamma + \delta) - \frac{1}{3}}\;,
\end{equation}
which is $o(1)$ for $|p| \leq N^{\zeta}$ with $\zeta < \frac{1}{3} - \frac{1}{3}(\gamma + \delta)$.

Suppose now that $p$ is not aligned with any basis vector $e_{1}, e_{2}, e_{3}$. We introduce a new orthogonal basis $\{ b_{1}, b_{2}, b_{3} \}$, with $b_{1} = (2\pi) p/|p|$ and $|b_{i}| = 2\pi$. For instance, let:
\begin{equation}
p_{\perp} = (0, -p_{3}, p_{2})\;,\qquad p_{\perp}' = ( -p_{2}^{2} - p_{3}^{2}, p_{1} p_{2}, p_{1} p_{3})\;.
\end{equation}
Then, we set:
\begin{equation}
b_{1} = (2\pi) \frac{p}{|p|}\;,\qquad b_{2} = (2\pi) \frac{p_{\perp}}{|p_{\perp}|}\;,\qquad b_{3} = (2\pi) \frac{p_{\perp}'}{|p_{\perp}'|}\;.
\end{equation}
For later use, notice that $|p_{\perp}'| = |p| |p_{\perp}| \leq |p|^{2}$. We then slice the ribbon $\mathcal{R}_{p}$ with planes orthogonal to $b_{1}$:
\begin{eqnarray}\label{eq:intRp}
\mathcal{R}_{p} &=& \bigcup_{\alpha} \big\{ k\in \mathcal{R}_{p} \mid k\cdot b_{1} = (2\pi)^{2}\alpha \big\} \nonumber\\
&\equiv& \bigcup_{\alpha} \mathcal{R}_{p,\alpha}\;.
\end{eqnarray}
Since $\alpha = \frac{1}{(2\pi)|p|}\sum_{i} k_{i} p_{i}$ with $k, p\in (2\pi)\mathbb{Z}^{3}$, the smallest increment of $\alpha$ is bounded below by $(\text{const.}) |p|^{-1}$. Also, since the width of $\mathcal{R}_{p}$ in the direction of $p$ is bounded by $CA|p|$, the number of sets $\mathcal{R}_{p,\alpha}$ in the union in Eq. (\ref{eq:intRp}) is bounded by $CA|p|^{2}$. To bound the number of lattice points in $\mathcal{R}_{p}$ it is again enough to bound the number of lattice points in a given $\mathcal{R}_{p,\alpha}$.

In contrast to the previous case, the number of lattice points in $\mathcal{R}_{p,\alpha}$ is not a subset of $\mathbb{Z}^{2}$. However, it can be estimated with the number of lattice points in a suitable subset of $\mathbb{Z}^{2}$, proceeding as follows. Any lattice point belonging to $\mathcal{R}_{p,\alpha}$ can be written as $k = \alpha b_{1} + \beta b_{2} + \gamma b_{3}$, with 
\begin{eqnarray}\label{eq:abg}
\alpha &=&  \frac{1}{(2\pi)^{2}} k\cdot b_{1} = \frac{1}{(2\pi)} \frac{k\cdot p}{|p|}\nonumber\\
\beta &=& \frac{1}{(2\pi)^{2}} k\cdot b_{2} = \frac{1}{(2\pi)} \frac{k\cdot p_{\perp}}{|p_{\perp}|} \nonumber\\
\gamma &=& \frac{1}{(2\pi)^{2}} k\cdot b_{3} = \frac{1}{(2\pi)} \frac{k\cdot p'_{\perp}}{|p'_{\perp}|}\;.
\end{eqnarray}
Different points in $\mathcal{R}_{p,\alpha}$ correspond to different values of $\beta$ and $\gamma$, for $\alpha$ fixed. The coefficients $\beta$, $\gamma$ are not integers, in general. However, they become integer valued if properly rescaled:
\begin{equation}
n_{2} = \frac{1}{(2\pi)}|p_{\perp}| \beta \in \mathbb{Z}\;,\qquad  n_{3} = \frac{1}{(2\pi)^{2}}|p'_{\perp}|  \gamma \in \mathbb{Z}\;.
\end{equation}
%
%
%
%
%
%
%
%
%
Therefore, the number of lattice points in $\mathcal{R}_{p,\alpha}$ is estimated by the number of lattice points in:
\begin{eqnarray}\label{eq:n1n2}
&&\Big\{ (n_{2}, n_{3}) \in \mathbb{Z}^{2}\, \Big|\, N^{\frac{2}{3}} \tilde \nu - C|p|^{2}\leq \frac{(2\pi)^{4}}{|p_{\perp}|^{2}} n_{2}^{2} + \frac{(2\pi)^{6}}{|p_{\perp}'|^{2}} n_{3}^{2} \leq N^{\frac{2}{3}} \tilde \nu\Big\} \\
&& = \Big\{ (n_{2}, n_{3}) \in \mathbb{Z}^{2}\, \Big|\, K_{p} N^{\frac{2}{3}} \tilde \nu - C_{p}|p|^{2}\leq \frac{|p|^{2}}{(2\pi)^{2}} n_{2}^{2} +  n_{3}^{2} \leq K_{p} N^{\frac{2}{3}} \tilde \nu\Big\}\nonumber
\end{eqnarray}
with $N^{\frac{2}{3}} \tilde \nu = N^{\frac{2}{3}} \nu - \alpha^{2} (2\pi)^{2}$, and where $K_{p} = |p_{\perp}'|^{2}/(2\pi)^{6}$, $C_{p} = C |p_{\perp}'|^{2}/(2\pi)^{6}$. The number of lattice points in (\ref{eq:n1n2}) is less or equal than the number of elements of:
\begin{equation}\label{eq:ellipses}
\Big\{ (n_{2}, n_{3}) \in \mathbb{Z}^{2}\, \Big|\, K_{p} N^{\frac{2}{3}} \tilde \nu - C_{p}|p|^{2}\leq \frac{|p|^{2}}{(2\pi)^{2}P^{2}} n_{2}^{2} +  n_{3}^{2} \leq K_{p} N^{\frac{2}{3}} \tilde \nu\Big\}
\end{equation}
where $P = \lceil |p| \rceil$. This is the number of lattice points lying in the difference of two smooth convex bodies, with elliptic boundary. To estimate the number of lattice points in (\ref{eq:ellipses}), we use again Eq. (\ref{eq:count2}) (thanks to the factor $P$, the curvature of the ellipses is bounded uniformly in $|p|$). We get, for all $\delta > 0$ and for $p$-independent constants $K$ and $C_{\delta}$:
\begin{eqnarray}\label{eq:Rpalpha}
|\mathcal{R}_{p,\alpha}| &\leq& K |p|^{6} + C_{\delta} ( |p|^{2} N^{\frac{1}{3}} \tilde \nu )^{\gamma + \delta} \nonumber\\
&\leq& \widetilde C_{\delta} ( |p|^{2} N^{\frac{1}{3}} \tilde \nu )^{\gamma + \delta}\;,
\end{eqnarray}
for $|p| \leq N^{\zeta}$, with $\zeta < \frac{\gamma + \delta}{18 - 6(\gamma + \delta)}$. Recalling that the number of sets in the union in Eq. (\ref{eq:intRp}) is bounded by $C A|p|^{2}$, we get:
\begin{eqnarray}\label{eq:IAbd}
I^{A}_{\mu}(p) \leq \frac{1}{N^{\frac{1}{3}}}|\mathcal{R}_{p}| &\leq& K_{\delta} N^{\frac{1}{3}(\gamma + \delta) - \frac{1}{3}} |p|^{2 + 2(\gamma + \delta)}\;,
\end{eqnarray}
which is $o(1)$ for $|p| \leq N^{\zeta}$ with $\zeta < \frac{1}{1+ \gamma + \delta} ( \frac{1}{6} - \frac{1}{6}(\gamma + \delta) )$.

\medskip

\noindent{\underline{Term $B$}.} Let us consider the term $I^{B}_{\mu}(p)$. We can parametrize every point $k$ by an angle $\theta_{k} \in [0,\pi]$ such that $k\cdot p = |k| |p| \cos \theta_{k}$, and a remaining angle $\phi_{k} \in [0, 2\pi]$. We rewrite the argument of the sum as:
\begin{eqnarray}
\frac{1}{e(k+p) + e(k)} &=& \frac{1}{\eps^{2} |p|^{2} + 2\varepsilon^{2} k\cdot p}\nonumber\\
&=& \frac{1}{\eps^{2} |p|^{2} + 2\varepsilon^{2} |k| |p| \cos\theta_{k}}\;.
\end{eqnarray}
For $A$ large enough, the constraints $A|p|^{2}\varepsilon^{2} \leq e(k+p) + e(k)\leq \eps^{2-\gamma}$ and the fact that $|k| = O(N^{\frac{1}{3}})$ imply that $a\eps \leq \cos \theta_{k} \leq c\eps^{1-\gamma}$ for suitable $a, c>0$. Using that $||k| - N^{\frac{1}{3}} \sqrt{\mu}| \leq C$ for all points in the sum, we estimate:
\begin{equation}
\frac{1}{e(k+p) + e(k)} \leq \frac{C}{\eps^{2}|p|^{2} + 2\varepsilon \sqrt{\mu}  |p| \cos\theta_{k}}\;.
\end{equation}
Let us now introduce a family of angles $\{\eta_{i}\}$ partitioning $[0, \pi/2]$, such that
\begin{equation}
|\eta_{i} - \eta_{i+1}| = \kappa \eps\;,
\end{equation}
for some $\kappa >0$ of order $1$. We write:
\begin{eqnarray}
I^{\text{B}}_{\mu}(p) &\leq& \frac{1}{N} \sum_{a\eps \leq \cos \eta_{i} \leq c\eps^{1-\gamma}} \sum_{\substack{ k:\, k+p\notin \mathcal{B}_{\mu},\, k\in \mathcal{B}_{\mu} \nonumber\\ A|p|^{2}\varepsilon^{2} \leq e(k+p) + e(k) \leq \varepsilon^{2 - \gamma} \\ |\theta_{k} -  \eta_{i}| \leq \frac{\kappa}{2}\eps}} \frac{C}{\eps^{2}|p|^{2} + 2\varepsilon \sqrt{\mu}  |p| \cos\theta_{k}} \nonumber\\
&\leq&  \frac{1}{N} \sum_{a\eps \leq \cos \eta_{i} \leq c\eps^{1-\gamma}}  \frac{D}{\eps^{2}|p|^{2} + 2\varepsilon \sqrt{\mu}  |p| \cos\eta_{i}} \sum_{\substack{ k:\, k+p\notin \mathcal{B}_{\mu},\, k\in \mathcal{B}_{\mu} \nonumber\\  |\theta_{k} - \eta_{i}| \leq \frac{\kappa}{2}\eps}} 1\;.
\end{eqnarray}
The innermost sum is the number of lattice points in:
\begin{equation}
\mathcal{S}_{p,i} = \Big\{ k\in \mathcal{B}_{\mu} \mid k+p\notin \mathcal{B}_{\mu},\;  | \theta_{k} - \eta_{i} |\leq \frac{\kappa}{2}\eps  \Big\}\;.
\end{equation}
To estimate the number of points in this set, we proceed as in the discussion of the term $I^{A}_{\mu}(p)$. We introduce the orthogonal basis $\{ b_{1}, b_{2}, b_{3} \}$ as in Eq. (\ref{eq:abg}), and we slice the set $\mathcal{S}_{p,i}$ with planes with fixed $b_{1}$ component; we then estimate the number of lattice points in each slice, and use that $\mathcal{S}_{p,i}$ is sliced by a number of planes bounded by $C|p|$ (since the width of $\mathcal{S}_{p,i}$ is the direction of $p$ is bounded by a constant, and since the spacing between two planes is bounded below by $C/|p|$). We write:
\begin{eqnarray}\label{eq:Spi}
\mathcal{S}_{p,i} &=& \bigcup_{\alpha} \mathcal{S}_{p,i,\alpha} \nonumber\\
\mathcal{S}_{p,i,\alpha} &=& \big\{ k\in \mathcal{S}_{p,i} \mid k\cdot b_{1} = (2\pi)^{2}\alpha \big\}\;.
\end{eqnarray}
A simple trigonometric argument shows that the conditions $k\in \mathcal{B}_{\mu}$, $k+p\notin \mathcal{B}_{\mu}$, together with $| \cos\theta_{k} - \cos\eta_{i} |\leq C\kappa\eps$ and $\cos \eta_{i} \geq a\eps$, imply that the width of the collar $\mathcal{S}_{p,i,\alpha}$ in any direction orthogonal to $p$ is bounded above by $|p| \cos\eta_{i}$. 

Proceeding as in the analysis of $I^{A}_{\mu}(p)$, following the steps to arrive at Eq. (\ref{eq:Rpalpha}), the number of lattice points in $\mathcal{S}_{p,i,\alpha}$ is bounded by:
\begin{equation}
K |p|^{6} N^{\frac{1}{3}}\cos \eta_{i} + C_{\delta} ( |p|^{2} N^{\frac{1}{3}} \tilde \nu )^{\gamma + \delta}\;.
\end{equation}
Since $\cos \eta_{i} \leq c\eps^{1-\gamma}$, the second term dominates for $|p| \leq N^{\zeta}$ with $\zeta = \frac{\delta}{18 - 6(\gamma + \delta)}$. We then have, taking into account that the number of sets in the union in Eq. (\ref{eq:Spi}) is bounded by $C|p|$:
\begin{equation}
I^{\text{B}}_{\mu}(p) \leq \frac{K}{N^{\frac{1}{3}}} \sum_{i: a\eps \leq \cos \eta_{i} \leq c \eps^{1-\gamma-\delta}} \frac{|p|^{2(\gamma + \delta) + 1} \eps^{1-\gamma -\delta} }{\eps |p|^{2} + 2 \sqrt{\mu}  |p| \cos\eta_{i}}\;.
\end{equation}
This sum can be estimated as:
\begin{eqnarray}\label{eq:IBest2}
I^{\text{B}}_{\mu}(p) &\leq& \widetilde{K} |p|^{2(\gamma + \delta)} \int_{0}^{\frac{\pi}{2}} dx\, \frac{\eps^{1-\gamma - \delta}}{\eps + \cos x}\nonumber\\
&\leq& C |p|^{2(\gamma + \delta)} \eps^{1-\gamma - \delta} |\log \eps|\;,
\end{eqnarray}
which is $o(1)$ for $|p| \leq N^{\zeta}$ with $\zeta < \frac{1-\gamma -\delta}{6(\gamma + \delta)}$. 
\medskip

\noindent{\underline{Term C}.} To conclude, let us consider $I^{\text{C}}_{\mu}(p)$. Up to $o(1)$ error terms, we can replace the third constraint in the sum by:
\begin{equation}\label{eq:C0}
C\varepsilon^{2-\gamma} < \eps^{2} |k| |p| \cos\theta_{k}\;;
\end{equation}
again up $o(1)$ error terms, we can replace Eq. (\ref{eq:C0}) by:
\begin{equation}
\frac{C}{|p|} \varepsilon^{1-\gamma} < \cos\theta_{k}\;.
\end{equation}
Next, consider the summand. We replace $|k|$ with $N^{\frac{1}{3}}\sqrt{\mu}$, up to a small error:
\begin{equation}\label{eq:rk}
\frac{1}{e(k) + e(k+p)} = \frac{1}{\eps^{2} |p|^{2} + 2\eps\sqrt{\mu} |p| \cos\theta_{k}} (1 + r_{k,p})
\end{equation}
with $|r_{k,p}| \leq C\eps |p|$. Then, we introduce angles $\xi_{i}$ partitioning $[0,\pi/2]$ such that:
\begin{equation}
|\xi_{i} - \xi_{i+1}| = \kappa \eps^{1 - \gamma + \alpha}
\end{equation}
for $\alpha>0$ small enough, to be chosen later, and $\kappa = O(1)$. For $|\theta_{k} - \xi_{i}| \leq \frac{\kappa}{2} \eps^{1-\gamma + \alpha}$, we further rewrite the summand as:
\begin{equation}\label{eq:r2k}
\frac{1}{\eps^{2} |p|^{2} + 2\eps\sqrt{\mu} |p| \cos\theta_{k}} = \frac{1}{\eps^{2} |p|^{2} + 2\eps\sqrt{\mu} |p| \cos\xi_{i}} (1 + r'_{k,p})\;,
\end{equation}
where $|r'_{k,p}| \leq C\eps^{\alpha}$. Neglecting for the moment the error terms, we are left with computing:
\begin{equation}\label{eq:Cmain}
\frac{1}{N} \sum_{\frac{C}{|p|}\eps^{1-\gamma} \leq \cos \xi_{i}}\frac{1}{\eps^{2} |p|^{2} + 2\eps\sqrt{\mu} \cos\xi_{i}} \sum_{\substack{ k:\, k+p\notin \mathcal{B}_{\mu},\, k\in \mathcal{B}_{\mu} \\  |\theta_{k} - \xi_{i}| \leq \frac{\kappa}{2}\eps^{1-\gamma + \alpha}}} 1\;.
\end{equation}
This sum can be bounded by a constant, uniformly in $\eps$, repeating the slicing and counting argument used for $I^{B}_{\mu}(p)$. We shall however adopt a different strategy, that will allow us to explicitly compute (\ref{eq:Cmain}) as $\eps \to 0$. The reason is that the same argument can be used to prove Eq. (\ref{eq:cormain}), about the $\eps\to 0$ limit of the correlation energy at second order in the potential, see Appendix \ref{app:corr}.

The innermost sum can be studied as in Section 6 of \cite{BNPSS}. With respect to \cite{BNPSS}, the only difference is that instead of counting lattice points in a patch around the Fermi surface, Eq. (6.1) of \cite{BNPSS}, we are interested in the number of lattice points in a ``collar'':
\begin{equation}
\mathcal{C}_{i} = \Big\{ k\in \mathcal{B}_{\mu} \mid k+p\notin \mathcal{B}_{\mu},\; | \theta_{k} - \xi_{i}| \leq \frac{\kappa}{2}\eps^{1-\gamma + \alpha} \Big\}\;.
\end{equation}
Let us denote by $C_{i}$ the corresponding subset of $N^{1/3} \sqrt{\mu}\, \mathbb{S}^{2}$:
\begin{equation}
C_{i} = \{ x\in N^{\frac{1}{3}} \sqrt{\mu}\, \mathbb{S}^{2} \mid |\theta - \xi_{i}| \leq  \frac{\kappa}{2}\eps^{1-\gamma + \alpha}  \}\;,
\end{equation}
where $\theta$ is such that $x\cdot p = |x| |p| \cos \theta$. Let us denote by $C_{i,p}$ the projection of $C_{i}$ along $p$ onto $\mathbb{R}^{2}\times \{0\}$, see Fig. 6 of \cite{BNPSS}. Without loss of generality, we are assuming that $p_{3} \neq 0$; otherwise, we pick another component of $p$, and project over another coordinate plane. Then, see Eq. (6.4) of \cite{BNPSS}:
\begin{equation}\label{eq:muC}
|\mathcal{C}_{i}| = \mu(C_{i,p}) p_{3} + \frak{e}_{i,p}\;,
\end{equation}
where $\mu$ is the Lebesgue measure on the plane and the error term $\frak{e}_{i,p}$ is bounded proportionally to $|p| |\partial C_{i,p}|$, which is estimated by $C |p|^{2} N^{\frac{1}{3}} \sin \xi_{i}$, see discussion after Eq. (6.4) of \cite{BNPSS}. Finally, the main term in Eq. (\ref{eq:muC}) is, see Eq. (6.3) of \cite{BNPSS}:
\begin{equation}
\mu(C_{i,p}) = N^{\frac{2}{3}} \mu \frac{|p|}{p_{3}} \cos \xi_{i} (1 + O(\eps^{\alpha})) \sigma (c_{i})\;,
\end{equation}
where $c_{i}$ is the subset of the unit sphere $\mathbb{S}^{2}$, such that $N^{\frac{1}{3}}\sqrt{\mu} c_{i} = C_{i}$ and $\sigma(c_{i})$ is the surface area of $c_{i}$. Referring to the notations of Eq. (6.3) of \cite{BNPSS}, the factor $M^{-1/2}$ of \cite{BNPSS} is here replaced by $\eps^{1-\gamma + \alpha}$ (corresponding to the largest variation of the angles $\theta_{k}$ with respect to $\xi_{i}$), and the factor $N^{-\delta}$ of \cite{BNPSS} is here replaced by $\eps^{1-\gamma}$ (corresponding to the lower bound for $\cos \xi_{i}$).


Consider the first term in the right-hand side of Eq. (\ref{eq:muC}). The corresponding contribution to $I^{C}_{\mu}(p)$ is:
\begin{equation}\label{eq:bnpss}
\mu |p| \sum_{\frac{C}{|p|}\eps^{1-\gamma} \leq \cos \xi_{i}}\frac{\cos \xi_{i}}{\eps |p|^{2} + 2\sqrt{\mu} |p|\cos\xi_{i}} \sigma(c_{i})\;,
\end{equation}
which converges to a bounded integral as $\eps\to 0$:
\begin{eqnarray}\label{eq:Iint}
(\ref{eq:bnpss}) &=& 2\pi \mu \int_{0}^{\pi/2} d\xi\, \frac{\sin \xi \cos \xi}{\eps |p| + 2\sqrt{\mu} \cos\xi}(1 + o(1)) \nonumber\\
|(\ref{eq:bnpss})| &\leq& C\;.
\end{eqnarray}
To conclude, consider now the contribution to $I^{C}_{\mu}(p)$ due to the second term in the right-hand side of Eq. (\ref{eq:muC}). It is bounded as, using that $\sigma(c_{i}) = O(\eps^{1 - \gamma + \alpha} \sin \xi_{i})$:
\begin{eqnarray}
&&\frac{C|p|^{2} N^{\frac{1}{3}}}{N^{\frac{2}{3}}}\sum_{\frac{\widetilde C}{|p|}\eps^{1-\gamma} \leq \cos \xi_{i}}\frac{\sin \xi_{i}}{\eps |p|^{2} + 2\sqrt{\mu}|p| \cos\xi_{i}} \\
&&\qquad \leq K |p|^{2} N^{-\frac{1}{3} + \frac{1}{3}( 1 - \gamma + \alpha )} \sum_{\frac{\widetilde C}{|p|}\eps^{1-\gamma} \leq \cos \xi_{i}}\frac{1}{\eps |p|^{2} + 2\sqrt{\mu} |p|\cos\xi_{i}} \sigma(c_{i})\nonumber\\
&&\qquad \leq \widetilde{K} |p|^{2} N^{\frac{1}{3}(-\gamma + \alpha)} |\log \eps|\;,
\end{eqnarray}
where in the last step we bounded the sum with the corresponding integral. Thus, the second term in Eq. (\ref{eq:muC}) gives rise to a $o(1)$ contribution to $I^{C}_{\mu}(p)$, for $|p| \leq N^{\zeta}$ with $\zeta < \frac{1}{6}(\gamma - \alpha)$.

Finally, it is easy to see that the error terms due to $r_{k,p}$ and $r'_{k,p}$ in Eqs. (\ref{eq:rk}), (\ref{eq:r2k}) give rise to $o(1)$ contributions to $I_{\mu}(p)$. All in all, for $|p| \leq N^{\zeta}$:
\begin{equation}\label{eq:IC}
I^{C}_{\mu}(p) \leq C\;.
\end{equation}
\noindent\underline{Conclusion.} Putting together Eqs. (\ref{eq:IAbd}), (\ref{eq:IBest2}), (\ref{eq:IC}) we see that, for $|p| \leq N^{\zeta}$ with $\zeta = \frac{\delta}{18 - 6(\gamma + \delta)}$ and $\delta>0$ small enough, the main contribution to $I_{\mu}(p)$ is given by the integral in Eq. (\ref{eq:Iint}),
\begin{equation}
I_{\mu}(p) = 2\pi \mu \int_{0}^{\pi/2} d\xi\, \frac{\sin \xi \cos \xi}{\eps |p| + 2\sqrt{\mu} \cos\xi}(1 + o(1))\;.
\end{equation}
In particular, $I_{\mu}(p) \leq C$. This concludes the proof of Lemma \ref{lem:H0}.\qed

\section{Explicit computations}\label{app:A}

\subsection{Computation of $\mathcal{C}^{(2)}_{N}$ in the $N\to \infty$ limit}\label{app:corr}
In this appendix we shall compute $\mathcal{C}_{N}^{(2)}$ in the $N\to \infty$ limit, and we shall prove Eq. (\ref{eq:cormain}). Arguing as in the proof of Lemma \ref{lem:H0}, one gets, for $|p|\leq N^{\zeta}$ with $\zeta = \frac{\delta}{18 - 6(\gamma + \delta)}$ and $\delta > 0$ small enough:
\begin{eqnarray}
&&\frac{1}{2N^{2}} \sum_{\substack{ k,\, k'\in \mathcal{B}_{\mu} \\ k+p,\, k'-p \in \mathcal{B}_{\mu}^{c} }} \frac{1}{e(k+p) + e(k) + e(k'-p) + e(k')} \\
&&\simeq \frac{1}{4}\eps \mu^{2} |p| (2\pi)^{2} \int_{0}^{\frac{\pi}{2}} d\theta  \int_{0}^{\frac{\pi}{2}} d\theta'  \sin \theta \sin \theta'  \frac{\cos\theta \cos\theta'}{\eps |p| + \sqrt{\mu} (\cos\theta + \cos\theta')}\;,\nonumber
\end{eqnarray}
up to $o(1)$ as $N\to \infty$. Let us compute the integral. We rewrite it as:
\begin{equation}\label{eq:corrint}
\frac{1}{4}\eps \mu^{3/2} |p| (2\pi)^{2} \int_{0}^{1} dx \int_{0}^{1}dy\, \frac{x y}{a + x + y}\;,
\end{equation}
where $a = \eps|p| / \sqrt{\mu}$. Explicit evaluation of the integral gives:
\begin{equation}
(\ref{eq:corrint}) = \eps (2\pi)^{3} |p| \frac{\pi}{2} (1 - \log 2) + o(\eps)\;.
\end{equation}
Therefore,
\begin{eqnarray}
\mathcal{C}_{N}^{(2)} &=& -\frac{1}{2 N^{2}} \sum_{p}\sum_{\substack{k, k':\, k, k'\in \mathcal{B}_{\mu} \\ k + p, k'-p \in \mathcal{B}_{\mu}^{c}}} \frac{\hat{v}(p)^{2}}{e(k+p) + e(k) + e(k'-p) + e(k')} + o(\eps) \nonumber\\
&=& -\eps (2\pi)^{3} \frac{\pi}{2} (1 - \log 2) \sum_{p} |p| \hat v_{<}(p)^{2} + o(\eps)\nonumber\\
&=& -\eps (2\pi)^{3} \frac{\pi}{2} (1 - \log 2) \sum_{p} |p| \hat v(p)^{2} + o(\eps)\;,
\end{eqnarray}
where we used the assumption on the decay of $\hat v(p)$. This concludes the check of Eq. (\ref{eq:cormain}).

\subsection{Normalization of the trial state}\label{app:norm}

Here we shall prove that $|M_{\mathbb{F}} - 1|\leq C\|\hat{v}\|_{1}^{2}$, Eq. (\ref{eq:normbd}). We have:
\begin{eqnarray}
M_{\mathbb{F}}^{2} &=& 1 + \langle \Omega, \mathbb{F} \mathbb{H}_{0}^{-2} \mathbb{F}^{*} \Omega \rangle \\
&=& 1 + \frac{1}{4 N^{2}} \sum_{p, k, k'} \sum_{q, r, r'} \hat{v}(p) \hat{v}(q) \langle \Omega, c_{k} c_{k'} b_{k'-p} b_{k+p} b^{*}_{r+q} b^{*}_{r'-q} c^{*}_{r'}  c^{*}_{r}\nonumber\\
&&\cdot  \frac{1}{(\mathbb{H}_{0} + e(r+q) + e(r'-q) + e(r) + e(r'))^{2}}\Omega \rangle\nonumber\\
&=& 1+ \frac{1}{2 N^{2}} \sum_{p}\sum_{\substack{k, k':\, k, k'\in \mathcal{B}_{\mu} \\ k + p, k'-p \in \mathcal{B}_{\mu}^{c}}} \frac{\hat{v}(p)^{2}}{( e(k+p) + e(k) + e(k'-p) + e(k') )^{2}}\nonumber\\
&& -\frac{1}{2N^{2}} \sum_{p}\sum_{\substack{k, k':\, k, k'\in \mathcal{B}_{\mu} \\ k + p, k'-p \in \mathcal{B}_{\mu}^{c}}} \frac{\hat{v}(p) \hat{v}(p - k'+k)}{( e(k+p) + e(k) + e(k'-p) + e(k') )^{2}}\nonumber\\
&\equiv& 1 + J_{1} + J_{2}\;.
\end{eqnarray}
Therefore, for $N$ large enough:
\begin{eqnarray}
J_{1} &\leq& \frac{1}{2 N^{2}} \sum_{p} \hat{v}(p)^{2} \sum_{\substack{k, k':\, k, k'\in \mathcal{B}_{\mu} \\ k + p, k'-p \in \mathcal{B}_{\mu}^{c}}} \frac{1}{e(k+p) + e(k)} \frac{1}{e(k'-p) + e(k')}\nonumber\\
&\equiv& \frac{1}{2} \sum_{p} \hat{v}(p)^{2} I_{\mu}(p) I_{\mu}(-p) \leq C\|\hat v\|_{1}^{2} + \frak{e}_{K}\;,
\end{eqnarray}
with $C$ independent of $N$. In the last step we used the bound $I_{\mu}(p)\leq C$ for $|p|\leq N^{\zeta}$, proven in Appendix \ref{app:key}, and the quantity $\frak{e}_{K}$ takes into account the large $|p|$ dependence of $\hat v(p)$, recall Eqs. (\ref{eq:Qs}). Similarly,
\begin{eqnarray}
|J_{2}| &\leq& \frac{\|\hat{v}\|_{\infty}}{2N^{2}} \sum_{p} \hat{v}(p) \sum_{\substack{k, k':\, k, k'\in \mathcal{B}_{\mu} \\ k + p, k'-p \in \mathcal{B}_{\mu}^{c}}} \frac{1}{e(k+p) + e(k)} \frac{1}{e(k'-p) + e(k')}\nonumber\\
&\leq & \frac{\|\hat{v}\|_{\infty}}{2} \sum_{p} \hat{v}(p) I_{\mu}(p) I_{\mu}(-p) \leq C\|\hat v\|_{1}^{2} + \frak{e}_{K}\;.
\end{eqnarray}
This concludes the check of the boundedness of the normalization $M_{\mathbb{F}}$.
\section{Proof of Eq. (\ref{eq:interaction})}\label{app:interaction}
Here we report the details of the computation leading to Eq. (\ref{eq:interaction}). We have:
\begin{eqnarray}\label{eq:proofint}
&&R^{*}_{\omega_{N}} a^{*}_{k+p} a^{*}_{k'-p} a_{k'} a_{k} R_{\omega_{N}} = (b^{*}_{k+p} + c_{k+p}) (b^{*}_{k' - p} + c_{k'-p})(b_{k'} + c^{*}_{k'})(b_{k} + c^{*}_{k})\nonumber\\
&& = (b^{*}_{k+p}b^{*}_{k'-p} + b^{*}_{k+p} c_{k'-p} + c_{k+p} b^{*}_{k'-p} + c_{k+p} c_{k'-p} )\nonumber\\
&&\qquad\cdot (b_{k'} b_{k} + b_{k'} c^{*}_{k} + c^{*}_{k'} b_{k} + c^{*}_{k'} c^{*}_{k})\;.
\end{eqnarray}
We take the products and put the operators in normal order, using that the only nontrivial anticommutators are:
\begin{equation}
\{ b^{*}_{k}, b_{k'} \} = \delta_{k,k'} \chi(k \notin \mathcal{B}_{\mu})\;,\qquad \{ c^{*}_{k}, c_{k'} \} = \delta_{k,k'}\chi(k\in \mathcal{B}_{\mu})\;.
\end{equation}
We get, collecting together terms that are equal after performing $k\leftrightarrow k'$ and $p\to -p$ (recall that $\hat{v}(p) = \hat{v}(-p)$ by assumption):
\begin{equation}
(\ref{eq:proofint}) = \text{(quartic terms)} + \text{(quadratic terms)} + \text{(constant terms);}
\end{equation}
the quartic terms are:
\begin{eqnarray}\label{eq:quartic}
&&\text{(quartic terms)} = b^{*}_{k+p} b^{*}_{k'-p} b_{k'} b_{k} - 2 b^{*}_{k+p} b^{*}_{k'-p} c^{*}_{k} b_{k'} + b^{*}_{k+p} b^{*}_{k'-p} c^{*}_{k'} c^{*}_{k}\nonumber\\
&&\quad + 2 b^{*}_{k+p} c_{k' - p} b_{k'} b_{k} + 2 b^{*}_{k+p} c^{*}_{k} c_{k'-p} b_{k'} - 2 b^{*}_{k+p} c^{*}_{k'} c_{k'-p} b_{k}\nonumber\\
&&\quad + 2 b^{*}_{k+p} c^{*}_{k'} c^{*}_{k} c_{k'-p} + c_{k+p} c_{k'-p} b_{k'} b_{k} - 2 c^{*}_{k} c_{k+p} c_{k'-p} b_{k'}\nonumber\\
&&\quad + c^{*}_{k'} c^{*}_{k} c_{k+p} c_{k'-p}\;,\nonumber
\end{eqnarray}
the quadratic terms are:
\begin{eqnarray}
&&\text{(quadratic terms)} = 2 b^{*}_{k'-p} b_{k'}\delta_{k+p, k}\chi(k\in \mathcal{B}_{\mu}) - 2 b^{*}_{k+p} b_{k'}\delta_{k, k'-p}\chi(k\in \mathcal{B}_{\mu})\nonumber\\
&&\qquad\qquad - 2 c^{*}_{k} c_{k+p} \delta_{k'-p,k'} \chi(k'\in \mathcal{B}_{\mu}) + 2c^{*}_{k'} c_{k+p}\delta_{k'-p, k} \chi(k \in \mathcal{B}_{\mu})\;,
\end{eqnarray}
while the constant terms are:
\begin{eqnarray}
\text{(constant terms)} &=& \delta_{k'-p, k'}\delta_{k+p, k}\chi(k'\in \mathcal{B}_{\mu})\chi(k\in \mathcal{B}_{\mu})\nonumber\\&& - \delta_{k+p, k'} \delta_{k'-p, k}\chi(k'\in \mathcal{B}_{\mu})\chi(k\in \mathcal{B}_{\mu})\;.
\end{eqnarray}
In performing this computation we used that $b_{k}c_{k} = 0$, to cancel some quadratic terms produced after the normal ordering. Consider Eq. (\ref{eq:quartic}). We get, separating the particle-number conserving terms from the rest:
\begin{eqnarray}\label{eq:Qbb}
&&\frac{1}{2N} \sum_{k, k', p} \hat{v}(p)\times \text{(quartic terms)} \nonumber\\
&&= \frac{1}{2N} \sum_{k, k', p} \hat{v}(p)\big( b^{*}_{k+p} b^{*}_{k'-p} b_{k'} b_{k} + c^{*}_{k'} c^{*}_{k} c_{k+p} c_{k'-p} + 2 b^{*}_{k+p} c^{*}_{k} c_{k'-p} b_{k'}\nonumber\\&&\quad - 2 b^{*}_{k+p} c^{*}_{k'} c_{k'-p} b_{k}\big)\\
&& + \frac{1}{2N} \sum_{k,k',p} \hat{v}(p) \big( - 2 b^{*}_{k+p} b^{*}_{k'-p} c^{*}_{k} b_{k'} + b^{*}_{k+p} b^{*}_{k'-p} c^{*}_{k'} c^{*}_{k} + 2 b^{*}_{k+p} c^{*}_{k'} c^{*}_{k} c_{k'-p}\big)\nonumber\\
&& + \frac{1}{2N} \sum_{k,k', p} \hat{v}(p) \big( 2 b^{*}_{k+p} c_{k' - p} b_{k'} b_{k} + c_{k+p} c_{k'-p} b_{k'} b_{k}  - 2 c^{*}_{k} c_{k+p} c_{k'-p} b_{k'}\big)\;.\nonumber
\end{eqnarray}
Notice that the last two terms are one the adjoint of the other. Eq. (\ref{eq:Qbb}) reproduces $\mathbb{Q}$ in Eq. (\ref{eq:Q}). Consider now the quadratic terms. We get:
\begin{eqnarray}
&&\frac{1}{2N} \sum_{k, k', p} \hat{v}(p) \times \text{(quadratic terms)}\nonumber\\
&& = \frac{1}{N} \sum_{k, k' \in \mathcal{B}_{\mu}} \hat{v}(0) (b^{*}_{k} b_{k} - c^{*}_{k} c_{k}) - \frac{1}{N} \sum_{p} \hat{v}(p) \sum_{k\in \mathcal{B}_{\mu}} (b^{*}_{k+p} b_{k+p} - c^{*}_{k+p} c_{k+p})\nonumber\\
&& \equiv \mathbb{D} + \mathbb{X}\;,
\end{eqnarray}
with $\mathbb{X}$ defined in Eq. (\ref{eq:Q}), and $\mathbb{D}$ defined in Eq. (\ref{eq:Direct}) (recall that $\sum_{k \in \mathcal{B}_{\mu}} = N$). Finally, consider the constant terms. We get:
\begin{eqnarray}
&& \frac{1}{2N} \sum_{k,k',p} \hat{v}(p)\times \text{(constant terms)}\nonumber\\
&&\quad =  \frac{1}{2N} \sum_{k,k'\in \mathcal{B}_{\mu}} \hat{v}(0) - \frac{1}{2N} \sum_{k,k' \in \mathcal{B}_{\mu}} \hat{v}(k-k')\nonumber\\
&&\quad = \frac{N}{2} \hat{v}(0) - \frac{1}{2N} \sum_{k,k' \in \mathcal{B}_{\mu}} \hat{v}(k-k')\;.
\end{eqnarray}
These last two terms reproduce the Hartree-Fock interaction. This concludes the check of Eq. (\ref{eq:interaction}).


\begin{thebibliography}{999999}

\bibitem{Ba}
V.~{Bach}. {Error bound for the {H}artree-{F}ock energy of atoms and
  molecules}. \emph{Comm. Math. Phys.} \textbf{147}, 527-548 (1992).

\bibitem{Ba2} 
V.~{Bach}. {Accuracy of mean field approximations for atoms and molecules}. \emph{Comm. Math. Phys.} {\bf 155}, 295-310 (1993).

\bibitem{BLS}
V. Bach, E. H. Lieb and J. P. Solovej. Generalized Hartree-Fock theory and the Hubbard model. {\it J. Stat. Phys.} {\bf 76}, 3-89 (1994).

\bibitem{BLLS}
V. Bach, E. H. Lieb, M. Loss and J. P. Solovej. There are no unfilled shells in unrestricted Hartree-Fock theory. {\it Phys. Rev. Lett.} {\bf 72}, 2981 (1994)

\bibitem{BJPSS} N.~{Benedikter}, V.~Jaksic, M.~{Porta}, C. Saffirio and B.~{Schlein}. Mean-field evolution of fermionic mixed states. {\it Comm. Pure Appl. Math.} {\bf 69}, 2250-2303 (2016).

\bibitem{BNPSS} N. Benedikter, P. T. Nam, M. Porta, R. Seiringer and B. Schlein. Optimal Upper Bound for the Correlation Energy of a Fermi Gas in the Mean-Field Regime. {\it Comm. Math. Phys.} (2019), {\tt https://doi.org/10.1007/s00220-019-03505-5}

\bibitem{BPS} N.~{Benedikter}, M.~{Porta} and B.~{Schlein}. {Mean-field evolution of fermionic systems}. \emph{Comm. Math. Phys} \textbf{331}, 1087-1131 (2014).

\bibitem{BPS2} N.~{Benedikter}, M.~{Porta} and B.~{Schlein}. {Mean-field dynamics of fermions with relativistic dispersion}. \emph{J. Math. Phys.} \textbf{55}, 021901 (2014).

\bibitem{BP} D. Bohm and D. Pines. A collective description of electron interactions: III. Coulomb interactions in a degenerate electron gas. {\it Phys. Rev.} {\bf 92}, 609-625 (1953).

\bibitem{CH}  I. Catto and C.  Hainzl.  Self-energy of one electron in non-relativistic QED. {\it J. Funct. Anal.} {\bf 207}, 68-110 (2004).

\bibitem{Iwa} F. Chamizo and H. Iwaniec. On the sphere problem. {\it Rev. Mat. Iberoam.} {\bf 11}, 417-429 (1995).


\bibitem{FW} A. L. Fetter and J. D. Walecka. {\it Quantum Theory of Many-Particle Systems.} Dover, 2003.

\bibitem{GB} M. Gell-Mann and K. A. Brueckner. Correlation energy of an electron gas at high density. {\it Phys. Rev.} {\bf 106}, 364-372 (1957).

\bibitem{GV} G. Giuliani and G. Vignale. {\it Quantum Theory of the Electron Liquid.} Cambridge University Press, 2008.

\bibitem{GHL} D. Gontier, C. Hainzl and M. Lewin. Lower Bound on the Hartree-Fock Energy of the Electron Gas. {\it Phys. Rev. A} {\bf 99}, 052501 (2019).

\bibitem{GS} G. M.~Graf and J. P.~Solovej. A correlation estimate with applications to quantum systems with Coulomb interactions. {\it Rev. Math. Phys.} {\bf 6}, 977-997 (1994).

\bibitem{youngchris} C. Hainzl. One non-relativistic particle coupled to a photon field. {\it Ann. H. Poincar\'e} {\bf 4}, 217-237 (2003).

\bibitem{youngchris2} C. Hainzl, M. Hirokawa and H. Spohn. Binding energy for hydrogen-like atoms in the Nelson model. {\it J. Funct. Analysis} {\bf 220}, 424-459 (2005).

\bibitem{HSei} C. Hainzl and R. Seiringer. Mass renormalization and energy level shift in non-relativistic QED. {\it Adv. Theor. Math. Phys.} {\bf 6}, 847-871 (2002).

\bibitem{HVV} C. Hainzl, V. Vougalter and S. Vugalter. Enhanced binding in non-relativistic QED. {\it Comm. Math. Phys.} {\bf 233}, 13-26 (2003).

\bibitem{HB} D. R. Heath-Brown. Lattice points in the sphere. {\it Number Theory Progress} {\bf 2}, 883-892 (1999).

\bibitem{Hux} M. N. Huxley. Exponential sums and lattice points. {\it III. Proc. London Math. Soc.}, {\bf 87}, 591-609 (2003).

\bibitem{L} E. H. Lieb. Variational principle for many-fermion systems. {\it Phys. Rev. Lett.} {\bf 46}, 457-459 (1981).

\bibitem{LS} E. H. Lieb and R. Seiringer. {\it The Stability of Matter in Quantum Mechanics.} Cambridge University Press, 2009.


\bibitem{Macke} W. Macke. \"Uber die Wechselwirkungen im Fermi-Gas. {\it Z. Naturforschg.} {\bf 5a}, 192-208 (1950).

\bibitem{PRSS} M. Porta, S. Rademacher, C. Saffirio and B. Schlein. Mean field evolution of fermions with Coulomb interaction. {\it J. Stat. Phys.} {\bf 166}, 1345-1364 (2017).

\bibitem{Sa} K. Sawada. Correlation energy of an electron gas at high density. {\it Phys. Rev.} {\bf 106}, 372-383 (1957).

\bibitem{So} J. P. Solovej. Many Body Quantum Mechanics. Lecture Notes, 2007. 

\end{thebibliography}
\end{document}